\newif\if@fewtab\@fewtabtrue
\xdef\hourmin{\number\count255}
\xdef\hourmin{\hourmin:\ifnum\count255<10 0\fi\the\count255}}
\def\ps@draft{\let\@mkboth\@gobbletwo
    \def\@oddfoot{\hbox to 7 cm{\tiny \versionno
       \hfil}\hskip -7cm\hfil\rm\thepage \hfil {\tiny\draftdate}}
    \def\@oddhead{}
    \def\@evenhead{}\let\@evenfoot\@oddfoot}
\def\draftdate{\number\month/\number\day/\number\year\ \ \ \hourmin }
\def\citen#1{\if@filesw \immediate\write \@auxout {\string\citation{#1}}\fi%
\@tempcntb\m@ne \let\@h@ld\relax \def\@citea{}%
\@for \@citeb:=#1\do {\@ifundefined {b@\@citeb}%
    {\@h@ld\@citea\@tempcntb\m@ne{\bf ?}%
    \@warning {Citation `\@citeb ' on page \thepage \space undefined}}%
    {\@tempcnta\@tempcntb \advance\@tempcnta\@ne
    \setbox\z@\hbox\bgroup\ifcat0\csname b@\@citeb \endcsname \relax
    \egroup \@tempcntb\number\csname b@\@citeb \endcsname \relax
    \else \egroup \@tempcntb\m@ne \fi \ifnum\@tempcnta=\@tempcntb
    \ifx\@h@ld\relax \edef \@h@ld{\@citea\csname b@\@citeb\endcsname}%
    \else \edef\@h@ld{\hbox{--}\penalty\@highpenalty
    \csname b@\@citeb\endcsname}\fi
    \else \@h@ld\@citea\csname b@\@citeb \endcsname \let\@h@ld\relax \fi}%
\def\@citea{,\penalty\@highpenalty\hskip.13em plus.13em minus.13em}}\@h@ld}
\def\@citex[#1]#2{\@cite{\citen{#2}}{#1}}%
\def\@cite#1#2{\leavevmode\unskip\ifnum\lastpenalty=\z@\penalty\@highpenalty\fi%
  \ [{\multiply\@highpenalty 3 #1%
  \if@tempswa,\penalty\@highpenalty\ #2\fi}]}   %
\def\aa            {{\ensuremath{\mathcal A_{\mathcal A}}}}
\def\be            {\begin{equation}}
\def\bearl         {\begin{array}{l}}
\def\bearll        {\begin{array}{ll}}
\def\Bimod         {\mbox{bimod}}
\def\BIMOD         {\mbox{$\mathscr B${\sl imod}}}
\def\boti          {\,{\boxtimes}\,}
\def\C             {{\ensuremath{\mathcal C}}}
\def\cala          {{\ensuremath{\mathcal A}}}
\def\calb          {{\ensuremath{\mathcal B}}}
\def\calc          {{\ensuremath{\mathcal C}}}
\def\cald          {{\ensuremath{\mathcal D}}}
\def\calm          {{\ensuremath{\mathcal M}}}
\def\calw          {{\ensuremath{\mathcal W}}}
\def\calx          {{\ensuremath{\mathcal X}}}
\def\cc            {{\ensuremath{\mathcal C_{\mathcal C}}}}
\def\Ce            {{\ensuremath{\mathcal C^\text e}}}
\def\cir           {\,{\circ}\,}
\def\complex       {{\ensuremath{\mathbb C}}}
\def\complexx      {{\ensuremath{\mathbb C}^\times}}
\def\D             {{\ensuremath{\mathcal D}}}
\def\dim           {\mathrm{dim}}
\def\dimk          {\mathrm{dim}_\ko}
\def\ee            {\end{equation}}
\def\eear          {\end{array}}
\def\End           {{\ensuremath{\mathrm{End}}}}
\def\END           {{\ensuremath{\mathcal E\!\mbox{\sl nd}}}}
\def\ENDA          {{\ensuremath{\mathcal E\!\mbox{\sl nd}_\cala}}}
\def\ENDC          {{\ensuremath{\mathcal E\!\mbox{\sl nd}_\C}}}
\def\eps           {\varepsilon}
\def\eq            {\,{=}\,}
\def\Fcala         {F_{\!\cala}}
\newcommand\Ffrom[1] {{\ensuremath{F_{\! #1\leftarrow}}}}
\def\fpdim         {\mathrm{FPdim}}
\newcommand\Fto[1] {{\ensuremath{F_{\!\to #1}}}}
\def\Hom           {{\ensuremath{\mathrm{Hom}}}}
\def\HOM           {{\ensuremath{\mathcal H\!\mbox{\sl om}}}}
\def\HOMC          {{\ensuremath{\mathcal H\!\mbox{\sl om}_\C}}}
\def\Homk          {{\ensuremath{\mathrm{Hom}_\ko}}}
\newcommand\hsp[1] {\mbox{\hspace{#1 em}}}
\def\id            {\mbox{\sl id}}
\def\Id            {\mbox{\sl Id}}
\def\idsm          {\mbox{\footnotesize\sl id}}
\def\ii            {{\rm i}}
\def\Im            {\mathrm{Im}}
\def\iN            {\,{\in}\,}
\def\Ind           {{\ensuremath{\mathrm{Ind}}}}
\def\Irr           {{\mathrm{Irr}}}
\def\ko            {{\ensuremath{\Bbbk}}}
\newcommand\labl[1]{\label{#1}\ee}
\def\Mod           {\mbox{mod}}
\def\MOD           {\mbox{$\mathscr M${\sl od}}}
\def\nxt           {\raisebox{.08em}{\rule{.41em}{.41em}}~\,}
\def\one           {{\bf1}}
\def\oti           {\,{\otimes}\,}
\def\otia          {\,{\otimes_{\!\cala}}\,}
\def\otiae         {\,{\otimes_{\!\cala_1^{}}}\,}
\def\otiaz         {\,{\otimes_{\!\cala_2^{}}}\,}
\def\otie          {\,{\otimes_1}\,}
\def\otik          {\,{\otimes_\ko}\,}
\def\otiz          {\,{\otimes_2}\,}
\def\qquand        {\qquad{\rm and}\qquad}
\def\rev           {{\mathrm{rev}}}
\def\tc            {{T_{\mathcal C}}}
\newcommand\tFfrom[1] {{\ensuremath{\widetilde F_{\! #1\leftarrow}}}}
\newcommand\tFto[1] {{\ensuremath{\widetilde F_{\!\to #1}}}}
\newcommand\tFtofr[1] {{\ensuremath{\widetilde F_{\!\to #1\leftarrow}}}}
\def\threedim      {three-di\-men\-si\-o\-nal}
\def\Times         {\,{\times}\,}
\def\To            {\,{\to}\,}
\def\twodim        {two-di\-men\-si\-o\-nal}
\def\Vectc         {\ensuremath{\mathcal V\mbox{\sl ect}_\complex}}
\def\Vectk         {\ensuremath{\mathcal V\mbox{\sl ect}_\ko}}
 \newcommand\void[1]{}
\def\Wilsa         {{\ensuremath{\mathcal W_a}}}
\def\Z             {{\ensuremath{\mathcal Z}}}
\def\zet           {{\ensuremath{\mathbb Z}}}
\newtheorem{thm}{Theorem}
\newtheorem{lem}[thm]{Lemma}
\newtheorem{lemma}[thm]{Lemma}
\newtheorem{prop}[thm]{Proposition}
\theoremstyle{definition}
\newtheorem{defi}[thm]{Definition}
\newtheorem{Definition}[thm]{Definition}
\newtheorem{rem}[thm]{Remark}
\begin{document}

\def\cir{\,{\circ}\,} 
\numberwithin{equation}{section}
\numberwithin{thm}{section}
                                     
\begin{flushright}
   {\sf ZMP-HH/12-5}\\
   {\sf Hamburger$\;$Beitr\"age$\;$zur$\;$Mathematik$\;$Nr.$\;$433}\\[2mm]
   March 2012
\end{flushright}
\vskip 3.5em
                                   
\begin{center}
\begin{tabular}c \Large\bf 
   Bicategories for boundary conditions \\[2mm] \Large\bf
   and for surface defects in 3-d TFT 
\end{tabular}\vskip 2.5em
                                     
  ~J\"urgen Fuchs\,$^{\,a}$,~
  ~Christoph Schweigert\,$^{\,b}$,~
  ~Alessandro Valentino\,$^{\,b}$

\vskip 9mm

  \it$^a$
  Teoretisk fysik, \ Karlstads Universitet\\
  Universitetsgatan 21, \ S\,--\,651\,88\, Karlstad \\[4pt]
  \it$^b$
  Fachbereich Mathematik, \ Universit\"at Hamburg\\
  Bereich Algebra und Zahlentheorie\\
  Bundesstra\ss e 55, \ D\,--\,20\,146\, Hamburg
                   
\end{center}
                     
\vskip 5.3em

\noindent{\sc Abstract}\\[3pt]
We analyze topological boundary conditions and topological surface 
defects in three-dimensional topological field theories of 
Reshetikhin-Turaev type based on arbitrary modular tensor categories. 
Boundary conditions are described by central functors that lift to 
trivializations in the Witt group of modular tensor categories. 
The bicategory of boundary conditions can be described through
the bicategory of module categories over any such trivialization. 
A similar description is obtained for topological surface defects. 
Using string diagrams for bicategories we also establish a precise 
relation between special symmetric Frobenius algebras and Wilson lines 
involving special defects. 
We compare our results with previous work of Kapustin-Saulina and
of Kitaev-Kong on boundary conditions and surface defects in abelian
Chern-Simons theories and in Turaev-Viro type TFTs, respectively.

\newpage

\section{Introduction}\label{intro}

An insight gained in recent years in the study of quantum field theories
is that interesting effects are captured when allowing for codimension-one
defects, i.e.\ interfaces between regions on which two different theories 
are living. Depending on the application, it is sensible to impose specific 
kinds of conditions on such interfaces; for instance, in integrable field 
theories, integrable defects, as considered e.g.\ in \cite{demS2,bocz},
are naturally of interest. In two-dimensional rational conformal field 
theories, the study of totally transmissive defect lines (see e.g.\ 
\cite{watt8,pezu5,bddo,ffrs5}) has produced structural
information about non-chiral symmetries and Kramers-Wannier-type dualities.
It has also become apparent that boundaries and defects are close relatives.

In this paper we concentrate on topological quantum field theories in three 
dimensions (TFTs), specifically on theories that include (compact) Chern-Simons 
theories. While for the latter subclass a Lagrangian formulation is
available, in the general case considered here we work within the combinatorial
approach of Reshetikhin and Turaev \cite{retu2} type, which associates a TFT
to any semisimple modular tensor category.
This includes TFTs of Turaev-Viro type as well.

In the \threedim\ situation the simplest codimension-one structures
are surfaces that constitute either a defect surface or a \twodim\ boundary. 
It is worth stressing that here the term boundary refers, as in  \cite{kaSau2},
to a brim at which `the \threedim\ world ends'. Such brim boundaries must
in particular not be confused with the 
cut-and-paste boundaries that commonly occur (see e.g.\ \cite{TUra}) in these 
theories. Boundaries of the latter kind arise when a three-manifold is cut
into more elementary three-manifolds with boundaries; accordingly, their 
function is to account for locality and to allow for sewing, or cut-and-paste, 
procedures. Both classes of boundaries are geometric boundaries of 
three-manifolds, but cut-and-paste boundaries come with additional local 
(chiral) degrees of freedom and can support vector spaces of conformal blocks. 
In contrast, brim boundaries need not involve any of those structures.
Note that the distinction between two different kinds of boundaries is not 
specific to three dimensions. In two dimensions, in the discussion of 
so-called open-closed theories, such a distinction is standard; see e.g.\ 
\cite[Sect.\ 3]{moor10}, where intervals corresponding to in- and out-going 
open strings are distinguished from ``free boundaries'' corresponding to 
the ends of an open string ``moving along a D-brane.''

Among the Reshetikhin-Turaev type theories there are in particular TFTs 
constructed from lattices, which have e.g.\ been prominent (see, for
instance, \cite{fcgkkmst} for a detailed discussion) in the discussion of 
universality classes of quantum Hall systems. Thus in this particular case, 
our results may have applications to topological interfaces 
with gapped excitations between two quantum Hall fluids.
Our discussion applies, however, to arbitrary semisimple modular tensor 
categories and does not rely on any specific aspects of lattice models.

There is no guarantee that for a given quantum field theory a consistent 
defect or boundary 
condition exists at all. In particular there can be theories that make 
perfect sense in the bulk, but cannot be consistently extended to the 
boundary. On the other hand, if consistent codimension-one defects,
or boundary conditions, do exist, they will typically not be unique. It 
is then natural to study interfaces between such lower-dimensional regions
as well, i.e.\ interfaces of codimension two. In our case of three-dimensional
topological field theories, these are generalized Wilson lines. (In other 
words, the brim boundaries we consider can contain such Wilson lines.
In contrast, this is not possible for cut-and-paste boundaries. On the
other hand, bulk Wilson lines can end on either kind of boundary -- in the
case of cut-and-paste boundaries, they end on marked points.)
Again, such generalized Wilson lines need not exist, but again, if they
do exist, then they need not be unique, so that the game can be repeated
one step further.

Hereby we arrive at a four-layered structure: At the top level, we associate
a topological field theory of Reshetikhin-Turaev
type to each three-dimensional part of a stratified three-dimensional manifold.
For \twodim\ parts we deal with physical boundaries or with \twodim\ defects,
for which we must choose a boundary condition, respectively, in the same spirit, 
an additional datum that describes the type, or `color' of the defect. Such a 
datum has been called a \emph{surface operator} in \cite{kaSau2}; we
prefer the term \emph{surface defect} instead. The third layer of structure
consists of one-dimensional structures labeled by generalized Wilson lines 
that separate boundaries or surface defects. And finally, generalized Wilson 
lines can fuse and split at point-like defects, which may be interpreted as 
local field insertions and constitute the fourth layer of structure.

\medskip

The basic questions we are addressing in this paper can thus be posed as follows:
  \def\leftmargini{1.7em}~\\[-1.55em]\begin{itemize}\addtolength{\itemsep}{-7pt}
  \item[(1)]
\emph{Given a three-dimensional region with non-empty boundary for which the 
TFT of Reshe\-tik\-hin-Turaev type in the interior is labeled by 
a modular tensor category \C, what are the data describing the 
types of \underline{\em topological boundary conditions} on the boundary}?
  \item[(2)]
\emph{Given two three-dimensional regions separated by a two-dimensional
interface, for which the TFTs of Reshetikhin-Turaev type
in the two regions are labeled by modular tensor categories 
$\C_1$ and $\C_2$, respectively, what are the data describing the 
types of \underline{\em topological surface defects} on the interface}?
\end{itemize}

\smallskip

The key in our analysis of these issues is the following process:
a Wilson line in the three-dimensional bulk can be moved ``adiabatically'' 
into the boundary or into a defect surface. This has already 
been studied in \cite[Sect.\,5.2]{kaSau2}, and a similar process in two 
dimensions has been considered in \cite[Sect.\,4.1]{dakr2}.
A careful analysis of this process 
allows us to give a complete answer to both questions, 
including in particular a criterion for the existence of non-trivial 
solutions. The analysis yields in particular a model-independent 
generalization of results that have been obtained in \cite{kaSau2}
for abelian Chern-Simons theories using a Lagrangian description.

Our considerations involve mathematical ingredients that, to the best 
of our knowledge, have not been applied to Reshetikhin-Turaev type
TFTs before. Many of them come from higher category theory, 
like aspects of fusion categories \cite{etno,enoM}
and of braided fusion categories \cite{dgno2}, and specifically 
the notions \cite{dmno} of central functors and of the Witt group of 
non-degenerate fusion categories. This group naturally generalizes the 
classical Witt group of lattices; it has been originally devised as a 
tool in the classification of modular tensor categories.  Since some 
familiarity with such concepts is required for appreciating our analysis,
we collect the pertinent mathematical background in Section \ref{sec:mb}.

\medskip

Our results can be summarized as follows.
\def\leftmargini{2.21em}~\\[-1.55em]\begin{itemize}\addtolength{\itemsep}{-7pt}
\item[(1a)]
For a boundary adjacent to a three-dimensional region that is labeled
by a modular tensor category \C, and thus with bulk Wilson lines given by 
\C\ as well, the central information about a topological boundary 
condition $a$ is contained in the process of moving Wilson lines to the 
boundary. It is mathematically described by a central functor
$\Fto a\colon \C\To \calw_a$, with $\calw_a$ the fusion 
category of Wilson lines in the boundary with boundary condition $a$.

\item[(1b)]
A careful distinction between the three-dimensional physics in the 
bulk and the two-di\-men\-sional physics in the boundary allows one
to argue that the functor $\Fto a$ lifts to a (braided) equivalence 
$\tFto a\colon \C\,{\stackrel\simeq\to}\,\Z(\calw_a)$
between the category \C\ of bulk Wilson lines and the
Drinfeld center of the category $\calw_a$ of boundary Wilson lines.

\item[(1c)]
This equivalence implies that a topological boundary condition exists 
for a TFT labeled by the modular tensor category \C\ if and only if the 
class of \C\ in the Witt group of modular tensor categories is trivial.
Put differently, topological boundary conditions exist if and only if the 
modular tensor category \C\ is the Drinfeld center of a fusion category.

\item[(1d)]
For fixed \C, the three-layered structure carried by the boundary 
conditions and their higher-codimension substructures is a bicategory. 
It naturally encodes e.g.\ the fusion of (generalized) Wilson lines.
 \\
This bicategory can be constructed from any single boundary condition 
described by a central functor $\Fto a\colon \C\To \calw_a$ as the 
bicategory of module categories over the fusion category $\calw_a$.
The 1-morphisms of this bicategory -- i.e.\ module functors -- 
describe the possible Wilson lines (one-dimensional defects) on the 
boundary, including their fusion.
The 2-morphisms describe the possible junctions of Wilson lines.

\item[(2)]
A similar analysis can be performed for surface defects separating TFTs
that are labeled by modular tensor categories $\C_1$ and $\C_2$. There 
are now two different processes of moving bulk Wilson lines from either 
$\C_1$ or $\C_2$ into the defect surface with a fusion category
$\calw_d$ of defect Wilson lines. They yield two central functors, 
which can be combined into a braided equivalence 
$\C_1^{} \boti\, \C_2^\rev {\stackrel\simeq\to}\,\Z(\calw_d)$.
Here $\C_2^\rev$ is the modular category with reversed braiding
as compared to $\C_2$, and $\boxtimes$ is the Deligne tensor product. 
Again this equivalence fully captures a surface defect. 
 \\
Thus topological
surface defects exist if and only if $\C_1$ and $\C_2$ are in 
the same Witt class. Again, once one defect is described by an 
equivalence $\C_1^{} \boti \C_2^\rev\,{\stackrel\simeq\to}\,\Z(\calw_d)$
of braided fusion categories, the bicategory of all topological 
surface defects separating $\C_1$ and $\C_2$
is given by the bicategory of $\calw_d$-modules.
\end{itemize}

The description of boundary conditions and surface defects in terms of 
module categories that is achieved in this paper allows for a rigorous 
treatment of related issues. For instance, we can show that all module 
functors appearing in our theory admit ambidextrous adjunctions, which
brings the technology of string diagrams for bicategories to our disposal. 
This way we can e.g.\ provide mathematical foundations for the constructions 
in \cite{kaSau3}; in particular we prove:
\def\leftmargini{2.21em}~\\[-1.55em]\begin{itemize}
\item[(3)]
To every (special) topological surface defect $S$ separating a TFT labeled 
by the modular tensor category \C\ from itself, string diagrams provide,
for any Wilson line separating $S$ and the transparent surface defect,
an explicit construction of a special symmetric Frobenius algebra in \C.
Different Wilson lines give Morita equivalent algebras; we realize the 
Morita context explicitly in terms of string diagrams.
\end{itemize}

\noindent
Before proceeding to the main body of the text, a few further remarks seem
to be in order:
  \def\leftmargini{1.57em}\\[-1.45em]\begin{itemize}\addtolength{\itemsep}{-7pt}
  \item[\nxt]
A TFT of Reshetikhin-Turaev type based on a Drinfeld center of a fusion 
category $\cala$ is, by the results of \cite{balKi,tuVi}, equivalent to 
a TFT of Turaev-Viro type based on $\cala$. Topological boundary conditions
for TFTs of Reshetikhin-Turaev type thus only exist
if the TFT admits a Turaev-Viro type description. 

  \item[\nxt]
Not surprisingly, the description of boundary conditions and defects in 
three-dimensional theories is one step higher in the categorical ladder 
than for \twodim\ theories, e.g.\ two-dimensional CFTs, for which boundary 
conditions and defect lines
form categories of modules and of bimodules, respectively.  
 \\
In fact one expects a relation of boundary conditions for the TFT based 
on the modular tensor category \C\ and \C-module categories. And indeed, 
as we will explain in Sections \ref{sec:bc} and \ref{sec:sd}, respectively, 
the existence of a consistent fusion of bulk and boundary Wilson lines 
requires such a relation. However, \emph{not} every \C-module category 
describes a topological boundary condition. Rather, the structure we present 
involves more stringent requirements that are fulfilled only by a subclass 
of \C-module categories. Analogous comments apply to topological surface defects.

  \item[\nxt]
We describe surface defects and boundary conditions as specific objects of
a bicategory, not just as isomorphism classes thereof. This opens up the 
perspective to obtain a vast extension of the entire Reshetikhin-Turaev 
construction to manifolds with substructures of arbitrary codimension.
Here we will not delve into this issue further, but just mention that
a first inspection indeed 
indicates that one can associate the appropriate vector spaces of 
conformal blocks to cut-and-paste boundaries of such extended manifolds.
Any such construction should respect the known relations between
topological field theories of Turaev-Viro and of Reshetikhin-Turaev
type and therefore be compatible with the kind of construction that is
sketched in \cite{kiKon}.

  \item[\nxt]
We obtain our results separately for boundary conditions and for surface 
defects. A comparison of the results shows
that the two situations are related by a `folding' procedure. We thus 
find a three-dimensional realization of the `folding trick', which 
in \twodim\ conformal field theory is often invoked as a heuristic tool.

\item[\nxt]
We finally comment on surface defects separating \C\ from itself. For
the Deligne product $\C\boti\C^\rev$ of any modular tensor category \C\ 
there exists canonically a braided equivalence to the center of a
fusion category, namely to the center of \C\ itself,
$\C\boti\C^\rev\,{\simeq}\,\Z(\C)$.
Thus there exist topological surface defects separating the
TFT labeled by \C\ from itself. Among them there is in particular the
\emph{transparent}, or invisible, surface defect whose presence is
equivalent to having no interface at all. It corresponds to \C\ seen 
as a module category over itself. The generalized Wilson lines
on the transparent surface defect are just the ordinary Wilson lines.

\end{itemize}

The rest of this paper is organized as follows.
We start by providing some mathematical background information in Section 
\ref{sec:mb}; the reader already familiar with the relevant aspects
of monoidal categories can safely skip this part. Afterwards 
we present details of our proposal for boundary conditions
(Section \ref{sec:bc}) and surface defects (Section \ref{sec:sd}). In 
section \ref{sec:abCS} we then use the relation between module categories 
and Lagrangian algebras to show that, in the specific case of abelian 
Chern-Simons theories, our analysis gives the same results as the 
Lagrangian analysis of \cite{kaSau2}. We conclude in Section 
\ref{sec:Frob} with a model-independent study that extends the results 
of \cite{kaSau3} about the relation between Frobenius algebras in a 
modular tensor category $\C$ and generalized Wilson lines separating 
the transparent surface defect for \C\ from an arbitrary surface defect.


\section{Mathematical preliminaries}\label{sec:mb}

We start by summarizing some pertinent mathematical background. By 
$(\calc,\otimes_\C,\one,a_\C, l_\C,r_\C)$ we denote a monoidal category with 
tensor product $\otimes_\C$, tensor unit $\one$, associativity constraint 
$a_\C$, and left and right unit constraints $l_\C,r_\C$ that obey the 
pentagon and triangle constraints. In our discussion we will, however, 
usually suppress the associativity and unit constraints altogether, as is 
justified by the coherence theorem. We work over a fixed ground
field \ko\ that is algebraically closed and has characteristic zero.
For definiteness we take $\ko$ to be the field $\complex$ of complex numbers,
which is the case relevant for typical applications.
All categories are required to be $\ko$-linear and abelian. 

As we are interested in generalizations of the Reshetikhin-Turaev construction, 
all categories will be finitely semisimple, i.e.\ all objects are
projective, the number of isomorphism classes of simple objects is finite,
and the tensor unit is simple. If such a category is also rigid monoidal
and has finite-dimensional morphism spaces, 
it is called a \emph{fusion category}. With some further structure, such 
categories encode Moore-Seiberg data of chiral conformal field theories. 
(Examples can be constructed from even lattices, see Section \ref{sec:abCS}.) 
We are particularly interested in \emph{braided} categories, i.e.\ 
monoidal categories \C\ endowed with a natural isomorphism from \C\ to
\C\ with the opposite tensor product (i.e.\ with a commutativity constraint)
satisfying the hexagon axioms.

Objects $U,V$ of a braided fusion category are said to \emph{centralize}
each other iff the monodromy $c_{U,V}\cir c_{V,U}$ is the identity morphism. 
For \cald\ a fusion subcategory of a braided fusion category \C, the
\emph{centralizer} $\cald'$ of \cald\ is the full subcategory of objects
of \C\ that centralize every object of \cald.
A braided fusion category is called \emph{non-degenerate} iff
$\C' \,{\simeq}\, \Vectk$ \cite[Def.\,2.1]{dmno}; a 
braided fusion category is called \emph{premodular} iff it 
is equipped with a twist (or, equivalently, with a spherical structure). A 
premodular category is \emph{modular}, i.e.\ its braiding is maximally 
non-symmetric, iff it is non-degenerate \cite[Prop.\,3.7]{dgno2}.


\subsection{Module categories}

Categorification of the standard notion of module over a ring yields the 
notion of a module category over a monoidal category. Similarly, the notion 
of a bimodule category is the categorification of the notion of a bimodule.

\begin{Definition} ~\\[1pt]
(i)\, A (left) \emph{module category} over a monoidal category
$(\cala,\otimes_{\!\cala},\one,a_\cala,l_\cala,r_{\!\cala})$ or, in short, an 
\emph{\cala-module}, is a quadruple $(\calm,\otimes,a,l)$, where \calm\ is a
\ko-linear abelian category and $\otimes\colon \cala\,{\times}\,\calm \To \calm$ 
is an exact bifunctor, while $a \eq (a_{U,V,M})_{U,V\in\cala,M\in\calm}$ 
and $l \eq (l_M)_{M\in\calm}$ are natural families of 
isomorphisms $a_{U,V,M}\colon (U\otia V) \oti M \To U\oti(V\oti M)$
and $l_M\colon \one\oti M \To M$ that satisfy pentagon and triangle 
axioms analogous to those valid for a monoidal category.\,%
 \footnote{~For a complete statement of the axioms see e.g.\
 \cite[Sect.\,2.3]{ostr}.}
\\[2pt]
(ii)\, In the same spirit, 
for $(\cala_1,\otiae,\one_1,a_{\cala_1},l_{\cala_1},r_{\!\cala_1})$
and $(\cala_2,\otiaz,\one_2,a_{\cala_2},l_{\cala_2},r_{\!\cala_2})$ monoidal 
categories, a $\cala_1$-$\cala_2$-\emph{bimodule category}, or
$\cala_1$-$\cala_2$-\emph{bimodule}, is a tuple 
$(\calx,\otimes_1,a_1,l_1,\otimes_2,a_2,r_2,b)$, where \calx\ is a 
\ko-linear abelian category,
  \be
  \otimes_1\colon~~ \cala_1\,{\times}\,\calx \To \calx
  ~\quad\text{ and }\quad~ 
  \otimes_2\colon~~ \calx \,{\times}\,\cala_2 \To \calx
  \ee
are bifunctors, while
$a_1 \eq (a_{1;U,V,X})_{U,V\in\cala_1,X\in\calx}$, 
$l_1 \eq (l_{1;X})_{X\in\calx}$ and
$a_2 \eq (a_{2;X,U,V})_{U,V\in\cala_2,X\in\calx}$,
     \linebreak
$l_2 \eq (l_{2;X})_{X\in\calx}$
as well as $(b_{U;X;V})_{U\in\cala_1,V\in\cala_2X\in\calx}$
are natural families of isomorphisms 
$a_{1;U,V,X}\colon 
    $\linebreak$
(U\otiae\! V) \otie X \To U\otie(V\otie X)$,
$l_{1;X}\colon \one_1 \otie X \To X$,
$a_{2;X,U,V}\colon X \otiz (U\otiaz \!V) \To (X \otiz U) \otiz 
    $\linebreak$
V)$,
$l_{2;X}\colon X \otiz \one_2 \To X$ and
$b_{U;X;V}\colon (U\otie X) \otiz V \To  U\otie (X \otiz V)$
that satisfy pentagon and triangle 
axioms similar to those valid for a monoidal category.\,%
 \footnote{~A complete statement of the axioms can e.g.\ be found in
 \cite[Def.\,2.10\,\&\,Prop.\,2.12]{greeno2}.}
\end{Definition}

\begin{rem} ~\\[1pt]
(i)\, Very much like a ring is a left module over itself, any monoidal 
category \cala\ is naturally a module category over itself;
we denote this `regular' \cala-module by \aa.
Also, via $F\boti A \,{:=}\, F(A)$ every category \cala\ is a module
over the monoidal category $\END(\cala)$ of endofunctors of \cala.
\\[3pt]
(ii)\, Module categories over \cala\ can be described in terms
of algebras in \cala, i.e.\ objects $A$ of \cala\ together
with a multiplication morphism $m\colon A\oti A\To A$ and a unit
morphism $\eta\colon \one\To A$ that obey associativity and unit axioms.
As usual one introduces a category
\Mod-$A$ of right $A$-modules in \cala. One easily verifies
that the functor $(U,M) \,{\mapsto}\, U\oti M$ endows the 
category \Mod-$A$ with the structure of a module category over \cala\
\cite[Sect.\,3.1]{ostr}. Conversely, given a module category, algebras 
can be constructed in terms of internal Homs.
\\[3pt]
(iii)\, Algebras that are not isomorphic can yield equivalent
module categories. In fact, there is a Morita theory generalizing the 
classical Morita theory of algebras over commutative rings.
\\[3pt]
(iv)\, An \cala-module \calm\ is the same as a monoidal functor from 
\cala\ to the monoidal category $\END(\calm)$ of endofunctors of \calm\
\cite[Prop.\,2.2]{ostr}.
\\[3pt]
(v)\, We recall that for our purposes we assume all categories to be abelian 
categories enriched over the category of finite-dimensional complex vector 
spaces and to be finitely semisimple.
\end{rem}

Along with module categories there come 
corresponding notions of functors and natural transformations.

\begin{Definition} ~\\[1pt]
(i)\, A (strong) \emph{module functor} between two \cala-modules $\calm$ and 
$\calm'$ is an additive functor $F\colon \calm \To \calm'$ together with 
a natural family $b \eq (b_{U,M})_{U\in\cala,M\in\calm}$ of isomorphisms 
 \\
$b_{U,M}\colon F(U\oti M) \To U \oti F(M)$ that satisfy pentagon and 
triangle axioms analogous to those valid for a monoidal functor.
\\[2pt]
(ii)\, A \emph{natural transformation} between two module 
functors is a natural transformation of $\ko$-li\-near additive functors
compatible with the module structure.
\\[2pt]
(iii)\, The corresponding notions for bimodule categories are defined
analogously.
\end{Definition}

There is also an obvious operation of \emph{direct sum} of \cala-modules:
$\calm\,{\oplus}\calm'$ is the Cartesian product of 
the categories $\calm$ and $\calm'$ with coordinate-wise 
additive and module structure. An \emph{indecomposable} 
\cala-module is one that is not equivalent (as \cala-modules, 
i.e.\ via a module functor) to a direct sum of two nontrivial \cala-modules. 
Any \cala-module can be written as a direct sum of indecomposable ones, 
uniquely up to equivalence.


\subsection{Bicategories and Deligne products}

Given a monoidal category \cala, the collection of all \cala-modules
has a three-layered structure, consisting of \cala-modules, module functors,
and module natural transformations. This structure cannot be described any
longer in terms of a category; we rather need the notion of a \emph{bicategory},
which is pervasive in this paper. A bicategory has three layers of structure:
objects, 1-morphisms and 2-morphisms. The composition of 1-morphisms is not
necessarily strictly associative, but only up to 2-isomorphisms; if it 
\emph{is} strictly associative, one calls the bicategory \emph{strict} 
(or a 2-category). For 2-morphisms there are two different concatenations, 
referred to as vertical and horizontal compositions.
For details about bicategories see e.g.\ \cite{bena}.

A standard example for a strict bicategory is the one for which objects are
small categories, 1-morphisms are functors and 2-morphisms are natural
transformations. An example for a non-strict bicategory is the one whose
objects are associative algebras, 1-morphisms are bimodules and
2-morphisms are bimodule maps. Here we are interested, for a given
monoidal category \cala, in its bicatgeory \cala-\MOD\ of modules,
having \cala-modules as objects, module functors as 1-morphisms
and natural transformations between module functors as 2-morphisms.
Similarly, for any pair $(\cala_1,\cala_2)$ of monoidal categories there 
is the bicatgeory $\cala_1$-$\cala_2$-\BIMOD.

The universal property of the tensor product of vector spaces 
allows one to describe bilinear maps in terms of linear maps out of the 
tensor product. Similarly, the \emph{Deligne tensor product} 
$\C_1\boxtimes\C_2$ \cite[Sect.\,5]{deli} of abelian categories provides a 
bijection between bifunctors $F\colon \C_1\,{\times}\; \C_2\To \D$ and 
functors $\hat F\colon \C_1\boti \C_2\To \D$. If $\C_1 \eq A_1\mbox-\Mod$ 
is the category of (left, say) modules over a finite-dimensional \ko-algbera 
$A_1$ and $\C_2 \eq A_2\mbox-\Mod$, then $\C_1\,{\boxtimes}\,\C_2$ is 
equivalent to the category of modules over the \ko-algebra 
$A_1\,{\otimes_\ko}\, A_2$ \cite[Prop.\,5.5]{deli}, and if $\C_1$ and $\C_2$ 
are semisimple with simple objects given by $S_i$ and $T_j$, respectively, 
then $\C_1\boti\C_2$ is semisimple as well, with simple objects given 
by $S_i\boti T_j$.

A significant feature of bimodules over a ring is that they admit a 
tensor product. The Deligne product can be used in a similar way. Given, 
say, rings $R_1,R_2$ and $R_3$, the tensor product provides us with functors
  \be
  \otimes_{R_2}:\quad R_1\mbox-R_2\mbox-\Bimod
  \times R_2\mbox-R_3\mbox-\Bimod \to R_1\mbox-R_3\mbox-\Bimod
  \ee
describing `mixed' tensor products. The Deligne tensor product 
categorifies this feature as well and provides bifunctors between 
bimodule categories. For details we refer to \cite[Sect.\,1.46]{egno}.
For a \emph{commutative} ring $R$, the tensor product of two
$R$-modules is again an $R$-module. Braided tensor categories are 
categorifications of commutative rings. Indeed, if \C\ is a \emph{braided} 
abelian monoidal category, then the Deligne tensor product endows the 
bicategory \C-\MOD\ with a monoidal structure. 

Next we notice that for any \ko-algebra $A$ the space $\End_{A}(A_A)$ of 
module endomorphisms of $A$ as a module over itself is isomorphic to 
$\Homk(\ko,A)$ and thus to $A$. This suggests to study the properties of 
the category $\ENDA(\aa)$ of module endofunctors of \cala\ as a module category 
over itself. Since endofunctors can be composed, $\ENDA(\aa)$ is a monoidal 
category. Moreover, we have the following categorified version of the classical 
isomorphism $\End_{A}(A_A) \,{\cong}\, A$ of algebras:

\begin{prop} \label{gibtsdoch}\mbox{} \\
Let \cala\ be a \ko-linear monoidal category. For any object $U\iN\cala$ denote 
by $F_U\colon \aa\To\aa$ the module endofunctor that acts on objects by 
tensoring with $U$ from the left, $F_U(V) \,{:=}\, U\oti V$. Then the functor

  \be
  \begin{array}{rll}
  F_\cala:\quad \cala &\!\longrightarrow\!& \ENDC(\aa)\\[2pt]
  U &\!\longmapsto\!& F_U
  \end{array}
  \label{FC-equiv}\ee
is an equivalence of monoidal categories.
\end{prop}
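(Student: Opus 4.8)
The plan is to prove that $F_\cala$ is fully faithful and essentially surjective, and then to promote this to an equivalence of monoidal categories by exhibiting the monoidal-structure constraints and verifying their coherence. Conceptually this is the categorification of the elementary fact that a ring $A$ is identified with $\End_A(A_A)$ by sending $a$ to left multiplication by $a$, the various constraints of \cala\ playing the role of the associativity and unit data of $A$. As a preliminary I would note that $F_\cala$ is well defined and $\ko$-linear: for every $U\iN\cala$ the endofunctor $F_U$ is canonically a module endofunctor of $\aa$, its structure isomorphisms being instances of the associativity constraint $a_\cala$, so that the module-functor pentagon and triangle for $F_U$ reduce to the pentagon and triangle of \cala; and a morphism $f\colon U\To V$ induces the family $(f\oti\id_W)_{W\iN\cala}$, which is natural and, by bifunctoriality of $\oti$, compatible with those structure isomorphisms, hence is a module natural transformation. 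Additivity of $U\mapsto F_U$ on morphism spaces is then immediate.

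For essential surjectivity I would show that an arbitrary module endofunctor $G$ of $\aa$ is isomorphic, as a module endofunctor, to $F_{G(\one)}$. Evaluating the structure isomorphism of $G$ at the tensor unit and composing with the unit constraint of \cala\ yields a natural isomorphism $\theta_W\colon G(W)\To G(\one)\oti W\eq F_{G(\one)}(W)$; the module-functor coherence of $G$ then forces $\theta$ to be compatible with the structure isomorphisms of $G$ and of $F_{G(\one)}$, i.e.\ to constitute a $2$-isomorphism $G\To F_{G(\one)}$ in $\ENDA(\aa)$.

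For full faithfulness I would show that $f\mapsto(f\oti\id_W)_W$ is a bijection from $\Hom_\cala(U,V)$ onto the space of module natural transformations $F_U\To F_V$. Injectivity follows by evaluating at $W\eq\one$ and using the unit constraint. For surjectivity, given a module natural transformation $\alpha\colon F_U\To F_V$, the compatibility of $\alpha$ with the module structures evaluated at the tensor unit, combined with naturality of $\alpha$, forces $\alpha_W$ to equal $\bar\alpha\oti\id_W$ up to the coherence isomorphisms, where $\bar\alpha\colon U\To V$ is read off from $\alpha_\one$ via the unit constraint; hence $\alpha\eq F_\cala(\bar\alpha)$. Since this bijection is also compatible with composition of morphisms, $F_\cala$ is an equivalence of categories.

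Finally I would equip $F_\cala$ with its monoidal structure: the associativity constraint of \cala\ provides natural isomorphisms $F_U\cir F_V\To F_{U\oti V}$ and the left unit constraint provides $\Id_\aa\To F_\one$, and the associativity coherence and the unit coherences that turn $F_\cala$ into a strong monoidal functor are once more a transcription of the pentagon and triangle identities of \cala. Since an equivalence of categories that is strong monoidal is automatically an equivalence of monoidal categories, this finishes the argument. The main obstacle I anticipate is precisely the coherence bookkeeping in the essential-surjectivity and monoidality steps: one has to verify that the natural isomorphisms assembled from $a_\cala$ and the unit constraints are genuine $2$-morphisms of module functors, respectively genuine monoidal-functor constraints, rather than mere isomorphisms of underlying ordinary functors. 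These are systematic diagram chases using the pentagon and triangle, conceptually routine but the only part that demands care.
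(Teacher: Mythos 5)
Your proof is correct and follows essentially the same route as the paper's: the key step in both is to evaluate a module endofunctor at the tensor unit and use the module-functor constraint to identify it with $F_{G(\one)}$, with the tensoriality constraints of $F_\cala$ supplied by the associator. The only difference is packaging — the paper exhibits the explicit quasi-inverse $G\mapsto G(\one)$, whereas you verify full faithfulness and essential surjectivity separately (your fully-faithful step is spelled out where the paper leaves it implicit in the quasi-inverse construction).
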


\begin{proof}
We first show that the functor
  \be \begin{array}{rll}
  G_\cala:\quad \ENDA(\aa) &\!\longrightarrow\!& \cala \\[2pt]
  F &\!\longmapsto\! & F(\one)
  \end{array}
  \ee
is an essential inverse of $F_\cala$. Indeed we have the chain of equalities
$G_\cala \cir \Fcala (U)\eq G_\cala(F_U) \,{=}
    $\linebreak[0]$
 F_U(\one) \eq U\oti\one \eq U$,
so that $G_\cala \cir \Fcala \eq \Id_\cala$. 
Conversely, for any $\varphi \iN \ENDA(\aa)$
the functor $\Fcala \cir G_\cala (\varphi)\iN\ENDA(\aa)$ acts on $U\iN\aa$ as
$(\Fcala \cir G_\cala (\varphi))(U) \eq F_{G_\cala (\varphi)}(U) 
\eq \varphi(\one)\oti U$. The unit constraint of the module functor $\varphi$ 
then provides a natural isomorphism to the identity functor. 
\\[2pt]
It remains to obtain tensoriality constraints for the functor $\Fcala$. The
equalities
  \be
  \bearl
  \Fcala(U{\otimes} V)(W) = (U\oti V)\oti W
  \\[3pt] \hsp{7.2}
  \stackrel{\alpha}\longmapsto\, U\oti(V\oti W) = \Fcala(U)(\Fcala(V)W) =
  (\Fcala(U)\cir \Fcala(V)) (W)
  \eear
  \ee
show that these are afforded by the associativity constraint $a_\cala$ of \cala.
\end{proof}


\subsection{Drinfeld center and enveloping category}
\label{sec:drinfeld}

For algebras over fields, a very useful invariant of the Morita class of an
algebra is the center. In our situation, i.e.\ for algebras in a monoidal 
category \cala, a similar invariant is at hand which
still is an algebra, albeit in a category different from
\cala, namely in the \emph{Drinfeld center} $\Z(\cala)$.
We recall the definition of the Drinfeld center: for \cala\ a monoidal 
category, the objects of the category $\Z(\cala)$ are pairs $(U,e_U)$, 
where $U\iN\C$ and $e_U$ is a `half-braiding', i.e.\ a functorial 
isomorphism $e_U\colon U\oti{-} \,{\stackrel\simeq\longrightarrow}\, {-}\oti U$ 
satisfying appropriate axioms, see e.g.\ \cite[Ch.\,XIII.4]{KAss}. 
$\Z(\cala)$ has a natural structure of
a braided monoidal category. The forgetful functor
  \be
  \begin{array}{rcl}
  \varphi_{\!\cala}:\quad \Z(\cala) &\! \to \!& \cala \\[1pt]
  (U,e_U) &\! \mapsto \!& U
  \eear
  \label{Rvarphi}\ee
is a tensor functor.

The \emph{reverse} category of a braided monoidal category $\C$, denoted by 
$\C^\rev$, is the same category with opposite braiding; if \C\ is even a 
ribbon category, as in all our applications, we also endow it with the opposite 
twist. The Deligne product 
  \be
  \Ce := \C \boti \C^\rev
  \ee 
is a categorified version of the enveloping algebra 
$A^\text e \eq A \otik A^\text{op}$ of an associative
algebra. Accordingly we call \Ce\ the \emph{enveloping category} of \C.
And in the same way as the category of $A$-bimodules can be described, as an
abelian category, in terms of $A^\text e$-modules, the bicategory
$\C_1$-$\C_2$-\BIMOD\ is equivalent to the bicategory
$(\C_1^{}{\boxtimes}\C_2^\rev)$-\MOD.

Suppose now that the monoidal category \C\ is already braided itself, with 
braiding $c$. Then the braiding provides a functor, actually a braided tensor 
functor, from \C\ into its center $\Z(\C)$ by $U \,{\mapsto}\, (U,c_{U,-})$.
We also have a braided tensor  functor $\C^\rev\To \Z(\C)$, which is obtained 
by the opposite braiding: $U\,{\mapsto}\, (U,c_{-,U}^{-1})$. 
Using the universal property of the Deligne tensor product we combine 
the two functors into a tensor functor
  \be 
  \begin{array}{rll}
  G_\C:\qquad \Ce &\!\longrightarrow\!& \Z(\C)\\[2pt]
  U \boxtimes V &\!\longmapsto\!& (U\oti V,e_{U\otimes V}) 
  \end{array}
  \labl{GC}
where 
  \be
 \xymatrix{
  e_{U\otimes V}(W)^{}:\quad
  U\oti V\oti W \ar^{\,~~~~~~~~\idsm_U\otimes c_{W,V}^{-1}}[rr]
  && ~U\oti W\oti V \ar^{\,c_{U,W} \otimes \idsm_V}[rr]
  && W\oti U\oti V .
  } \ee
The functor $G_\C$ has a natural structure of a braided tensor functor. A 
braided monoidal category is called \emph{factorizable} iff $G_\C$ is an 
equivalence of braided monoidal categories. Representation categories of 
finite-dimensional factorizable Hopf algebras in the sense of 
\cite{drin10} are factorizable. 

It is natural to ask under what condition the functor $G_\C$ is a braided
equivalence. This is answered by the

\begin{lem}\label{defi:modular}{\rm\cite{muge9,etno}}\\
For \C\ a semisimple ribbon category, the functor $G_\C$ \eqref{GC} is 
an equivalence between the center $\Z(\C)$ and the enveloping category \Ce\
if and only if \,\C\ is a modular tensor category.
\end{lem}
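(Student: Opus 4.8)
The plan is to prove the two implications separately, using the fact that $G_\C$ is already known to be a braided tensor functor between two semisimple categories. The main point is that both $\Z(\C)$ and $\Ce=\C\boxtimes\C^\rev$ are well understood: $\Ce$ is semisimple with simple objects $S_i\boxtimes S_j$ (by the discussion of the Deligne product in Section~\ref{sec:mb}), where $\{S_i\}$ are the simple objects of $\C$, and $\Z(\C)$ is likewise semisimple. Since a braided tensor functor between semisimple categories is an equivalence if and only if it is essentially surjective, or equivalently if and only if it is fully faithful, it suffices to control when $G_\C$ is fully faithful on simple objects, or when a dimension/Frobenius-Perron count matches up.

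\medskip

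First I would treat the direction ``modular $\Rightarrow$ $G_\C$ an equivalence''. The key input is the standard computation, for a modular tensor category, of $\dimc\Z(\C)$: one has $\fpdim(\Z(\C))=\fpdim(\C)^2=\fpdim(\Ce)$, since $\Z(\C)$ of a fusion category always has Frobenius-Perron dimension the square of that of $\C$, and $\fpdim(\Ce)=\fpdim(\C)\cdot\fpdim(\C^\rev)=\fpdim(\C)^2$. Next I would show that, when $\C$ is modular (equivalently non-degenerate, by \cite[Prop.\,3.7]{dgno2}), the functor $G_\C$ is injective on isomorphism classes of simple objects: if $G_\C(S_i\boxtimes S_j)\cong G_\C(S_k\boxtimes S_l)$, then in particular the underlying objects $S_i\otimes S_j$ and $S_k\otimes S_l$ agree and the half-braidings agree, and non-degeneracy of the braiding is exactly what forces $(i,j)=(k,l)$ — this is where modularity enters essentially, as the half-braiding $e_{U\otimes V}$ remembers the braiding $c$ with every object of $\C$, and $\C'\simeq\Vectk$ means $c$ is ``non-degenerate enough'' to distinguish the two tensor factors. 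Together with the dimension count and the additivity of Frobenius-Perron dimension, injectivity on simples forces $G_\C$ to be essentially surjective, hence (being a braided tensor functor between semisimple categories with the same simple objects) a braided equivalence.

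\medskip

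For the converse, ``$G_\C$ an equivalence $\Rightarrow$ $\C$ modular'', I would argue contrapositively: if $\C$ is \emph{not} modular, i.e.\ the braiding is degenerate so that $\C'\not\simeq\Vectk$, then there is a nontrivial object $U\iN\C$ centralizing all of $\C$, i.e.\ with trivial monodromy against every object. Then $U\boxtimes\one$ and $\one\boxtimes U$ are non-isomorphic simple objects of $\Ce$ (as $U\not\cong\one$) but $G_\C$ sends both to objects with underlying object $U$ and with half-braidings that coincide precisely because $U$ centralizes everything: $c_{U,-}=c_{-,U}^{-1}$ on all of $\C$. Hence $G_\C$ fails to be injective on simples and so cannot be an equivalence. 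Alternatively, and perhaps more cleanly, I would invoke the dimension obstruction: $\Z(\C)$ always has $\fpdim=\fpdim(\C)^2$, whereas one can compute $\fpdim$ of the image of $G_\C$ and see it cannot fill out $\Z(\C)$ unless the braiding is non-degenerate. Either way the non-degeneracy of $\C$ is forced.

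\medskip

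The main obstacle I anticipate is making precise the claim that the half-braiding $e_{U\otimes V}$ ``remembers enough of the braiding'' — i.e.\ turning the intuition about non-degeneracy into a clean argument that $G_\C$ is injective on simples exactly when $\C'\simeq\Vectk$. The cleanest route is probably to avoid this entirely and reduce everything to a Frobenius-Perron dimension count: show $G_\C$ is always fully faithful onto its image (this is formal, since $G_\C$ restricted to $\C\times\{\one\}$ and to $\{\one\}\times\C^\rev$ are the canonical embeddings and their images centralize each other), so that the image is a full fusion subcategory of $\Z(\C)$ whose $\fpdim$ equals $\fpdim(\C)^2$ times a correction measuring the overlap $\C\cap\C^\rev=\C'$, namely $\fpdim(\text{image})=\fpdim(\C)^2/\fpdim(\C')$; since $\fpdim\Z(\C)=\fpdim(\C)^2$, the functor is surjective (hence an equivalence) iff $\fpdim(\C')=1$ iff $\C'\simeq\Vectk$ iff $\C$ is modular. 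Hence I expect the proof in the paper to either cite \cite{muge9,etno} directly for this dimension bookkeeping or to run exactly this counting argument.
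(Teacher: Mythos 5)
The paper does not prove this lemma at all: it is stated as a quoted result with the citation \cite{muge9,etno}, so there is no in-paper argument to compare against. Judged on its own, your proposal reconstructs the standard M\"uger/ENO argument in broadly the right way. The ``only if'' direction is correct and essentially complete: if $U\nE\one$ is transparent, then $G_\C(U\boti\one)\eq(U,c_{U,-})$ and $G_\C(\one\boti U)\eq(U,c_{-,U}^{-1})$ coincide as objects of $\Z(\C)$, so $G_\C$ identifies two non-isomorphic simples of \Ce\ and cannot be an equivalence. The ``if'' direction has the right skeleton ($\fpdim\Z(\C)\eq\fpdim(\C)^2\eq\fpdim(\Ce)$ plus a fullness/injectivity statement), but two steps are asserted where they actually need proof.

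First, your opening claim that a braided tensor functor between semisimple categories is an equivalence iff it is essentially surjective iff it is fully faithful is false in general (the fiber functor $\mathrm{Rep}(G)\To\Vectk$ is a symmetric tensor functor between semisimple categories, essentially surjective, and not an equivalence); it only becomes true after you have matched Frobenius--Perron dimensions, so the logical order matters and should be made explicit. Second, and more seriously, full faithfulness of $G_\C$ is \emph{not} formal: for $X$ simple and nontrivial one has $\Hom_{\Ce}(\one\boti X, X\boti\one)\eq 0$ while $\Hom_\C(X,X)\eq\ko$, and the half-braiding compatibility kills a scalar $\lambda\,\id_X$ only if there exists $W$ with $c_{X,W}\cir c_{W,X}\nE\id$ --- i.e.\ precisely when $X$ is not transparent. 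So non-degeneracy enters exactly at the point you call formal. Likewise the identity $\fpdim(\mathrm{image})\eq\fpdim(\C)^2/\fpdim(\C')$ is not bookkeeping but is (a form of) M\"uger's double-centralizer theorem for the non-degenerate category $\Z(\C)$, and must be cited or proved. With those two points repaired (or delegated to \cite{muge9,etno,dgno2}), the argument closes.
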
 

Thus in particular in the context of the Reshetikhin-Turaev construction,
which takes as an input a \emph{modular} tensor category,
the center and enveloping category of \C\ are equivalent as
braided categories, including their spherical structure.

For a braided category \C\ the obvious functor $\Ce\To\C$ factors through 
the center of \C: composing the functor $G_\C$ \eqref{GC} with the 
forgetful functor, we obtain
  \be
  \Ce \to \Z(\C) \to \C \,.
  \ee
Hereby \C\ becomes a \Ce-module, and any \C-module is turned into a 
\C-bimodule.

The following assertion shows again that it is appropriate to regard
the Drinfeld center as a categorification of the center of an algebra:

\begin{prop}{\rm \cite[Thm.\,3.1]{etno3}, \cite[Rem.\,3.18]{muge8}}
\label{2Morita}\\
Let \cala\ and \calb\ be fusion categories. Their centers $\Z(\cala)$ and 
$\Z(\calb)$ are braided equivalent iff their bicategories \cala-\MOD\ and 
\calb-\MOD\ of module categories are equivalent.
\end{prop}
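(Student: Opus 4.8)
The plan is to show that each of the two stated properties is equivalent to $\cala$ and $\calb$ being \emph{Morita equivalent} fusion categories, i.e.\ to the existence of an indecomposable $\cala$-module $\calm$ for which $\calb$ is equivalent, as a fusion category, to the dual category $\cala^*_\calm$ of $\cala$-module endofunctors of $\calm$ (up to taking opposite categories, a caveat I suppress below). First I would treat the equivalence between the bicategory condition and Morita equivalence. If $\calb\simeq\cala^*_\calm$, then $\calm$ carries a canonical $\cala$-$\calb$-bimodule structure, and the assignments $\calm'\mapsto\calm'\boxtimes_\cala\calm$ and $\calx\mapsto\calx\boxtimes_\calb\calm$ (relative Deligne products, $\calm$ regarded as a bimodule in the evident way) define --- once the requisite unit and associativity $2$-isomorphisms are checked --- mutually weakly inverse pseudofunctors between \cala-\MOD\ and \calb-\MOD; this is the categorified Morita theorem. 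Conversely, any biequivalence from \cala-\MOD\ to \calb-\MOD\ sends the regular module \aa, which is indecomposable, to an indecomposable $\calb$-module $\calm$, and restricts to a \emph{monoidal} equivalence between the endomorphism category of \aa\ in \cala-\MOD\ and that of $\calm$ in \calb-\MOD. By Proposition~\ref{gibtsdoch} the former is $\simeq\cala$, while the latter is by definition the dual category $\calb^*_\calm$; hence $\cala\simeq\calb^*_\calm$, and since the double dual $(\calb^*_\calm)^*_\calm$ is again equivalent to $\calb$ (the module $\calm$ being exact because it is indecomposable), this is exactly Morita equivalence.

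Next I would show that a Morita equivalence implies a braided equivalence of centers. The key fact is that $\Z(\cala)$ is identified, as a monoidal category, with the category of $\cala$-bimodule endofunctors of the regular $\cala$-bimodule \cala\ --- equivalently, with the endomorphism category of the \emph{unit object} of the monoidal bicategory \cala-\cala-\BIMOD\ whose tensor product is the relative Deligne product $\boxtimes_\cala$ --- where the half-braiding of an object $(Z,e_Z)\iN\Z(\cala)$ reappears in the two canonical isomorphisms $F(U)\cong U\oti Z$ and $F(U)\cong Z\oti U$ that any such bimodule endofunctor $F$ with $F(\one)\eq Z$ automatically carries; this braiding is precisely the one dictated by the Eckmann--Hilton argument for the endomorphisms of the unit object of a monoidal bicategory. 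A Morita equivalence $\calm$ now induces a \emph{monoidal} biequivalence between \cala-\cala-\BIMOD\ and \calb-\calb-\BIMOD, namely $\calx\mapsto\calm\op\boxtimes_\cala\calx\boxtimes_\cala\calm$ with $\calm\op$ denoting $\calm$ viewed as a $\calb$-$\cala$-bimodule, and it sends the unit \cala\ to the unit \calb; restricting to endomorphisms of the unit object yields the sought braided equivalence $\Z(\cala)\simeq\Z(\calb)$. This is M\"uger's theorem on the Morita invariance of the Drinfeld center.

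The remaining, and hardest, implication is that a braided equivalence $\Z(\cala)\simeq\Z(\calb)$ forces a Morita equivalence. Set $\calc\df\Z(\cala)$, a modular tensor category. Applying an adjoint of the forgetful functor $\varphi_\cala\colon\Z(\cala)\To\cala$ of \eqref{Rvarphi} to \one\ produces a connected commutative (\'etale) algebra $A_\cala\iN\calc$; it is \emph{Lagrangian}, meaning $\fpdim(A_\cala)^2\eq\fpdim(\calc)$ and $A_\cala$ has no nontrivial local modules, and $\cala$ is recovered as the fusion category of $A_\cala$-modules internal to $\calc$. Transporting the corresponding datum for $\calb$ along the given braided equivalence yields a second Lagrangian algebra $A\iN\calc$ with $\calb$ the fusion category of $A$-modules in $\calc$. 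By commutativity of $A_\cala$ and $A$ the tensor product $A_\cala\oti A$ is again an algebra in $\calc$, and the category $\calm$ of $(A_\cala\oti A)$-modules in $\calc$ is at once a left module over the category of $A_\cala$-modules, i.e.\ over $\cala$, and a right module over $\calb$; the dimension count $\fpdim(A_\cala)\eq\fpdim(A)\eq\sqrt{\fpdim(\calc)}$ then forces this $\cala$-$\calb$-bimodule category to be invertible, i.e.\ $\calb\simeq\cala^*_\calm$ (up to opposites). Assembling the three steps proves the ``if and only if''.

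The step I expect to be the main obstacle is this last one: it requires knowing that a fusion category is reconstructed, up to equivalence, as the category of modules over its canonical Lagrangian algebra in its center, and that the tensor product of two Lagrangian algebras in a modular tensor category gives an \emph{invertible} bimodule category. These two facts are the technical heart of the statement --- they rest on the structure theory of \'etale algebras in braided fusion categories together with Frobenius--Perron dimension counting --- and are the reason the proposition is quoted here rather than derived from scratch. One must also keep careful track of the pervasive $(-)\op$ versus $(-)^\rev$ bookkeeping; in particular $\Z(\cala\op)\simeq\Z(\cala)^\rev$, so that the auxiliary reversals entering Steps~1 and~2 cancel and one indeed lands at $\Z(\cala)\simeq\Z(\calb)$ rather than at its reverse.
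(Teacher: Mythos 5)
This proposition is quoted in the paper without proof --- it is imported from \cite[Thm.\,3.1]{etno3} and \cite[Rem.\,3.18]{muge8} --- so your attempt can only be measured against the cited literature rather than against an internal argument. Your three-step architecture (bicategory equivalence $\Leftrightarrow$ Morita equivalence; Morita equivalence $\Rightarrow$ braided equivalence of centers; braided equivalence of centers $\Rightarrow$ Morita equivalence) is indeed how the sources organize the proof, and your first two steps are essentially sound; both could moreover be shortened by appealing to results the paper already records, namely Proposition \ref{gibtsdoch} for the identification $\ENDA(\aa)\,{\simeq}\,\cala$ (with the op-caveat you rightly flag) and Proposition \ref{prop:schau13} for the canonical braided equivalence $\Z(\cala)\,{\simeq}\,\Z(\ENDA(\calm))$, which is exactly the Morita invariance of the center that your bimodule-bicategory argument re-derives.

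The third step, however, contains a genuine error and a circularity. The error: the algebra $A_\cala\oti A$ is not connected --- one has $\dimk\Hom(\one, A_\cala\oti A)\eq\dimk\Hom(A_\cala,A)$, which for two Lagrangian algebras is typically larger than one (already for $A_\cala\eq A$ it equals the number of isomorphism classes of simple objects of $\cala$) --- so it decomposes as a direct sum of connected \'etale algebras, and its module category decomposes accordingly into several indecomposable pieces; a decomposable bimodule category can never be invertible, so no dimension count will rescue this candidate. The standard construction instead transports the single canonical Lagrangian algebra $I_\calb(\one)\iN\Z(\calb)$ along the given braided equivalence to a Lagrangian algebra $L\iN\Z(\cala)$ and takes for $\calm$ the indecomposable $\cala$-module of $\varphi_{\!\cala}(L)$-modules in $\cala$; one then still has to produce the monoidal functor $\calb\To\ENDA(\calm)$ and prove it fully faithful --- this, not the dimension count, is where the braidedness of the equivalence of centers enters beyond making $L$ commutative, and it is the reason the statement fails for merely monoidal equivalences --- before Frobenius--Perron dimensions can upgrade it to an equivalence. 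The circularity: the Lagrangian-algebra dictionary you lean on is, in the organization of this paper and of \cite{dmno}, a \emph{consequence} of Proposition \ref{2Morita} (the paper states explicitly that the proof of Proposition \ref{correspondence} is based on it), so to make your sketch non-circular you must draw those facts from a development independent of the statement being proved, as in \cite{etno3}.
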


There is a close relation between module categories and the Drinfeld center 
\cite[Sect.\,5.1]{enoM}. For any indecomposable \cala-module 
\calm\ over a fusion category \cala, the category $\ENDA(\calm)$ of
\cala-module endofunctors of \calm\ is a fusion category, and \calm\ can be 
regarded as a right $\ENDA(\calm)$-module, and thus as an
$\cala{\boxtimes}\ENDA(\calm)^\rev$-module. The 
$\cala{\boxtimes}\ENDA(\calm)^\rev$-module endofunctors of this module
category can be identified \cite[Sect.\,2.6]{dmno} with the functors of 
tensoring with an object of the Drinfeld center $\Z(\cala)$ from the left, 
or, alternatively, with the functors of tensoring with an object of 
$\Z(\ENDA(\calm))$ from the right. 
Comparing the two descriptions of these functors gives the following result:

\begin{prop}\label{prop:schau13}{\rm\cite{schau13}}\\
For any module \calm\ over a fusion category \cala\
there is a canonical equivalence
  \be\label{braideq}
  \Z(\cala) \stackrel\simeq\longrightarrow \Z(\ENDA(\calm))
  \ee
of braided categories.
\end{prop}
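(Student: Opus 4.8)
The plan is to deduce Proposition \ref{prop:schau13} from the two-sided description of module endofunctors outlined in the paragraph preceding it. The key observation is that the equivalence is not constructed by hand but forced by comparing two canonical realizations of the \emph{same} monoidal category, namely $\END_{\cala\boxtimes\ENDA(\calm)\op}(\calm)$ of module endofunctors of $\calm$ regarded as a $(\cala,\ENDA(\calm))$-bimodule. First I would fix, for an indecomposable $\cala$-module $\calm$, the fusion category $\calb \df \ENDA(\calm)$ and recall (as in the cited \cite[Sect.\,5.1]{enoM}) that $\calm$ is faithfully balanced, so that tensoring from the left with $\cala$ and from the right with $\calb$ identifies $\cala$ and $\calb\op$ as mutual centralizers inside $\END(\calm)$. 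Then the first main step is: there is a canonical monoidal equivalence $\Z(\cala) \,{\stackrel\simeq\to}\, \END_{\cala\boxtimes\calb\op}(\calm)$ sending an object $(Z,e_Z)$ of the center to the endofunctor ``$Z\oti{-}$'', with the half-braiding $e_Z$ supplying exactly the data needed to make $Z\oti{-}$ into a $\calb$-module (i.e.\ a right-module) endofunctor on top of its evident left-$\cala$-module structure; this is the content of \cite[Sect.\,2.6]{dmno} applied to $\calm$. The second main step is the mirror statement: the same bicategorical object $\END_{\cala\boxtimes\calb\op}(\calm)$ is canonically monoidally equivalent to $\Z(\calb)$, via ``${-}\oti Z'$'' for $(Z',e_{Z'})\iN\Z(\calb)$, by the identical argument with the roles of $\cala$ and $\calb\op$ interchanged (using that $\Z(\calb\op)\,{\simeq}\,\Z(\calb)$ as braided categories, with the reversal of braiding coming from the two natural sign/inverse conventions, which is harmless because we only need a braided equivalence, not a strictly identical one).

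Composing the first equivalence with the inverse of the second then yields the desired equivalence \eqref{braideq}; canonicity follows because every functor in the composite was canonical. To see that it is \emph{braided}, not merely monoidal, I would argue as follows. Both $\Z(\cala)$ and $\Z(\calb)$ carry their intrinsic braidings, and under the two identifications above these braidings are transported to one and the same structure on $\END_{\cala\boxtimes\calb\op}(\calm)$: concretely, the braiding on $\Z(\cala)$ becomes, under $Z\,{\mapsto}\,(Z\oti{-})$, the natural isomorphism $(Z\oti{-})\cir(Z_1\oti{-})\,{\Rightarrow}\,(Z_1\oti{-})\cir(Z\oti{-})$ built from the half-braidings $e_Z,e_{Z_1}$ evaluated against each other — and the very same natural isomorphism arises, in the $\calb$-picture, from $e_{Z'},e_{Z'_1}$. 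This is essentially the statement that the ``interchange'' of a left half-braiding past a right half-braiding is symmetric in how one organizes it; I would record it as a short lemma. Hence the composite equivalence intertwines the braidings, giving a braided equivalence.

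The step I expect to be the main obstacle is making the identification $\Z(\cala)\,{\simeq}\,\END_{\cala\boxtimes\calb\op}(\calm)$ genuinely \emph{functorial and monoidal}, with all constraints (the module-functoriality isomorphisms, the tensoriality constraints, and their compatibility with associators) written down coherently — this is the part that in \cite{dmno,schau13} is done by a fairly involved bookkeeping of half-braidings versus the bimodule associativity constraint $b$ of Definition \ref{...} (the tuple $(\calx,\otimes_1,a_1,l_1,\otimes_2,a_2,r_2,b)$). In particular one must check that the forgetful-type functor $\END_{\cala\boxtimes\calb\op}(\calm)\To\END(\calm)$ is monoidal and conservative, and that an endofunctor in the image which additionally commutes (up to coherent isomorphism) with the $\cala$-action ``is'' tensoring by a center object — i.e.\ reconstructing the half-braiding from the commutativity data. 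This reconstruction uses indecomposability of $\calm$ (so that $\END_\cala(\calm)$ is ``small enough'', a fusion category) in an essential way, and it is where I would invoke \cite[Sect.\,5.1]{enoM} rather than reprove it. Once both monoidal identifications and the braiding-compatibility lemma are in hand, the proposition follows formally.

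\medskip\noindent\textbf{Remark on strictness of ``canonical''.} I would be slightly careful to state in what sense \eqref{braideq} is canonical: it is canonical once $\calm$ is fixed (it depends on $\calm$, and different $\calm$ over the same $\cala$ give equivalent but a priori different braided equivalences, related by the Morita 2-groupoid). I would phrase the proof so that this dependence is transparent, since it is exactly this family of equivalences, varying over the choices of boundary/defect Wilson-line category, that gets used later in the paper.
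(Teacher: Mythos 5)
Your proposal follows essentially the same route as the paper, which does not prove Proposition \ref{prop:schau13} in detail but cites \cite{schau13} and sketches exactly your argument in the preceding paragraph: identify both $\Z(\cala)$ and $\Z(\ENDA(\calm))$ with the $\cala\boxtimes\ENDA(\calm)^\rev$-module endofunctors of $\calm$ (via left, respectively right, tensoring, as in \cite[Sect.\,2.6]{dmno}) and compare the two descriptions. Your elaborations -- the braiding-compatibility lemma, the coherence bookkeeping, and the remark that the equivalence depends on the choice of $\calm$ -- are consistent with the paper's sketch and with the cited sources.
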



\subsection{Central functors}

In this brief subsection, we recall a notion that will enter
crucially into our analysis of boundary conditions and defect
surfaces.

\begin{Definition}\label{def:centfun}\cite[Sect.\,2.1]{bezr} \\[1pt]
A structure of a \emph{central functor} on a monoidal functor 
$F\colon \C\To\cala$
from a braided monoidal category \C\ to a monoidal category \cala\ is
a natural family of isomorphisms
  \be
  \sigma_{U,V}^{}:\quad F(U)\oti V \,{\stackrel\cong\longrightarrow}\,V\oti F(U)
  \ee 
for $U$ in \C\ and
$V$ in \cala, satisfying the following compatibility conditions:
 \\[2pt]
(i)\, For $X,X'\iN \C$ the isomorphism $\sigma_{X,F(X')}$ coincides with
the composition
  \be
  F(X)\otimes F(X') \cong F(X\oti X') \cong F(X'\oti X)
  \cong F(X')\otimes F(X) \,,
  \ee
where the first and the third isomorphisms are the tensoriality constraints 
of $F$, while the middle isomorphism comes from the braiding on \C.
 \\[2pt]
(ii)\, For $Y_1, Y_2\iN \cala$ and $X\iN \C$ the composition
  \be
  \xymatrix{
  F(X)\otimes Y_1\otimes Y_2 \ar^{\sigma_{X,Y_1}^{}\otimes Y_2~}[rr]
  && Y_1 \otimes F(X) \otimes Y_2 \ar^{Y_1\otimes \sigma_{X,Y_2}^{}~}[rr]
  && Y_1\otimes Y_2\otimes F(X)
  } \ee
coincides with the isomorphism $\sigma_{X, Y_1\otimes Y_2}$.
 \\[2pt]
(iii)\, For $Y\in \cala$ and $X_1,X_2\in \C$ the composition
  \be
  \begin{array}{r}
  \xymatrix{
  F(X_1\oti X_2)\otimes Y \,\cong\, F(X_1) \otimes F(X_2) \otimes Y
  \ar^{\,~~~~~~~~~~~~~F(X_1)\otimes\sigma_{X_2,Y}^{}}[rr]
  && ~F(X_1) \otimes Y \otimes F(X_2)
  }
  \qquad\qquad \\~\\[-.8em] \xymatrix{
  {~} \ar^{\sigma_{X_1,Y}^{}\otimes F(X_2)~~~~~~~~~~~~~~~~~~~~~~~~~~~}[rr]
  && ~Y\otimes F(X_1)\otimes F(X_2) \,\cong\, Y \otimes F(X_1\oti X_2)
  } \end{array}
  \ee
coincides with $\sigma_{X_1\otimes X_2, Y}$.
\end{Definition}

The following result relates central functors
into \cala\ to the Drinfeld center $\Z(\cala)$:

\begin{lemma}\label{lem:central}{\rm\cite[Def.\,2.4]{dmno}} \\[1pt]
A structure of central functor on $F\colon \C\To\cala$ is equivalent to
a lift of $F$ to a braided tensor functor 
$\widetilde F\colon \C \To \Z(\cala)$, i.e.\ the composition 
$\varphi_{\!\cala} \cir \widetilde F$ with the forgetful functor
\eqref{Rvarphi} equals $F$,
  \be
  \xymatrix{
  & \Z(\cala)  \ar^{\varphi_{\!\cala}^{}}[d]\\
  \C\ar_F[r]\ar@{-->}^{\widetilde F}[ur]& \cala
  }\ee
\end{lemma}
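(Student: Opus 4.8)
The plan is to exhibit the two directions of the equivalence explicitly and then check that they are mutually inverse. Given a central-functor structure $\sigma_{U,V}$ on $F\colon\C\To\cala$, I would define the lift $\widetilde F\colon\C\To\Z(\cala)$ on objects by $\widetilde F(U) \df (F(U),e_U)$, where the half-braiding $e_U\colon F(U)\oti{-}\To{-}\oti F(U)$ is taken to be $\sigma_{U,-}$ itself. On morphisms $\widetilde F$ agrees with $F$ (this makes sense because $\sigma$ is a \emph{natural} family, so $F(f)$ is automatically a morphism in $\Z(\cala)$ for every $f$ in \C). The axioms that $e_U$ must satisfy to be a genuine half-braiding -- compatibility with the tensor product in the second slot and with the unit -- are precisely condition (ii) of Definition \ref{def:centfun} (for the tensor product $Y_1\oti Y_2$) together with the normalization forced by naturality on the unit object. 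Thus $\widetilde F$ lands in $\Z(\cala)$, and by construction $\varphi_{\!\cala}\cir\widetilde F = F$ as monoidal functors.

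Next I would equip $\widetilde F$ with its monoidal structure: the tensoriality constraint of $\widetilde F$ is taken to be that of $F$, i.e.\ $F(U\oti V)\cong F(U)\oti F(V)$. What needs checking is that this isomorphism is a morphism \emph{in} $\Z(\cala)$, i.e.\ that it intertwines the half-braiding $e_{U\oti V}$ with the one on $\widetilde F(U)\oti\widetilde F(V)$ (which, in the center, is built from $e_U$ and $e_V$ in the standard way). This is exactly condition (iii) of Definition \ref{def:centfun}. The hexagon/monoidal-functor axioms for $\widetilde F$ then follow from those for $F$ together with coherence. Finally, that $\widetilde F$ is \emph{braided} -- i.e.\ that it sends the braiding $c_{X,X'}$ of \C\ to the braiding of $\Z(\cala)$, which on the objects $\widetilde F(X),\widetilde F(X')$ is evaluated via the half-braiding as $e_X(F(X'))$ up to the tensoriality constraints -- is precisely condition (i) of Definition \ref{def:centfun}.

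For the converse, given a braided tensor functor $\widetilde F\colon\C\To\Z(\cala)$ with $\varphi_{\!\cala}\cir\widetilde F = F$, write $\widetilde F(U) = (F(U),e_U)$ and set $\sigma_{U,V} \df e_U(V)\colon F(U)\oti V\To V\oti F(U)$ for $U\iN\C$ and $V\iN\cala$; naturality of $\sigma$ in both arguments comes from functoriality of the half-braiding and naturality of the half-braidings under morphisms in $\Z(\cala)$. Conditions (ii) and (iii) of Definition \ref{def:centfun} are then the half-braiding axioms for $e_U$ and the requirement that the tensoriality constraint of $\widetilde F$ be a morphism in $\Z(\cala)$, respectively, while condition (i) is the statement that $\widetilde F$ is braided. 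The two assignments are visibly inverse to each other, since in both directions the data being matched up is literally $\sigma_{U,V} \leftrightarrow e_U(V)$, and the monoidal/braided structures on $\widetilde F$ are identified with those of $F$ on the nose.

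The only mildly delicate point -- and the step I expect to require the most care in a fully detailed write-up -- is the bookkeeping in matching condition (i) with the braiding axiom for $\widetilde F$: one must carefully unwind how the braiding of $\Z(\cala)$ acts on a tensor product $\widetilde F(X)\oti\widetilde F(X')$ in terms of the half-braidings $e_X$, insert the tensoriality constraints of $F$ on both sides, and verify that the resulting diagram is exactly the pentagon of isomorphisms in Definition \ref{def:centfun}(i). Everything else is a direct translation between two packagings of the same data, so the proof is essentially a diagram-by-diagram dictionary; I would present it at that level rather than spelling out every coherence cell. (This is, of course, just the statement of \cite[Def.\,2.4]{dmno}, so one may also simply cite it.)
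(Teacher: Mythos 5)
Your proposal is correct and is exactly the standard dictionary argument: the paper itself offers no proof of Lemma \ref{lem:central} but simply cites \cite[Def.\,2.4]{dmno}, and your identification of condition (ii) with the half-braiding axiom for $e_U\eq\sigma_{U,-}$, of condition (iii) with the requirement that the tensoriality constraint of $F$ be a morphism in $\Z(\cala)$, and of condition (i) with braidedness of $\widetilde F$ is precisely the translation that the citation delegates to the reader. Nothing is missing; the two assignments are mutually inverse because both directions match the same data $\sigma_{U,V}\,{\leftrightarrow}\,e_U(V)$ on the nose.
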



\subsection{Lagrangian algebras}

In general, for an algebra $A$ in a fusion category \cala\ there is
no notion of a center, at least not as an object of \cala. This is simply
because \cala\ is not required to be braided, so that there is no natural 
concept of commuting factors in a tensor product.
As it turns out, the Drinfeld center $\Z(\cala)$, which \emph{is} 
braided, is the right recipient for a notion of a 
center. Keeping in mind that, in classical algebra, Morita equivalent
algebras have isomorphic centers, a center should better be associated 
to a module category over \cala\ rather than to an algebra in \cala.

\begin{Definition} \cite[Defs.\,3.1\,\&\,4.6]{dmno}~\\[1pt]
(i)\, An algebra $A$ in a monoidal category is called \emph{separable} iff
the multiplication morphism splits as a morphism of $A$-bimodules.
\\[3pt]
(ii)\, An algebra in a monoidal category that is also a coalgebra is called 
\emph{special} iff it is separable, with the right-inverse of the product 
given by a multiple of the coproduct, and the composition $\eps\cir\eta$ 
of the counit and unit is non-zero.
\\[3pt]
(iii)\, An \emph{\'etale} algebra in a braided \ko-linear monoidal category 
\C\ is a separable commutative algebra in \C.
\\[3pt]
(iv)\, An \'etale algebra $A\iN\C$ is said to be \emph{connected} 
(or \emph{haploid}) iff $\dimk \Hom(\one,A) \eq 1$.
\\[3pt]
(v)\, A \emph{Lagrangian algebra} in a non-degenerate braided fusion 
category \C\ is a connected \'etale algebra $L$ in \C\ for which the 
category $\C_L^{0}$ of local $L$-modules in \C\ is equivalent to $\Vectk$
as an abelian category.
\end{Definition}

\begin{rem}
(i)\, A local (or dyslectic) module $(M,\rho)$ over a commutative algebra $A$
is an $A$-module for which the representation morphism $\rho$ satisfies
$\rho \cir c_{A,M}^{}\cir c_{M,A}^{} \eq \rho$ \cite{pare23,kios,ffrs}. 
The full subcategory of dyslectic modules is a braided monoidal category.
\\[2pt]
(ii)\, The defining property 
$\C_L^{0} \,{\simeq}\, \Vectk$ of a Lagrangian algebra is equivalent
to the equality $(\fpdim(L))^2 \eq \fpdim(\C)$ of
Perron-Frobenius dimensions \cite[Cor.\,3.32]{dmno}.
\end{rem}

\begin{prop}\label{prop:triple}{\rm\cite[Cor.\,3.8]{danO}}\\
For \C\ a non-degenerate braided fusion category, equivalence
classes of indecomposable \C-modules are in bijection with isomorphism classes 
of triples $(A_1, A_2, \Psi)$ with $A_1$ and $A_2$ connected \'etale algebras
in \C\ and $\Psi\colon \C_{A_1}^0 {\stackrel\simeq\longrightarrow}\, 
\C_{A_2}^{0\;\text{rev}}$ a braided equivalence between the category of 
local $A_1$-mo\-dules and the reverse of the category of local $A_2$-mo\-dules
\end{prop}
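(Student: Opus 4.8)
The plan is to establish the claimed bijection by passing through the intermediate notion of a \emph{commutative algebra structure in $\C$ on a fixed module category}, and then splitting such a structure into the two algebras $A_1$, $A_2$ together with the gluing datum $\Psi$. First I would recall (from \cite{dmno}, and as summarized in the discussion preceding Proposition \ref{prop:schau13}) that for a non-degenerate braided fusion category $\C$, indecomposable $\C$-modules $\calm$ are classified by étale algebras in $\C$ together with the extra data needed to pin down $\calm$ rather than just $\Mod\text-A$; concretely, by results on the relation between module categories and the center, an indecomposable $\C$-module $\calm$ determines, via internal Homs, a connected étale algebra, but the reconstruction requires remembering how $\calm$ sits as a bimodule. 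This is where the $\Ce$-module structure from Lemma \ref{defi:modular} enters: using $G_\C\colon\Ce\To\Z(\C)$ we regard $\C$ as a $\Ce$-module, hence every $\C$-module $\calm$ becomes a $\C$-bimodule, i.e.\ an $(\C\boti\C^\rev)$-module. Under the equivalence $\C_1$-$\C_2$-$\BIMOD \simeq (\C_1\boti\C_2^\rev)$-$\MOD$ recorded in Section \ref{sec:drinfeld}, together with factorizability of $\C$ (Lemma \ref{defi:modular}), an indecomposable such bimodule corresponds to an algebra that — because $\C$ is braided on both sides — decomposes naturally into a left étale algebra $A_1$ and a right étale algebra $A_2$.

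The key steps, in order, are: (i) given an indecomposable $\C$-module $\calm$, produce $A_1$ as the internal End algebra $\underline{\End}(M)$ for a suitable object $M\iN\calm$, so that $\calm\simeq\Mod\text-A_1$ as a $\C$-module and $A_1$ is connected étale since $\C$ is non-degenerate and $\calm$ indecomposable; (ii) by symmetry, using the \emph{opposite} module structure (available because $\C$ is braided, so a left module is also a right module in a second way via the reverse braiding), produce a second connected étale algebra $A_2$ with $\calm\simeq A_2\text-\Mod$; (iii) identify the category of local modules $\C_{A_i}^0$ with $\ENDC(\calm)$-type data — more precisely, invoke Proposition \ref{prop:schau13} and the standard fact (Schauenburg, Davydov--Müger--Nikshych--Ostrik) that for a connected étale algebra $A$ in a non-degenerate $\C$ one has a braided equivalence $\Z(\Mod\text-A)\simeq \C_A^0\boti(\C_A^0)^\rev$ refining $\Z(\C)\simeq\C\boti\C^\rev$; (iv) comparing the two presentations $\calm\simeq\Mod\text-A_1$ and $\calm\simeq A_2\text-\Mod$ forces an equivalence of the associated "halves", yielding the braided equivalence $\Psi\colon\C_{A_1}^0\stackrel\simeq\To\C_{A_2}^{0\,\rev}$. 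Conversely, given a triple $(A_1,A_2,\Psi)$ I would build the $\C$-module as $\Mod\text-A_1$ and use $\Psi$ to equip it with the compatible right $A_2$-action (equivalently, to specify the $\Ce$-module structure), then check this is inverse to the previous construction up to equivalence. Finally I would verify the correspondence is well defined on isomorphism/equivalence classes — Morita-invariance of the center (cf.\ Proposition \ref{2Morita}) guarantees that different choices of $M$ give isomorphic triples.

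The main obstacle I anticipate is step (iv): making precise, and canonical, the passage from "$\calm$ carries both a presentation $\Mod\text-A_1$ and a presentation $A_2\text-\Mod$, compatibly as a bimodule" to "a braided equivalence $\C_{A_1}^0\simeq\C_{A_2}^{0\,\rev}$", without hidden choices. The subtlety is that the two étale algebras live on the two "sides" of the bimodule and the only thing tying them together is the single $\Ce$-module structure on $\calm$; extracting a braided — not merely monoidal — equivalence between the respective categories of local modules requires carefully tracking how the braiding of $\C$, its reverse on the second factor, and the half-braidings from $G_\C$ interact. This is exactly the content of \cite[Cor.\,3.8]{danO}, so in the proof I would quote the relevant structural input from that paper and from \cite{dmno} (factorizability, $\Z(\Mod\text-A)\simeq\C_A^0\boti(\C_A^0)^\rev$, and the classification of $\Ce$-modules in terms of étale algebras) rather than re-deriving it, and devote the care to assembling these pieces into the stated bijection and to checking naturality/functoriality so that it descends to equivalence classes on both sides.
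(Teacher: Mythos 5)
First, a point of order: the paper does not prove this statement at all --- it is quoted verbatim from \cite[Cor.\,3.8]{danO} --- so there is no in-paper argument to compare against, and your decision to defer the hard step (iv) to that reference is in itself consistent with what the paper does. Judged on its own terms, however, your outline contains a concrete error at the very first construction step. The internal End algebra $\underline{\mathrm{End}}(M)$ of an object $M$ of an indecomposable $\C$-module $\calm$ is an algebra with $\calm\,{\simeq}\,\Mod\mbox{-}\underline{\mathrm{End}}(M)$, but it is in general \emph{not} commutative, hence not \'etale: already for the regular module $\calm\eq\cc$ and $M\eq X$ simple one gets $X\oti X^*$, whereas the triple attached to the regular module is $(\one,\one,\id)$. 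The algebras $A_1,A_2$ of the triple are therefore not internal Ends, and one does \emph{not} have $\calm\,{\simeq}\,\Mod\mbox{-}A_1$ (the two categories generally have different numbers of simple objects); rather, $A_1$ and $A_2$ are the left and right centers of $\underline{\mathrm{End}}(M)$, or equivalently the images of the Lagrangian algebra $L_\calm\iN\Z(\C)\,{\simeq}\,\C\boti\C^\rev$ under the adjoints of the two embeddings $\C\To\C\boti\C^\rev$ and $\C^\rev\To\C\boti\C^\rev$. Your step (ii) (``the opposite module structure gives $A_2$ with $\calm\simeq A_2$-$\Mod$'') inherits the same problem, and step (iii) misquotes the relevant center formula: for a connected \'etale algebra $A$ in non-degenerate $\C$ one has $\Z(\C_A)\,{\simeq}\,\C\boti(\C_A^0)^\rev$, not $\C_A^0\boti(\C_A^0)^\rev$ (the latter is $\Z(\C_A^0)$).

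The route that actually works, and that is supported by the machinery the paper assembles elsewhere, is the one your first paragraph gestures at but your steps (i)--(iv) do not implement: indecomposable $\C$-modules correspond to Lagrangian algebras $L$ in $\Z(\C)$ (Proposition \ref{correspondence}); non-degeneracy gives $\Z(\C)\,{\simeq}\,\C\boti\C^\rev$ (Lemma \ref{defi:modular}); and the structure theory of connected \'etale algebras in a Deligne product shows that such an $L$ amounts to connected \'etale algebras $A_1\iN\C$, $A_2\iN\C^\rev$ (the ``marginals'' of $L$) together with a braided equivalence $\Psi\colon\C_{A_1}^0\To\C_{A_2}^{0\,\rev}$, the Lagrangian condition $(\C\boti\C^\rev)_L^0\,{\simeq}\,\Vectk$ being precisely what forces $L$, viewed over $\C_{A_1}^0\boti(\C_{A_2}^0)^\rev$, to be the graph of an equivalence. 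If you rebuild your sketch around the Lagrangian algebra and its marginals instead of internal Ends, the remaining gaps reduce to the structural inputs you already planned to cite.
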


\begin{rem}
\'Etale algebras can be obtained from central functors and,
conversely, central functors from induction along \'etale functors:
\\[2pt]
(i)\, Given a central functor $F\colon \C\To\cala$ from a braided fusion 
category \C\ to a fusion category \cala, denote by $\mathrm R_F$ its right 
adjoint functor. The object $\mathrm R_F(\one_\cala)$ then has a canonical 
structure of connected \'etale algebra in \C\ \cite[Lemma\,3.5]{dmno}.
\\[2pt]
(ii)\, For \C\ a braided fusion category and $A$ a connected \'etale algebra 
in \C, the induction functor $\Ind_A\colon \C \To \C_A$ that acts as
$U \,{\mapsto}\, U\oti A$ admits a natural structure of a central functor
\cite[Sect.\,3.4]{dmno}.
\\[2pt]
(iii)\, If in addition \C\ is non-degenerate and $A$ is Lagrangian, then 
the lift $\widetilde{\Ind_A}\colon \C \To \Z(\C_A)$ of the induction functor 
is a braided tensor equivalence \cite[Cor.\,4.1(i)]{dmno}.
\end{rem}

We are now in a position to relate indecomposable module categories over a 
fusion category \cala\ and Lagrangian algebras in its center $\Z(\cala)$. 
Denote by 
  \be
  F:\quad \Z(\cala)\stackrel\simeq\longrightarrow
  \Z(\ENDA(\calm)) \stackrel{\varphi}\longrightarrow \ENDA(\calm)
  \ee
the composition of the equivalence \eqref{braideq} with the forgetful functor.
This is, trivially, a central functor, and the image $A_\calm$ of the tensor 
unit of the monoidal category $\ENDA(\calm)$ 
under the functor $\mathrm R_F$ right adjoint to $F$ is an \'etale
algebra and, as it turns out, even a Lagrangian algebra.

The following proposition shows that these
Lagrangian algebras can be seen as invariants of indecomposable
tensor categories.

\begin{prop} \label{correspondence} {\rm \cite[Prop.\,4.8]{dmno}}\\
For any fusion category $\mathcal{A}$ there is a bijection between the sets 
of isomorphism classes of Lagrangian algebras in $\Z(\cala)$ and 
equivalence classes of indecomposable $\mathcal{A}$-modules.
\end{prop}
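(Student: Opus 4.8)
The plan is to produce maps in both directions between the two sets and to check that they are mutually inverse. In one direction, to an indecomposable \cala-module \calm\ we attach the algebra $A_\calm\df\mathrm R_F(\one_{\ENDA(\calm)})\iN\Z(\cala)$, where $F\colon\Z(\cala)\To\ENDA(\calm)$ is the composite of the canonical braided equivalence \eqref{braideq} of Proposition \ref{prop:schau13} with the forgetful functor $\varphi$, exactly as constructed just above the statement. Being a central functor, $F$ equips $\mathrm R_F(\one)$ with the structure of a connected \'etale algebra by remark (i) after Proposition \ref{prop:triple}. That $A_\calm$ is moreover \emph{Lagrangian} is seen by a Perron--Frobenius dimension count: up to the equivalence \eqref{braideq}, $\mathrm R_F$ is the right adjoint of the forgetful functor $\Z(\ENDA(\calm))\To\ENDA(\calm)$, which multiplies Perron--Frobenius dimensions by $\fpdim\ENDA(\calm)$, so $\fpdim A_\calm\eq\fpdim\ENDA(\calm)\eq\fpdim\cala$ (the last equality by Morita invariance of $\fpdim$); hence $(\fpdim A_\calm)^2\eq(\fpdim\cala)^2\eq\fpdim\Z(\cala)$, which by remark (ii) after the definition of Lagrangian algebras is exactly the Lagrangian condition. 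Since every step is canonical, an equivalence $\calm\simeq\calm'$ of \cala-modules induces an isomorphism $A_\calm\cong A_{\calm'}$, so this assignment descends to equivalence classes.

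In the other direction, let $L$ be a Lagrangian algebra in $\Z(\cala)$. Since \cala\ is a fusion category, $\Z(\cala)$ is non-degenerate, so the category $\calb_L\df\Z(\cala)_L$ of $L$-modules in $\Z(\cala)$ is again a fusion category, and by remark (iii) after Proposition \ref{prop:triple} the lift $\widetilde{\Ind_L}\colon\Z(\cala)\To\Z(\calb_L)$ is a braided equivalence. Proposition \ref{2Morita} then gives an equivalence of bicategories \cala-\MOD\ $\simeq$ $\calb_L$-\MOD; choosing the invertible $(\cala,\calb_L)$-bimodule category that implements the underlying Morita equivalence (equivalently, realizing $\calb_L$ as the category of $B$-bimodules in \cala\ for a suitable algebra $B\iN\cala$ and taking the \cala-module of right $B$-modules) and forgetting the $\calb_L$-action produces an indecomposable \cala-module $\calm_L$ with $\ENDA(\calm_L)\simeq\calb_L$. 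One checks that $\calm_L$ is determined up to \cala-module equivalence, the point being that the \emph{braided} equivalence $\widetilde{\Ind_L}$ rigidifies the Morita context.

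To see the two assignments are mutually inverse, start from \calm\ and write $\calb\eq\ENDA(\calm)$; then $\mathrm R_F$ factors as the inverse of \eqref{braideq} followed by the right adjoint $I$ of $\varphi_\calb\colon\Z(\calb)\To\calb$, and $I(\one_\calb)$ is precisely the canonical Lagrangian algebra of \calb\ inside $\Z(\calb)$. Using the standard fact that a fusion category is recovered as the category of modules over its canonical Lagrangian algebra, together with \eqref{braideq}, one obtains $\Z(\cala)_{A_\calm}\simeq\Z(\calb)_{I(\one_\calb)}\simeq\calb\eq\ENDA(\calm)$ as fusion categories, compatibly with the module structures, whence $\calm_{A_\calm}\simeq\calm$. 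Conversely, given a Lagrangian algebra $L$, transporting $\widetilde{\Ind_L}$ through the canonical equivalence $\Z(\cala)\simeq\Z(\ENDA(\calm_L))\simeq\Z(\calb_L)$ identifies the central functor $F$ built from $\calm_L$ with $\Ind_L\colon\Z(\cala)\To\Z(\cala)_L$; since the right adjoint of $\Ind_L$ is the forgetful functor and it sends the unit $L$-module to $L$ itself, we conclude $A_{\calm_L}\cong L$.

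The delicate point --- and the main obstacle --- is this last step: verifying the compatibility of the three canonical constructions entering the argument, namely the braided equivalence \eqref{braideq} of Proposition \ref{prop:schau13}, the passage from a central functor to an \'etale algebra via the right adjoint evaluated on the unit, and the identification of a fusion category with the category of modules over its canonical Lagrangian algebra. Each of these is a coherence-type statement, and the real content lies in checking that the two assignments are genuinely inverse as maps of \emph{sets} --- not merely that Perron--Frobenius dimensions match --- which forces one to track all the identifications carefully; the cleanest organization is to package everything as an equivalence of bicategories between \cala-\MOD\ and a bicategory of Lagrangian algebras in $\Z(\cala)$ with bimodule-category-valued morphisms.
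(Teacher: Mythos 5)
Your construction coincides with the paper's: the forward map $\calm\mapsto A_\calm=\mathrm R_F(\one)$ is exactly the one set up in the paragraph preceding the statement, and your inverse direction and the verification rest on Proposition \ref{2Morita}, which is precisely the ingredient the paper names as the basis of the proof (the result itself being quoted from \cite{dmno}). The extra details you supply --- the Perron--Frobenius dimension count for the Lagrangian property and the rigidification of the Morita context by the braided equivalence $\widetilde{\Ind_L}$ --- are correct fleshings-out of that same sketch.
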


\noindent
The proof of this statement is based on Proposition \ref{2Morita}.


\subsection{The Witt group}

One step in the long-standing problem of classifying rational conformal 
field theories is the classification of modular tensor categories.
Recently, the following algebraic structure was established in
the wider context of non-degenerate braided fusion categories
(i.e.\ without assuming a spherical structure):
The quotient of the monoid (with respect to the Deligne product)
of non-degenerate braided fusion categories by its submonoid of Drinfeld 
centers forms a group that
contains as a subgroup the group $\mathfrak W_{\rm pt}$ of the classes of 
non-degenerate pointed braided fusion categories \cite[Sect.\,5.3]{dmno}.
The latter coincides with the classical Witt group \cite{witT1} of metric 
groups, i.e.\ of finite abelian groups equipped with a non-degenerate 
quadratic form. This motivates the

\begin{defi} \cite[Defs.\,5.1\,\&\,5.5]{dmno}\\[1pt]
(i)\, Two non-degenerate braided fusion categories $\C_1$ and $\C_2$ are
called \emph{Witt equivalent} iff there exists a braided equivalence 
$\C_1 \boti \Z(\cala_1) \,{\simeq}\,\, \C_2 \boti \Z(\cala_2)$ with
suitable fusion categories $\cala_1$ and $\cala_2$.
\\[2pt]
(ii)\, The \emph{Witt group} $\mathfrak W$ is the group of Witt equivalence 
classes of non-degenerate braided fusion categories.
\end{defi}

It is not hard to see \cite{dmno} that Witt equivalence is indeed an 
equivalence relation, and that $\mathfrak W$ is indeed an abelian group,
with multiplication induced by the Deligne product. The neutral element 
of $\mathfrak W$ is the class of all Drinfeld centers, and the inverse 
of the class of \C\ is the class of its reverse category $\C^\rev$.

As we will see below, in our considerations the Witt group $\mathfrak W$ 
will play an important role. But we will also be interested in the categories
themselves rather than in their classes in $\mathfrak W$. Moreover, in
our context, the categories whose Witt classes are relevant are even modular.
Accordingly we set:

\begin{defi}\label{def:witt-trv}~\\[1pt]
(i)\, A modular tensor category \C\ is called \emph{Witt-trivial} iff its 
class in the Witt group $\mathfrak W$ is the neutral element of $\mathfrak W$.
\\[2pt]
(ii)\, A \emph{Witt-trivialization} of a modular tensor
category \C\ consists of a fusion category \cala\ and an equivalence
  \be
  \alpha:\quad \C \to \Z(\cala)
  \ee
as ribbon categories.
\end{defi}


\section{Bicategories for boundary conditions}\label{sec:bc}

We are now ready to formulate our proposal for topological boundary
conditions for Reshetikhin-Turaev type topological field theories.
Since a topological field theory of Turaev-Viro type based on
a fusion category \cala\ has a natural description as a
TFT of Reshetikhin-Turaev type based on the Drinfeld center
$\Z(\cala)$, our results cover TFTs of Turaev-Viro type as well.

Recall from the Introduction that the boundary conditions we are going 
to discuss refer to boundaries at which the 
three-dimensional world ends, rather than cut-and-paste boundaries.
As we are working in the Reshetikhin-Turaev framework, in which the
categories labeling three-dimensional regions are modular
categories and thus in particular finitely semisimple, we only allow
for boundary conditions that correspond to finitely semisimple categories 
as well (though not modular and not even braided,
in general, as in two dimensions there is no room for a braiding).

We seize from \cite[Sect.\,5.2]{kaSau2} the idea to analyze what happens
when Wilson lines in the bulk approach the boundary. 
We assume that for a given TFT in the bulk there exists a topological
boundary condition $a$ at the end of the three-dimensional world. 
The two-dimensional boundary can contain Wilson lines. These
Wilson lines can carry insertions, and for this reason they are
labeled by the objects of a category \Wilsa. Boundary Wilson lines can 
be fused, and accordingly \Wilsa\ has the structure of a monoidal category, 
and moreover, owing to the fact that the Wilson lines are topological, 
this comes with dualities.
On the other hand, this category is not braided, in general, since there 
does not exist a natural way to `switch' two boundary Wilson lines without
leaving the boundary which is two-dimensional.

However, there are Wilson lines in the nearby bulk as well; they are labeled 
by some modular tensor category \C\ (the same that labels the bulk region
adjacent to the boundary). The category of bulk Wilson lines is 
in particular braided, since Wilson lines can be switched in the 
three-dimensional region. Now part of what is to be meant by a
boundary condition is to be able to tell what happens when the boundary is
approached from the bulk. Thus we postulate that for a consistent boundary 
condition there should exist a process of adiabatically moving Wilson lines 
in the bulk to the boundary, whereby they turn into boundary Wilson lines. 
Put differently, we postulate that there is a functor
  \be
  \Fto a:\quad \C \to \Wilsa \,.
  \labl{Fbulktoa}
Furthermore, the following two processes should yield equivalent results:
On the one hand, first fusing two bulk Wilson lines in the bulk and then
bringing the so obtained single bulk Wilson line to the boundary; and on
the other hand, first moving the two bulk Wilson lines separately to the
boundary and then fusing them as boundary Wilson lines inside the boundary.
Schematically, showing a two-dimensional section perpendicular to the boundary,
the situation looks as follows:
  \be
  \parbox{100pt}{ \xy
  (20,20)*{\includegraphics[scale=0.4]{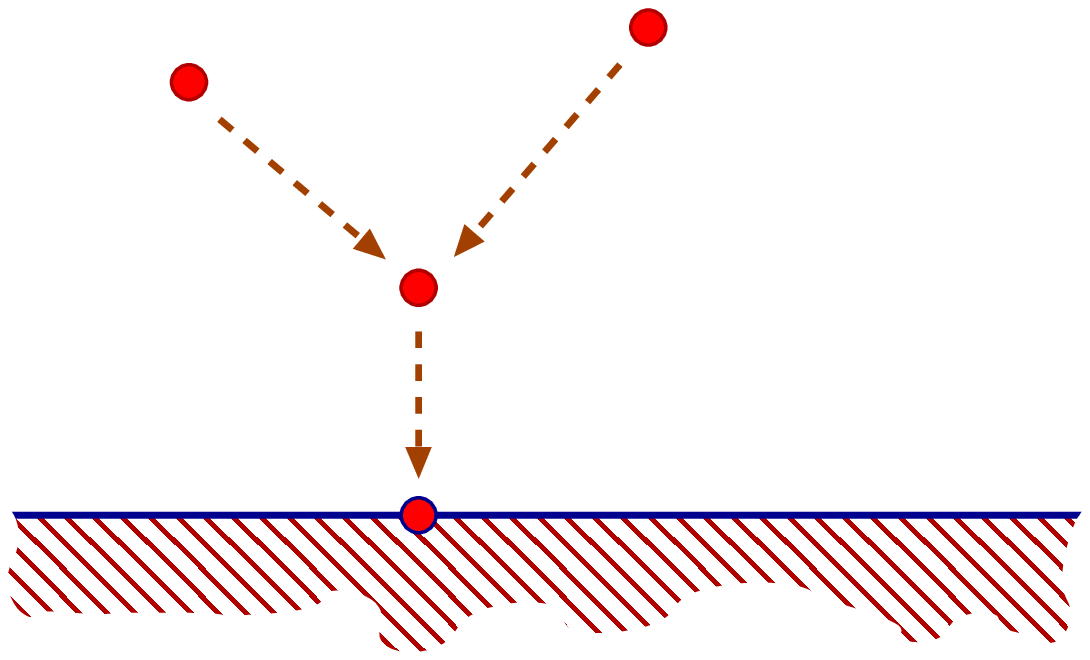}};
  (3.0,32.2)*{\scriptstyle U};
  (19.6,23.2)*{\scriptstyle U{\otimes}V};
  (22.9,17.4)*{\begin{turn}{15}$\scriptstyle \Fto a(U{\otimes}V)$\end{turn}};
  (26.1,34.2)*{\scriptstyle V};
  (90,20)*{\includegraphics[scale=0.4]{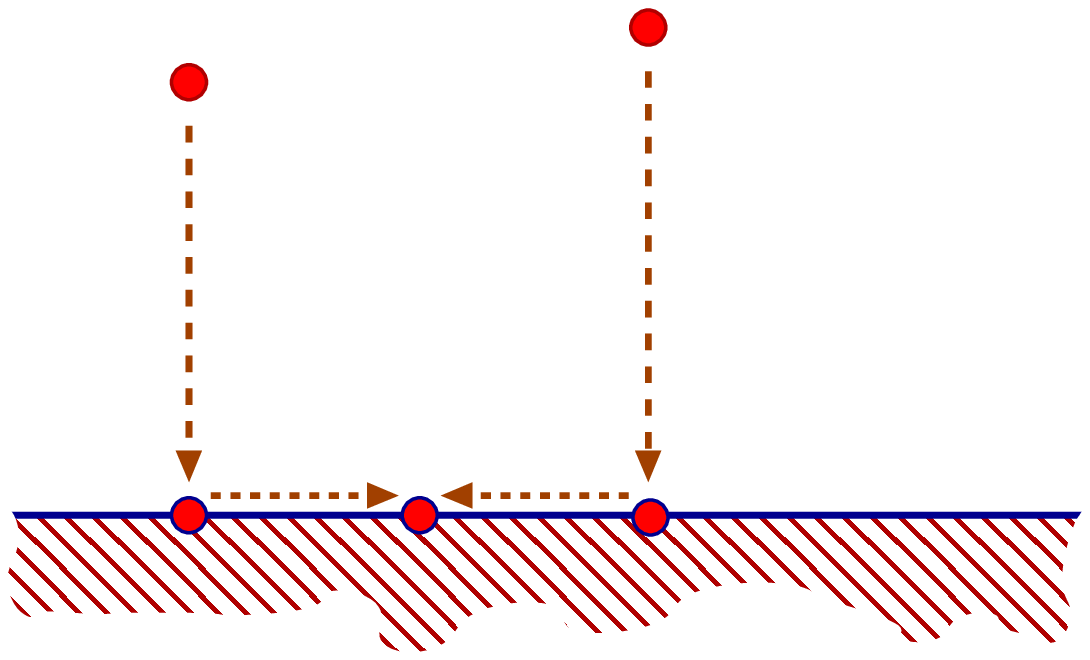}};
  (69.0,16.5)*{\scriptstyle \Fto a(U)};
  (73.0,32.2)*{\scriptstyle U};
  (86.6,27.1)*{\begin{turn}{80}$\scriptstyle \Fto a(U)\oti \Fto a(V)$\end{turn}};
  (96.3,34.4)*{\scriptstyle V};
  (100.3,16.5)*{\scriptstyle \Fto a(V)};
  \endxy }
  \ee
Put differently, the functor \eqref{Fbulktoa} obeys
  \be
  \Fto a(U \oti V) \cong \Fto a(U) \oti \Fto a(V) \,,
  \ee
with coherent isomorphisms, i.e.\ the functor $\Fto a$
has the structure of a tensor functor. From multiple fusion, one
concludes the existence of associativity constraints.
Moreover, we should get the same result when homotopies are applied to 
Wilson lines in the boundary as when they are applied in the bulk. 
Put differently, the functor $\Fto a$ should respect dualities.

The next consideration shows that $F_{\text{bulk}\to a}$ has even more
structure. Consider again the situation that a bulk Wilson line $U \iN \C$ 
is moved to the boundary, resulting in a boundary Wilson line 
$\Fto a(U) \iN \Wilsa$.
Assume in addition that nearby on the boundary there is already another
parallel boundary Wilson line $M \iN \Wilsa$. Since the process of moving $U$ to
the boundary is supposed to be adiabatic, we should get isomorphic results when
we either move $U$ to the \emph{left} of $M$ and then fuse $\Fto a(U)$ with $M$,
or else move $U$ to the \emph{right} of $M$ and then fuse $\Fto a(U)$ with $M$,
as indicated in the following picture.
  \be
  \parbox{100pt}{ \xy
  (20,20)*{\includegraphics[scale=0.4]{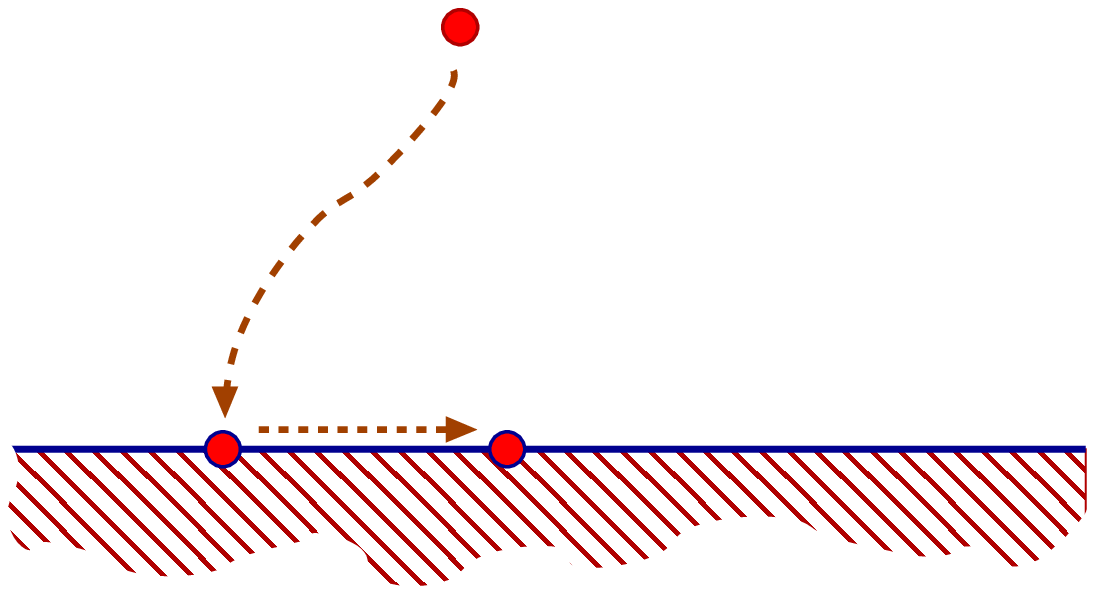}};
  (0.9,18.7)*{\begin{turn}{-20}$\scriptstyle \Fto a(U)$\end{turn}};
  (19.0,32.8)*{\scriptstyle U};
  (20.1,17.9)*{\scriptstyle M};
  (90,20)*{\includegraphics[scale=0.4]{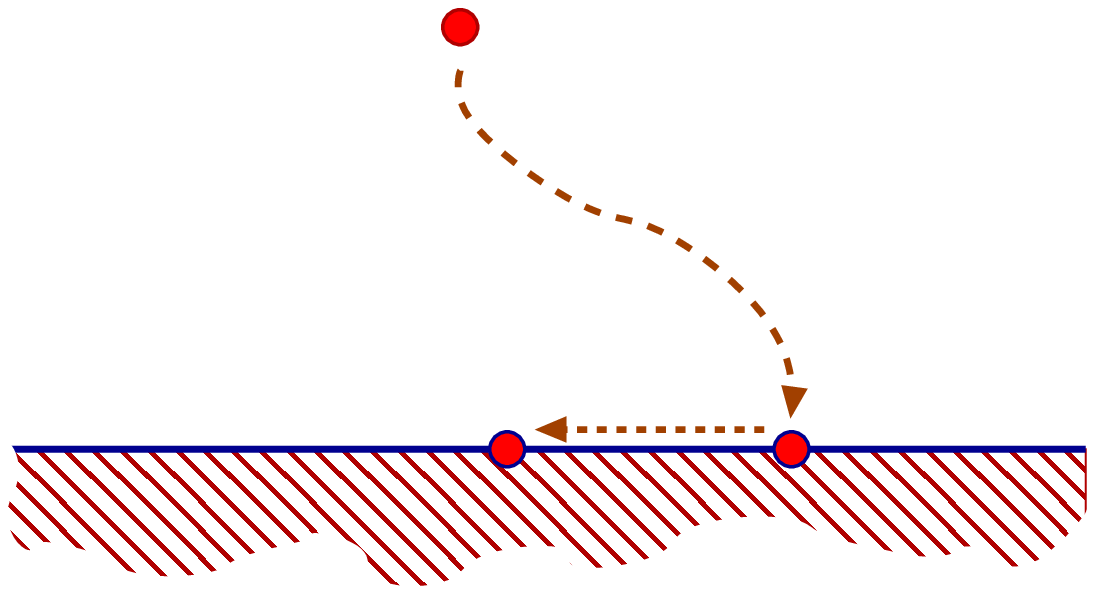}};
  (89.0,32.8)*{\scriptstyle U};
  (88.5,18.0)*{\scriptstyle M};
  (105.9,18.7)*{\begin{turn}{20}$\scriptstyle \Fto a(U)$\end{turn}};
  \endxy }
  \ee
Put differently, we expect a natural isomorphism
  \be
  F_{\to a}(U) \oti M \stackrel\cong\to M \oti F_{\to a}(U) \,.
  \ee
The following argument shows that these isomorphisms endow the functor
$\Fto a$ with the structure of a central functor in the sense of Definition 
\ref{def:centfun}. Property (i) of a central functor is the statement that
for a boundary Wilson line that has been obtained by the adiabatic process, 
the interchange with another such Wilson line comes from the braiding of 
bulk Wilson lines. Property (ii) of a central functor, which may be
called a boundary Yang-Baxter property, is a consequence of the homotopy 
equivalence of two different processes in the bulk: either moving the 
Wilson line in a single step past two boundary Wilson lines, or else
doing it in two separate steps. Property (iii) is seen similarly,
this time with two bulk Wilson lines involved.

According to lemma \ref{lem:central} such a structure is, in turn,
equivalent to a lift of $F_{\text{bulk}\to a}$ to a braided functor
  \be
  \tFto a:\quad \C \to \Z(\Wilsa)
  \labl{tFbulktoa}
from \C\ to the Drinfeld center of the fusion category \Wilsa.

The two-dimensional physics of the boundary surface does not provide any
natural reason for such a half-braiding rule to exist. The only possible 
natural origin of such a rule is thus that it is related to the half-braiding 
in the three-dimensional bulk, via
the processes encoded in the functor $\Fto a$. Accordingly there
should not exist any systematic rule of moving a boundary Wilson line $M$ 
to the other side of a neighbouring boundary Wilson line, except through
the fact that $M$ secretly is a bulk Wilson line that has been brought to the
boundary (so that the rule comes from the process of first bringing it again
into the bulk, moving it around there, and then moving it back to the boundary).

A boundary Wilson line is labeled by an object of $\Wilsa$; a systematic
rule of moving a boundary Wilson line $M$ to the other side of a neighbouring 
boundary Wilson line constitutes a half-braiding $c_{M,-}$ on $\Wilsa$ for the 
object $M$. The pair $(M,c_{M,-})$ is thus just an object in the Drinfeld
center $\Z(\Wilsa)$. Put differently, 
the functor \eqref{tFbulktoa} is essentially surjective.

Similarly, no information about the bulk should be lost when a bulk Wilson 
line is brought to the boundary, provided one remembers the way the Wilson 
line can wander within the bulk to the other side of any other boundary 
Wilson line. This principle applies likewise to insertions on the Wilson 
lines. In the bulk, such insertions are morphisms in $\calc$; for boundary 
Wilson lines, we can only allow morphisms that are compatible with the 
rule to switch the boundary Wilson line with any other boundary Wilson 
line. In other words, we only allow those morphisms of $\Wilsa$ that are 
compatible with the half-braiding, i.e.\ we only consider morphisms in 
$\Z(\Wilsa)$. Put differently, the functor $\tFto a$ \eqref{tFbulktoa} is 
fully faithful, and thus, being also essentially surjective, it is a 
braided \emph{equivalence}:
  \be
  \tFto a:\quad \C \stackrel\simeq\longrightarrow \Z(\Wilsa) \,,
  \label{tFto}
  \ee
{}From this equivalence, we conclude that boundary conditions of the type
we consider can only exist for a bulk theory that is Witt-trivial in the
sense of definition \ref{def:witt-trv}. Even more,
the boundary data are given by a Witt-trivialization.

Once one has understood one boundary condition in a physical theory,
frequently the way is open to understand other boundary conditions as well.
Thus let us assume that there exists another
boundary condition besides $a$. At this point we do \emph{not}, however,
assume that this boundary condition $b$ comes with a central functor
$F_{\to b}$ as well, but rather perform an analysis purely within the boundary.
Consider a generalized Wilson line that separates the boundary condition
$a$ on the left from $b$ on the right. Such Wilson lines can carry local 
field insertions as well; hence we describe them in terms of a category 
$\calw_{a,b}$. We can fuse such a Wilson line with a Wilson line from 
$\Wilsa$ to the left of it. This gives again a Wilson line separating 
the boundary condition $a$ from $b$ and thus an object in $\calw_{a,b}$.
We thus get on the category $\calw_{a,b}$ the structure of a module category 
over $\calw_a$. 

By a similar argument, the category $\calw_b$ of boundary Wilson lines 
separating the boundary condition $b$ from itself has to act on the 
Wilson lines in $\calw_{a,b}$ from the right. Put differently, $\calw_{a,b}$ 
is a right $\calw_b$-module. On the other hand, $\calw_{a,b}$ is already 
naturally a right module category over the category 
$\calw_{a,b}^* \eq \END_{\calw_a}(\calw_{a,b})$
of module endofunctors. We now invoke a principle of naturality
and require that this category describes the tensor category
of generalized boundary Wilson lines for the boundary condition $b$,
i.e.\ that $\calw_b \,{\simeq}\, \calw_{a,b}^*$.

The latter postulate can only make sense if the fusion category 
$\calw_{a,b}^*$ comes with a Witt-trivialization of the bulk category 
$\calc$ as well, i.e.\ if we have a canonical equivalence 
  \be
  \calc \simeq \Z(\calw_{a,b}^*) 
  \label{3.9}  \ee
of braided categories. According to Proposition \ref{prop:schau13}
this is indeed the case. This can be seen as a justification of our 
naturality principle by which we identified Wilson lines with module functors.

To obtain another check of our proposal,
we next consider a trivalent vertex in a boundary, with one incoming
bulk Wilson line labeled by $U\iN\C$, one incoming boundary Wilson 
line labeled by $W_1\iN \calw_a$ and one outgoing boundary Wilson line 
$W_2\iN\calw_a$.  According to our general picture the three-valent 
vertex should be labeled by an element of a vector space obtainable as a 
morphism space. We can realize this vector space in terms of morphisms 
in the category $\calw_a$, provided that there is a mixed tensor product
  \be
  \C \times \calw_a \longrightarrow \calw_a
  \ee
and take the trivalent vertex to be labeled by an element of
$\Hom_{\calw_a}(U\oti W_1,W_2)$. Put differently,
we need the structure of a $\C$-module on $\calw_a$.
To determine what module category is relevant, we invoke topological 
invariance of the bulk Wilson line so as to have it running parallel 
with the boundary before it enters the vertex. We then apply the
adiabatic process described by the functor $\Fto a$ to the piece 
parallel to the surface, thereby turning the bulk Wilson line with 
label $U\iN\C$ into a boundary Wilson line with label $\Fto a(U)$. 
This way we reduce the problem of a trivalent vertex involving a 
bulk Wilson line to the one of a trivalent vertex involving only boundary 
Wilson lines. The relevant vector space is thus $\Hom_{\calw_a}
(\Fto a(U)\otimes W_1,W_2)$. Put differently, we use the $\C$-module 
structure on $\calw_a$ that is induced by pullback of the regular 
module category along the monoidal functor $\Fto a:\C\To\calw_a$
or, what is the same, along the monoidal functor
  \be
 \xymatrix{
  \C \ar^{\tFto a~~~~}[r] 
  & \Z(\calw_a) \ar^{~~\varphi_{\!\calw_a}^{}}[r]
  & \calw_a
  } \ee
from \C\ to $\calw_a$. 

It is important to note that this way one does \emph{not} obtain \emph{all} 
\C-modules; thus our results lead to a selection principle that singles out 
an interesting subclass of \C-modules.
This can be seen already in simple examples, e.g.\ when $\calw_a$ is the
category of finite-dimensional representations of a finite group $G$,
so that $\Z(\cala)$ is the category of finite-dimensional representations of 
the double $D(G)$ \cite[Thm.\,3.1]{ostr5}. For instance, for $G\eq \zet_2$,
there are two indecomposable \cala-modules  (called `rough' and `smooth' in 
\cite{kiKon}), but six indecomposable $\Z(\cala)$-modules.

What we have managed so far is to use one given topological boundary condition
to obtain also other topological boundary conditions. This raises
the question of whether we can obtain \emph{all} topological boundary conditions 
this way. Suppose we are given two different boundary conditions and thus two 
braided equivalences
  \be
  \C\stackrel\simeq\longrightarrow \Z(\cala_1) \quad\text{ and }\quad
  \C\stackrel\simeq\longrightarrow \Z(\cala_2) \,. 
  \ee
Then we have a braided equivalence $\Z(\cala_1)\simeq\Z(\cala_2)$. By the 
result of \cite{etno3} on 2-Morita theory that we recalled in Proposition 
\ref{2Morita}, this implies that the bicategories of $\cala_1$-modules and of 
$\cala_2$-modules are equivalent bicategories. We thus conclude that we can
indeed access every boundary condition from any other boundary condition.

\medskip

We summarize our proposal: Topological boundary conditions for a
topological field theory of Reshetikhin-Turaev type, based on a modular 
tensor category \C, are described by Witt-tri\-vializations of \C,
i.e.\ by braided equivalences $\C\,{\stackrel\simeq\to}\,\Z(\cala)$. Given 
any such trivialization, the bicategory of topological boundary conditions 
can be identified with the bicategory of $\cala$-modules.

One should also appreciate that if a TFT of Turaev-Viro type based on the
fusion category \cala\ is described as a Reshetikhin-Turaev theory based 
on the modular tensor category $\Z(\cala)$, then it comes with a 
trivialization and the category of topological boundary conditions is 
naturally identified with the bicategory of \cala-modules.
In the special case of TFTs of Turaev-Viro type, our results thus reproduce
results of \cite{kiKon} about boundary conditions in such TFTs.


\section{Bicategories for surface defects}\label{sec:sd}

Next we study what kind of mathematical objects describe topological surface 
defects, i.e.\ the topological surface operators considered for abelian 
Chern-Simons theories in \cite{kaSau2} or the domain walls in 
\cite{besW,kiKon}. We consider a surface defect $d$ 
separating two modular tensor categories $\C_1$ and $\C_2$ and follow the 
same line of arguments as for boundary conditions in Section \ref{sec:bc}.
The situation to be studied is displayed schematically in the following picture,
which shows a two-dimensional section perpendicular to the defect surface:
  \be
  \parbox{100pt}{ \xy
  (20,20)*{\includegraphics[scale=0.4]{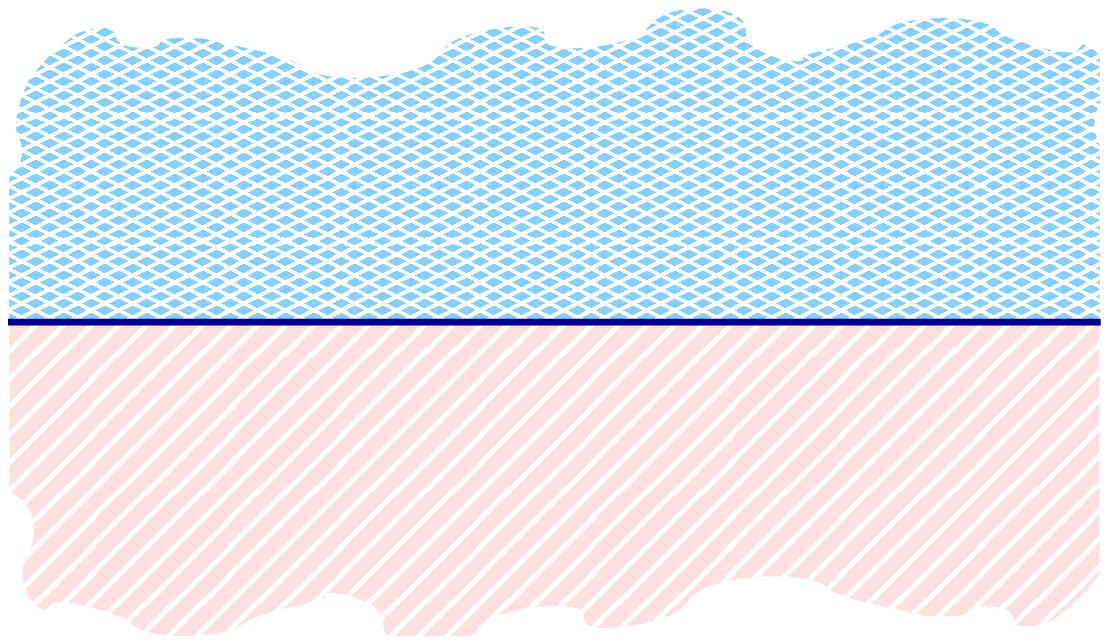}};
  (18,25.8)*{\C_1};
  (43.9,20.5)*{d};
  (23,14.1)*{\C_2};
  \endxy }
  \ee

Again we start with a semisimple fusion category $\calw_d$ of Wilson lines 
that are contained in the defect surface. We refer to such Wilson lines also
as \emph{defect Wilson lines}. In complete analogy with the case of boundary 
conditions, we postulate that there are adiabatic processes of moving
Wilson lines from the bulk on either side of the defect surface 
into the defect surface, whereby they yield defect Wilson lines.
By the same arguments as for boundaries this leads to a central functor
  \be
  \Fto d: \quad \C_1 \to \calw_d
  \ee
and, accounting for relative orientations, to another central functor
  \be
  \Ffrom d: \quad \C_2^\rev \to \calw_d \,.
  \ee

According to Lemma \ref{lem:central}
we thus have two braided functors
  \be
  \tFto d:\quad \C_1 \longrightarrow \Z(\calw_d) 
  \qquad\mbox{and}\qquad
  \tFfrom d:\quad \C_2^\rev\longrightarrow \Z(\calw_d) 
  \ee
as in \eqref{tFto}.
Since $\Z(\calw_d)$ is braided, the images of these two functors commute.
Thus, with the help of the Deligne tensor product, we combine $\tFto d$ 
and $\tFfrom d$ into a single functor
  \be
  \tFtofr d:\quad \C_1\boxtimes \C_2^\rev\longrightarrow \Z(\calw_d) \,.
  \ee
We again invoke a principle of naturality to assert
that the combined functor $\tFtofr d$ is an equivalence of braided categories.

Suppose now that we have a defect Wilson line $W\iN\calw_d$ together with
a rule for exchanging $W$ with any other defect Wilson line $W'\iN\calw_d$. 
The two-dimensional physics of the defect surface does not provide any
natural reason why such a half-braiding rule should exist.
The only possible natural origin of such a rule is that it is related to 
the half-braiding in the three-dimensional parts, using the processes 
encoded in the two functors $\Fto d$ and $\Ffrom d$. 
This amounts to the assumption that the defect Wilson 
line $W$ can be written as a direct sum of fusion products of the form 
$W_1\oti W_2$, where $W_1$ is a defect Wilson line that has been obtained 
by the adiabatic process from $\C_1$, i.e.\ $W_1 \eq \Fto d(L_1)$ for some 
$L_1\iN \C_1$, and similarly $W_2 \eq \Ffrom d(L_2)$ with $L_2\iN \C_2$. 
This shows essential surjectivity of $\tFtofr d$; an argument about 
point-like insertions on Wilson lines that is completely analogous to one 
used for boundary conditions shows that $\tFtofr d$ is fully faithful.

We thus arrive at an equivalence 
  \be
  \C_1^{} \boti \C_2^\rev\stackrel\simeq\longrightarrow \Z(\calw_d)
  \ee
of braided categories that, together with the fusion category $\calw_d$,
is part of the data specifying a surface defect.
We immediately conclude that a topological surface defect joining regions
labeled by the modular tensor categories $\C_1$ and $\C_2$ can only exist
if $\C_1$ and $\C_2$ are in the same Witt class. The existence of such 
an obstruction should not come as a surprise.
Similar effects are, for instance, known from two dimensions: conformal
line defects (and, a fortiori, topological defects) can only exist if the 
two conformal field theories joined by the defect have the same Virasoro 
central charge. In the situation at hand the Witt group -- a concept
that has been introduced for independent reasons,
namely to structure the space of modular tensor categories --
turns out to be the right recipient for the obstruction.

Further, as in the case of boundary conditions, we conclude that other
possible labels of surface defects separating $\C_1$ and $\C_2$ are described 
by module categories over the fusion category $\calw_d$. This also gives
the right bicategorical structure to this collection of surface defects: 
the one of $\calw_d$-modules. By the same type of argument based on 
Proposition \ref{prop:schau13} as in the case of boundaries, it follows that
the other categories of defect Wilson lines come with Witt-trivializations 
of $\C_1^{}\boti \C_2^\rev$ as well,
and that the bicategorical structure does not depend on the choice of $d$.

Again we can consider two special cases to compare our results with existing 
literature. The first -- abelian Chern-Simon theories -- will be relegated to 
section \ref{sec:abCS}. The second is that the TFT on either side of the defect 
surface admits a description of Turaev-Viro type, i.e.\ that both modular 
tensor categories are Drinfeld centers of fusion categories, 
$\C_1\,{\simeq}\,\Z(\cala_1)$ and $\C_2\,{\simeq}\,\Z(\cala_2)$. Using the
identifications
  \be
  \C_1^{}\boxtimes \C_2^\rev \simeq \Z(\cala_1^{})\boxtimes\Z(\cala_2)^\rev
  \simeq \Z(\cala_1\boti\cala_2) \,,
  \ee
where we identify left and right half-braidings, shows that in this case 
the bicategory of $\C_1$-$\C_2$ surface defects can be identified with the
bicategory of $\cala_1$-$\cala_2$-bimodules. Thus in the special case of 
TFTs of Turaev-Viro type our results reproduce those of \cite{kiKon}.

Let us explore some consequences of our results.
First we consider the special case that the surface defect separates two 
regions with the same TFT, i.e.\ that $\C_1 \eq \C_2 \,{=:}\, \C$. By the 
characterization of modular tensor categories given in Definition 
\ref{defi:modular}, there is then a distinguished Witt trivialization,
  \be
  \C\boxtimes \C^\rev \stackrel\simeq \longrightarrow \Z(\calc) \,,
  \ee
which is obtained by using the braiding of the categories \C\ and of $\C^\rev$, 
respectively, to embed them into $\Z(\C)$. This specific surface defect
can be interpreted as a \emph{transparent defect},
very much in the way as a Wilson line labeled by the tensor unit can be 
seen as a transparent Wilson line (and is, for this reason, usually invisible
in a graphical calculus), and accordingly we denote it by the symbol $\tc$.
Indeed, the defect Wilson lines for this 
specific defect are labeled by the objects of \C. The central functor
  \be
  \Fto {\tc}:\quad \calc\to \calc
  \ee
describing a specific
adiabatic process is, as a functor, just the identity. Its
structure of a central functor is then just given by the braiding of \C. In 
physical terms this means that in the adiabatic process labels do not change 
and the braiding is preserved. Similar statements apply to the functor
  \be
  \Ffrom {\tc}:\quad \calc^\rev\to \calc \,,
  \ee
where the structure of a central functor is now given by the opposite
braiding. Thus defect Wilson lines separating the surface defect $\tc $ from 
itself are naturally identified, including the braiding, with ordinary Wilson 
lines in \C. Phrased the other way round: Wilson lines in the three-dimensional
chunk labeled by \C\ can be thought of as being secretly Wilson lines inside 
a defect surface, namely one labeled by the transparent defect $\tc$.

We next discuss implications to surface defects separating \C\ from itself 
of the result of \cite{danO}, reported in Proposition \ref{prop:triple},
that indecomposable \C-modules are in bijection to pairs $(A_1,A_2)$ 
of \'etale algebras in \C\ together with a braided equivalence 
between their full subcategories of local (or dyslectic) modules in \C.
This has the following physical interpretation:
A generic Wilson line in \C\ cannot pass through a given surface defect. 
If, however, a whole package of \C-Wilson lines condenses so as to form 
a local $A_1$-module, the resulting Wilson line can pass through the 
surface defect and reappear on the other side as a condensed package 
of \C-Wilson lines that forms a local $A_2$-module.

This is exactly the type of structure needed in the application of surface 
operators in the TFT construction of the correlators of rational conformal
field theories \cite{fuRs4}, following the suggestions of \cite{kaSau3}. 
The process then puts the fact \cite[Sect.\,4]{mose2}
that the general structure of the bulk partition function is 
``automorphism on top of extension'' in the appropriate and complete setting.
This picture can be easily extended to heterotic theories, for which 
left- and right-moving degrees are in different module tensor categories
$\C_{\rm l}$ and $\C_{\rm r}$. In particular, the obstruction to the existence 
of a heterotic TFT construction based on a pair 
$(\C_{\rm l},\C_{\rm r})$ of modular tensor categories is again captured by 
the Witt group: $\C_{\rm l}$ and $\C_{\rm r}$ must lie in the same Witt class.

The transmission of (bunches of) Wilson lines should be seen as a 
three-dimensional analogue of the following process in two dimensions: 
A topological defect line can wrap around bulk insertions in one full 
conformal field theory to produce a bulk insertion in another theory.
This effect associates a map on bulk fields to any topological defect line. 
This map has, in turn, been instrumental in obtaining classification 
results for defects \cite{fGrs} and in understanding their target space
formulation \cite{fusW}. We expect that the transmission of Wilson lines 
can be used to a similar effect in the situation at hand. That the 
transmission data describe, by Proposition \ref{correspondence}, the 
isomorphism class of a module category, is an encouragement for attempting 
similar classifications as in the two-dimensional case. A first example 
of a classification will be presented in Section \ref{sec:abCS}.

Returning to the case of general pairs $(\C_1,\C_2)$ of modular tensor
categories, the forgetful functor $\varphi_{\!\calw_d}$ from the Drinfeld 
center $\Z(\calw_d)$ to the fusion category $\calw_d$ provides us with
a tensor functor
  \be
  \C_1^{}\boxtimes\C_2^\rev\stackrel\simeq\longrightarrow 
  \Z(\calw_d)
  \stackrel{\varphi_{\!\calw_d}}\longrightarrow\calw_d \,.
  \ee
Via pullback along this functor, the category $\calw_d$ of defect 
Wilson lines comes with a natural structure of a $\C_1$-$\C_2$-bimodule 
category. This bimodule structure arises naturally when one considers
three-valent vertices in the defect surface with two defect Wilson lines
and one bulk Wilson line involved.
This structure should also enter in the description of fusion of 
topological surface defects. We leave a detailed discussion of 
fusion to future work and only remark that the transparent defect 
$\tc$ must act as the identity under fusion.

We conclude with a word of warning: While the structure of a 
$\C_1$-$\C_2$-bimodule on the category $\calw_d$ of defect Wilson lines 
can be expected to have a bearing on fusion, the bicategory
$\C_1$-$\C_2$-\BIMOD\ of $\C_1$-$\C_2$-bimodules can\emph{not} provide 
the proper mathematical model for the bicategory of surface defects.
For instance, taking $\C_1 \eq \C_2 \eq \C$, the natural candidate for 
the transparent defect is \C\ as a bimodule over itself. Using
$\C_1$-$\C_2$-\BIMOD\ as a model for the surface defects, the
Wilson lines separating this transparent defect from itself
would correspond to bimodule endofunctors of \C, and the category  
of these endofunctors is equivalent to $\Z(\C)$ and thus, \C\ 
being modular, to the enveloping category $\C\boti\C^\rev$. We would 
then {\em not} recover ordinary Wilson lines as defect Wilson lines 
in the transparent defect. As we will see in the next section,
taking $\C_1$-$\C_2$-\BIMOD\ as the model for the bicategory of 
surface defects would also contradict results of \cite{kaSau2}
for abelian Chern-Simons theories, and of \cite{kiKon} on 
surface defects in TFTs admitting a description of Turaev-Viro type.


\section{Lagrangian algebras and abelian Chern-Simons theories}\label{sec:abCS}

In this section we describe consequences of our proposal in the special case 
of abelian Chern-Simons theories and compare our findings with the results 
of \cite{kaSau2} for this subclass of TFTs. As a new ingredient our discussion 
involves Lagrangian algebras in the modular tensor category \C\ that labels 
the TFT. Recall from Proposition \ref{correspondence} that Lagrangian algebras 
in the Drinfeld center of a fusion category \cala\ are complete invariants of 
equivalence classes of indecomposable \cala-module categories.

If we are just interested in equivalence classes of indecomposable 
boundary conditions, Lagrangian algebras can be used as follows. The 
presence of a topological boundary condition for a modular tensor 
category \C\ requires the existence of a Witt-trivialization 
$\C \,{\simeq}\, \Z(\cala)$ with $\cala$ a fusion category
providing a reference boundary condition.
Indecomposable or elementary boundary conditions are then in bijection with
indecomposable \cala-modules. The latter, in turn, are in bijection 
with Lagrangian algebras in the Drinfeld center $\Z(\cala)$, which is 
just \C. We can thus classify elementary topological boundary conditions
by classifying Lagrangian algebras in $\C$. This is of considerable 
practical interest, since it acquits us of the task to find an explicit 
Witt-trivialization. However, for many explicit constructions it will 
be important to have the full bicategorical structure at our disposal, 
and this requires an explicit Witt-trivialization. 

The situation for topological surface defects separating modular 
tensor categories $\C_1$ and $\C_2$ is analogous: the classification 
of equivalence classes amounts to classifying Lagrangian algebras in 
$\C_1\boti \C_2^\rev$. Again, this avoids finding a Witt-trivialization,
but does not give direct access to the full bicategorical structure.

\medskip

To make contact to the situation studied in \cite{kaSau2} we first
recall some basic facts about abelian Chern-Simons theories and their 
relation to finite groups with quadratic forms.
Let $\Lambda$ be a a free abelian group of rank $n$ and
$V \,{:=}\,\Lambda \,{\otimes_{\mathbb Z}}\, \mathbb{R}$ the
corresponding real vector space. Denote by $\mathbb{T}_{\Lambda}$ the 
torus $V/\Lambda$. The classical abelian Chern-Simons theory with structure 
group $\mathbb{T}_{\Lambda}$ is completely determined by the choice of a 
symmetric bilinear form $K$ on $V$ whose restriction to the additive subgroup
$\Lambda$ is integer-valued and even. We call the pair $(\Lambda,K)$ an
even \emph{lattice} of rank $n$.

\begin{defi} \mbox{}\\[1pt]
(i)\, A \emph{bicharacter}, with values in $\complexx$, on a finite abelian 
group $D$ is a bimultiplicative map $\beta\colon D\,{\times} D\To\complexx$. 
\\
A \emph{symmetric} bicharacter, or symmetric bilinear form, on $D$
is a bicharacter $\beta$ satisfying $\beta(x, y) \eq \beta(y, x)$ for all
$x,y\iN D$. 
\\[3pt]
(ii)\, A \emph{quadratic form} on a finite abelian group $D$ is a function 
$q \colon D\To \complexx$ such that $q(x) \eq q(x^{-1})$ and such that 
$\beta(x,y) \,{:=}\, q(x{\cdot}y)/q(x)\, q(y)$ is a symmetric bilinear form. A
\emph{quadratic group} is a finite abelian group endowed with a quadratic form.
\end{defi}

To the lattice $(\Lambda,K)$ we associate a finite group with a quadratic 
form in the following way. Set $\Lambda^{*} \,{:=}\, 
\Hom_{\mathbb Z}(\Lambda,\mathbb{R})$, and denote by $\Im\,K$ the image of 
$\Lambda$ in $\Lambda^{*}$ under the canonical map 
$K\colon \Lambda\to{\Lambda^{*}}$. The finite abelian group 
$D \,{:=}\, \Lambda^{*}/\Im\,K$ is called the \emph{discriminant group} of 
the lattice $(\Lambda,K)$. Since the symmetric bilinear form $K$ is integer 
and even, the group $D$ comes equipped with a quadratic form $q \colon D 
\To \mathbb{C}^\times$, with $q(\mu) \eq \exp(2\pi\ii\, K(\mu,\mu))$. 

Different lattices may give rise to discriminant groups that are 
isomorphic as quadratic groups. As argued in \cite{beMo}, many 
properties of quantum Chern-Simons theory are encoded in the pair $(D,q)$. 
The pair $(D,q)$ determines, in turn, an equivalence class of 
braided monoidal categories, which we denote by $\C(D,q)$.
For completeness we briefly give some details on how this category is 
constructed (for more details see \cite{joSt6,kaSau2}).
First, recall that for any abelian group $A$ there is a bijection
  \be
  H_{\rm{ab}}^{3}(A;\complexx)
  \stackrel\simeq\longrightarrow \textrm{Quad}(A) \,,
  \label{eq:abelian}
  \ee
between the group $\textrm{Quad}(A)$ of quadratic forms on $A$ taking 
values in $\complexx$ and the third abelian cohomology group 
$H_{\rm{ab}}^{3}(A;\complexx)$ of $A$ \cite[Thm.\,3]{macl2}. 
A representative for a  class  in $H_{\rm{ab}}^{3}(A;\complexx)$ 
is given by a pair $(\psi,\Omega)$ consisting of a 3-cocyle $\psi$ in the 
ordinary group cohomology of $A$ and a 2-cochain $\Omega$, satisfying some 
compatibility conditions (which imply the validity of the hexagon axioms for 
the braiding \eqref{cxy} below).  Given a pair $(\psi,\Omega)$ representing
an abelian 3-cocycle, we obtain a quadratic form $q$ on $A$ by setting 
$q(a) \,{:=}\, \Omega(a,a)$ for $a\iN{A}$. This realizes one direction of
the isomorphism (\ref{eq:abelian}). 
 
On the other hand, given a quadratic form $q$
on $A$, we obtain a pre-image  $(\psi,\Omega)$ only upon additional 
choices; one possible choice is an ordered set of generators of 
the abelian group $A$. We will ignore this subtlety in the following
and omit the label $(\psi,\Omega)$ from the notation.

Consider now a quadratic group $(D,q)$ and  choose an abelian 3-cocycle
$(\psi,\Omega)$ representing the quadratic form $q$ in 
$H_{\rm{ab}}^{3}(D;\complexx)$. As an abelian category,
$\C(D,q)$ is the category of finite-dimensional complex $D$-graded 
vector spaces and graded linear maps. The simple objects of this category are 
complex lines $\complex_x$ labeled by group elements $x\iN D$. In particular 
we have $\Hom(\complex_x,\complex_x)\,{\cong}\,\complex$. We equip the category 
$\C(D,q)$ with the tensor product of complex vector spaces, but with 
associator given by the 3-cocyle $\psi$. The 2-cochain $\Omega$ induces a 
braiding $c$ on this monoidal category; the braiding acts on simple objects as
  \be
  \begin{array}{rl}
  c_{xy}:\quad \complex_x \otimes \complex_y &\!\!\stackrel\simeq\longrightarrow\,
  \complex_y \otimes \complex_x \\[4pt]
  v\oti w &\!\! \longmapsto\, \Omega(x,y)\, w \oti v \,.
  \end{array}
  \label{cxy}\ee
The braided pointed fusion category
thus obtained depends, up to equivalence of braided monoidal categories, 
only on the class $[(\psi,\Omega)]$ in abelian cohomology\cite{joSt6}.

Taking the reverse category amounts to replacing the quadratic form $q$ 
by the quadratic form $q^{-1}$ which takes inverse values, i.e.\ 
${(\C(D,q)}^\rev {\cong}\, \C(D,q^{-1})$, while the Deligne product amounts 
to taking the direct sums of the groups and of the quadratic forms. 
In other words, one has 

\begin{lemma}
Let $(D_{1},q_{1})$ and $(D_{2},q_{2})$ be finite groups with quadratic 
forms. Then
  \be\label{D1boxD2=D1+D2}
  \C(D_{1},q_{1})\boxtimes {\C(D_{2},q_{2})}^\rev
  \simeq\, \C(D_1{\oplus}{D_2},q_{1}{\oplus}q^{-1}_{2})
  \ee
as braided monoidal categories.
\end{lemma}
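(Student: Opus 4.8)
The plan is to assemble the claimed equivalence from the already-established pieces: the description of $\C(D,q)$ in terms of an abelian $3$-cocycle $(\psi,\Omega)$, the behaviour of the reverse category under $q\mapsto q^{-1}$, and the compatibility of the Deligne product with direct sums of quadratic groups. First I would fix abelian $3$-cocycles $(\psi_1,\Omega_1)$ and $(\psi_2,\Omega_2)$ representing $q_1$ and $q_2$ respectively, so that $\C(D_1,q_1)$ and $\C(D_2,q_2)$ are realized as categories of $D_1$- and $D_2$-graded vector spaces with the explicit associators and braidings recalled before the statement. The reverse category $\C(D_2,q_2)^{\rev}$ is then, by the remark preceding the lemma, braided equivalent to $\C(D_2,q_2^{-1})$, represented by the cocycle $(\psi_2^{-1},\Omega_2^{-1})$ whose associated quadratic form is $q_2^{-1}$.

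Next I would compute the Deligne product. Since both factors are semisimple with simple objects the lines $\complex_x$ ($x\iN D_1$) and $\complex_y$ ($y\iN D_2$), the recalled fact that $\C_1\boti\C_2$ is semisimple with simple objects $S_i\boti T_j$ gives that $\C(D_1,q_1)\boti\C(D_2,q_2)^{\rev}$ has simple objects $\complex_x\boti\complex_y$ indexed by $(x,y)\iN D_1\oplus D_2$; this is visibly the same indexing set as the simple objects of $\C(D_1\oplus D_2, q_1\oplus q_2^{-1})$. So as abelian categories the two sides agree. It then remains to match the monoidal and braided structure: the tensor product on the Deligne product is componentwise, hence $(\complex_x\boti\complex_y)\oti(\complex_{x'}\boti\complex_{y'})\cong \complex_{xx'}\boti\complex_{yy'}$, matching the grading addition on $D_1\oplus D_2$; the associator is the ``product'' $3$-cocycle $\psi_1\cdot\psi_2^{-1}$ pulled back along the two projections, and the braiding on simple objects is multiplication by $\Omega_1(x,x')\,\Omega_2^{-1}(y,y')$, i.e.\ the value at $((x,y),(x',y'))$ of the $2$-cochain $\Omega_1\oplus\Omega_2^{-1}$ for the group $D_1\oplus D_2$. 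The pair $(\psi_1\cdot\psi_2^{-1},\,\Omega_1\oplus\Omega_2^{-1})$ is an abelian $3$-cocycle on $D_1\oplus D_2$ whose associated quadratic form sends $(x,y)$ to $\Omega_1(x,x)\,\Omega_2^{-1}(y,y)=q_1(x)\,q_2(y)^{-1}=(q_1\oplus q_2^{-1})(x,y)$, as required.

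To finish, I would invoke that $\C(D,q)$ depends up to braided monoidal equivalence only on the abelian cohomology class of its representing $3$-cocycle, a fact recalled just before the statement (following \cite{joSt6}). Since the explicit $3$-cocycle obtained on the Deligne product represents the class corresponding to $q_1\oplus q_2^{-1}$ under the isomorphism \eqref{eq:abelian}, the Deligne product is braided monoidally equivalent to $\C(D_1\oplus D_2,\,q_1\oplus q_2^{-1})$. Alternatively, and perhaps more cleanly, one can avoid cocycle bookkeeping altogether by noting that $(D,q)\mapsto \C(D,q)$ is, by construction, a braided-monoidal-functor-valued assignment that is compatible with the monoidal structures: direct sum of quadratic groups on one side and Deligne product on the other, together with the reversal identity $\C(D,q)^{\rev}\simeq\C(D,q^{-1})$ already recorded; combining these two compatibilities immediately yields \eqref{D1boxD2=D1+D2}.

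The main obstacle I expect is the bookkeeping of the associativity and braiding constraints under the Deligne product: one must check that the externally-defined product $3$-cocycle on $D_1\oplus D_2$ is genuinely an abelian $3$-cocycle (the hexagon-compatibility of $\psi_1\cdot\psi_2^{-1}$ with $\Omega_1\oplus\Omega_2^{-1}$), and that the braided equivalence constructed on simple objects extends coherently to all objects. Both are routine given the explicit formulas \eqref{cxy} and the functoriality of the Deligne product \cite[Prop.\,5.5]{deli}, but they are the only places where something genuinely needs to be verified rather than merely cited; everything else is a matter of identifying indexing sets and invoking the dependence of $\C(D,q)$ only on the abelian cohomology class.
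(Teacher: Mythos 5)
Your argument is correct and takes essentially the same route as the paper, which states the lemma without a separate proof and justifies it by exactly the two observations you elaborate: reversal corresponds to $q\mapsto q^{-1}$, and the Deligne product corresponds to the direct sum of the quadratic groups, with everything controlled by the abelian cohomology class via \cite{joSt6}. One small imprecision worth noting: the reverse category literally retains the associator $\psi_2$ and has braiding $(x,y)\mapsto\Omega_2(y,x)^{-1}$ rather than being given by $(\psi_2^{-1},\Omega_2^{-1})$, but since both pairs represent the quadratic form $q_2^{-1}$, the invariance you invoke (dependence only on the class in $H^3_{\rm ab}$) absorbs the difference.
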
 

A quadratic form $(D,q)$ is said to be \emph{non-degenerate} iff the 
associated symmetric bilinear form is non-degenerate in the sense that 
the associated group homomorphism $D\To \Hom(D,\complexx)$ is an 
isomorphism. A basic fact about categories of the type $\C(D,q)$ is 

\begin{lemma}{\rm \cite[Sec.\,5.3.]{dmno}}\\
The braided monoidal category $\C(D,q)$ is modular iff 
the quadratic form $q$ is non-degenerate.
\end{lemma}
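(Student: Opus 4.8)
The plan is to prove that $\C(D,q)$ is modular if and only if the quadratic form $q$ is non-degenerate, by relating the categorical non-degeneracy condition $\C(D,q)' \simeq \Vectk$ (which by the cited result of \cite{dgno2}, Prop.\,3.7, is equivalent to modularity, since $\C(D,q)$ is premodular via its canonical ribbon structure) directly to non-degeneracy of the bilinear form $\beta(x,y) := q(xy)/q(x)q(y)$. First I would recall that the simple objects of $\C(D,q)$ are the lines $\complex_x$ for $x \iN D$, that $\complex_x \oti \complex_y \cong \complex_{xy}$, and that the double braiding (monodromy) on simples is computed from \eqref{cxy}: $c_{\complex_y,\complex_x} \cir c_{\complex_x,\complex_y}$ acts on $\complex_x \oti \complex_y$ as multiplication by the scalar $\Omega(x,y)\,\Omega(y,x) = \beta(x,y)$. (Here one uses that $\Omega(x,y)\Omega(y,x) = q(xy)/q(x)q(y) = \beta(x,y)$, which follows from the definition of the abelian $3$-cocycle and the fact that $q(x) = \Omega(x,x)$.)

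Next I would identify the centralizer $\C(D,q)'$. An object of $\C(D,q)$ lies in $\C(D,q)'$ iff it centralizes every simple $\complex_y$, i.e.\ iff its monodromy with every $\complex_y$ is the identity. Since every object is a direct sum of simples, and a simple $\complex_x$ centralizes $\complex_y$ iff $\beta(x,y) = 1$, we get that $\C(D,q)'$ is the full subcategory generated by those $\complex_x$ with $x$ in the radical $D^\perp := \{x \iN D : \beta(x,y) = 1 \text{ for all } y \iN D\}$ of $\beta$. Thus $\C(D,q)' \simeq \Vectk$ (meaning it is equivalent to $\Vectk$ as an abelian category, hence has exactly one simple object up to isomorphism) if and only if $D^\perp = \{e\}$, i.e.\ if and only if the group homomorphism $D \To \Hom(D,\complexx)$, $x \mapsto \beta(x,-)$, is injective. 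Since $D$ is finite and $\Hom(D,\complexx)$ has the same order as $D$, injectivity is equivalent to bijectivity, which is precisely the definition of $q$ (equivalently $\beta$) being non-degenerate.

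Combining these two steps: $\C(D,q)$ is modular $\iff$ $\C(D,q)$ is non-degenerate (by \cite{dgno2}) $\iff$ $\C(D,q)' \simeq \Vectk$ $\iff$ $D^\perp = \{e\}$ $\iff$ $q$ is non-degenerate. This completes the argument. I expect the only genuinely delicate point to be the bookkeeping in the monodromy computation, i.e.\ verifying carefully that the double braiding scalar on $\complex_x \oti \complex_y$ really equals $\beta(x,y)$ and not some twist-corrected variant; this requires keeping track of the conventions for the abelian $3$-cocycle $(\psi,\Omega)$ and the relation $q(x) = \Omega(x,x)$, $\beta(x,y) = \Omega(x,y)\Omega(y,x)$. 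Everything else is formal once the centralizer is identified with the subcategory supported on the radical of $\beta$.
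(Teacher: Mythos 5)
Your proposal is correct. The paper itself offers no proof of this lemma, only the citation to \cite[Sec.\,5.3]{dmno}, and your argument is the standard one that underlies that reference: the monodromy of simples $\complex_x\oti\complex_y$ is the scalar $\Omega(x,y)\,\Omega(y,x)\eq\beta(x,y)$, so the M\"uger center $\C(D,q)'$ is supported exactly on the radical of $\beta$, and triviality of that radical is (by finiteness of $D$) equivalent to $D\To\Hom(D,\complexx)$ being an isomorphism, i.e.\ to non-degeneracy of $q$; combined with \cite[Prop.\,3.7]{dgno2} and the canonical ribbon structure $\theta_{\complex_x}\eq q(x)$ this gives the claim. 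The one non-formal ingredient, the polarization identity $\Omega(x,y)\,\Omega(y,x)\eq q(xy)/q(x)q(y)$, is exactly the content of the Eilenberg--Mac\,Lane isomorphism \eqref{eq:abelian} already quoted in the paper, and you correctly single it out as the only delicate step.
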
 

\medskip

In the present context the role of the modular tensor category $\C(D,q)$ 
is as the category of (bulk) Wilson lines in the Chern-Simons theory 
corresponding to a lattice with discriminant group $(D,q)$. We now 
make our proposal for boundary 
conditions and surface defects explicit in this case. To this end 
we need an explicit description of Lagrangian algebras.

\begin{defi}\cite[Sect.\,2.4]{enoM}\\[2pt]
(i)\, A \emph{metric group} is a quadratic group $(D;q)$ for which
the quadratic form $q$ is non-degenerate.
\\[4pt]
(ii)\, For $U$ a subgroup of a quadratic group $(D,q)$ with symmetric
bilinear form $\beta\colon D\,{\times}\, D\To\complexx$, 
the \emph{orthogonal complement} $U^\perp$ of $U$ is the set of all 
$d\iN D$ such that $\beta(d,u) \eq 1$ for all $u\iN U$. 
\\[2pt]
If $q$ is non-degenerate, $U^\perp$ is isomorphic to $D/U$,
so that $|D| \eq |U|\cdot |U^\perp|$.
\\[4pt]
(iii)\, Let $(D,q)$ be a metric group. A subgroup $U$ of $D$ is said to be 
\emph{isotropic} iff $q(u) \eq 1$ for all $u\iN U$. 
\\[4pt]
(iv)\, For any isotropic subgroup $U$ of a metric group $(D,q)$ there exists
an injection $U\,{\hookrightarrow}\, (D/U)^*$, so that $|U|^2\,{\le}\, |D|$.
An isotropic subgroup $L$ of $D$ is called \emph{Lagrangian} iff $|L|^2 \eq |D|$.
\end{defi}

The concept of a Lagrangian subgroup is linked to Lagrangian algebras by the 
following assertion, which is a corollary of the results in 
\cite[Sect.\,2.8]{dgno2}.

\begin{thm} 
Let $D$ be a finite abelian group with a nondegenerate quadratic form $q$. 
There is a bijection between Lagrangian subgroups of $D$ and Lagrangian 
algebras in \,$\C(D,q)$.
\end{thm}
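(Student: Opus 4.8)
The plan is to exhibit a bijection that sends an isotropic subgroup $L\subseteq D$ to the $D$-graded vector space $A_L \df \bigoplus_{\ell\iN L}\complex_\ell$, equipped with the obvious multiplication $\complex_\ell\oti\complex_{\ell'}\To\complex_{\ell\ell'}$ coming from the group law, and to show that $A_L$ is a Lagrangian algebra in $\C(D,q)$ precisely when $L$ is Lagrangian, and conversely that every connected \'etale algebra in $\C(D,q)$ arises this way. First I would check that $A_L$ is an associative, unital algebra: associativity requires the 3-cocycle $\psi$ to be trivial on triples of elements of $L$, which one can arrange since $\psi$ restricted to $L$ is cohomologically trivial once $q|_L\eq 1$ (alternatively, absorb the cocycle into the structure constants); unitality is immediate from the grading. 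Connectedness $\dimk\Hom(\one,A_L)\eq|{\{\ell\iN L:\ell\eq e\}}|\eq1$ is clear.

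Next I would verify that $A_L$ is commutative and separable, i.e.\ \'etale. Commutativity is exactly the condition that the braiding \eqref{cxy} acts trivially on $\complex_\ell\oti\complex_{\ell'}$ for $\ell,\ell'\iN L$, i.e.\ $\Omega(\ell,\ell')\eq1$; since $q$ is the diagonal restriction of $\Omega$ and $q|_L\eq 1$, polarization (the associated bilinear form $\beta(\ell,\ell')\eq q(\ell\ell')/q(\ell)q(\ell')\eq1$) together with the symmetry built into an abelian $3$-cocycle forces $\Omega|_{L\times L}\eq1$; so isotropy of $L$ is equivalent to commutativity of $A_L$. Separability of $A_L$ in a semisimple category is automatic (every algebra in a fusion category over a field of characteristic zero whose underlying object is rigid is separable — or one simply writes down the obvious bimodule splitting using the grading and the fact that $|L|$ is invertible in $\complex$). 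Thus $L$ isotropic $\iff$ $A_L$ connected \'etale.

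Then I would identify the category of local modules. A module over $A_L$ in $\C(D,q)$ decomposes along cosets of $L$ in $D$, and $\C(D,q)_{A_L}$ is equivalent to the category of $D/L$-graded vector spaces with a twisted structure; the simple local $A_L$-modules are indexed by the characters of $D/L$ on which the relevant monodromy is trivial, which by the standard computation (this is the content of \cite[Sect.\,2.8]{dgno2}) are indexed by $L^\perp/L$. Hence $\C(D,q)_{A_L}^{0}\,{\simeq}\,\Vectk$ iff $L^\perp/L$ is trivial, i.e.\ iff $L\eq L^\perp$, which by the identity $|L|\cdot|L^\perp|\eq|D|$ is equivalent to $|L|^2\eq|D|$, i.e.\ $L$ Lagrangian. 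Equivalently one can invoke Remark\,(ii) after the definition of Lagrangian algebras: $\C_L^0\,{\simeq}\,\Vectk$ $\iff$ $(\fpdim A_L)^2\eq\fpdim\C(D,q)$, and here $\fpdim A_L\eq|L|$ and $\fpdim\C(D,q)\eq|D|$, giving the same conclusion with no module-category computation at all. This makes the forward direction essentially a dimension count.

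Finally I would establish surjectivity: every connected \'etale algebra $A$ in $\C(D,q)$ is of the form $A_L$. Since $\C(D,q)$ is pointed and semisimple, $A\,{=}\,\bigoplus_{d\iN D}\complex_d^{\oplus n_d}$; connectedness gives $n_e\eq1$, and the existence of a nonzero product $\complex_d\oti\complex_{d'}\To A$ together with rigidity (the product is nondegenerate in each variable, as follows from separability/Frobenius structure of \'etale algebras) forces all $n_d\iN\{0,1\}$ and the support $L\df\{d:n_d\eq1\}$ to be a subgroup; commutativity then forces $L$ isotropic. This shows the assignment $L\,{\mapsto}\,A_L$ lands in, and exhausts, connected \'etale algebras, and restricts to a bijection between Lagrangian subgroups and Lagrangian algebras; injectivity is clear since $L$ is recovered as the support of $A_L$. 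The main obstacle I anticipate is the bookkeeping with the abelian $3$-cocycle $(\psi,\Omega)$ — making sure that the associativity and commutativity constraints of $A_L$ can genuinely be trivialized (not merely up to isomorphism) on the isotropic subgroup, and pinning down the twisted structure on $\C(D,q)_{A_L}$ needed for the local-module count; both are handled cleanly by citing \cite[Sect.\,2.8]{dgno2}, or circumvented entirely by the Perron--Frobenius dimension argument of Remark\,(ii), which is why I would present that computation as the core of the proof and relegate the cocycle manipulations to a remark.
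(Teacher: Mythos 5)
The paper offers no proof of this theorem at all: it is stated as ``a corollary of the results in \cite[Sect.\,2.8]{dgno2}'' and left there. Your proposal reconstructs the argument that this citation stands in for, and it is essentially correct. In particular, reducing the Lagrangian condition to the dimension count $(\fpdim A_L)^2 \eq |L|^2 \eq |D| \eq \fpdim\C(D,q)$ via the quoted criterion $\C_L^0\,{\simeq}\,\Vectk \Leftrightarrow (\fpdim(L))^2\eq\fpdim(\C)$ is exactly the right way to sidestep the local-module computation, and your surjectivity argument (connectedness gives $n_e\eq1$, nondegeneracy of the Frobenius pairing forces all multiplicities into $\{0,1\}$ and the support to be a subgroup, and commutativity on $\complex_\ell\oti\complex_\ell$ forces $q(\ell)\eq\Omega(\ell,\ell)\eq1$) is sound. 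One step is stated too strongly: isotropy of $L$ does \emph{not} force $\Omega|_{L\times L}\eq1$ on the nose. Polarization only gives $\Omega(\ell,\ell')\,\Omega(\ell',\ell)\eq1$ together with $\Omega(\ell,\ell)\eq1$; what is actually true is that the restriction of the abelian $3$-cocycle $(\psi,\Omega)$ to $L$ becomes trivial in $H_{\rm{ab}}^{3}(L;\complexx)$, because that group is identified with $\mathrm{Quad}(L)$ and $q|_L\eq1$. Hence one must twist the multiplication on $A_L$ by a $2$-cochain to make it simultaneously associative and commutative, and the resulting algebra is canonical only up to isomorphism (which is all the bijection requires). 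You flag this issue yourself and defer it to \cite{dgno2}, which is acceptable, but the sentence claiming $\Omega|_{L\times L}\eq1$ outright should be corrected.
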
    

\medskip

We thus arrive at the following two statements:
  \def\leftmargini{1.7em}~\\[-1.55em]\begin{itemize}\addtolength{\itemsep}{-7pt}
  \item[(1)]
Elementary topological boundary conditions for the abelian Chern-Simons 
theory based on the modular tensor category $\C(D;q)$ are in bijection 
with Lagrangian algebras in $\C(D;q)$ and thus 
with Lagrangian subgroups of the metric group $(D,q)$.
  \item[(2)]
Elementary topological surface defects separating the abelian Chern-Simons 
theories based on the modular tensor categories $\C(D_1;q_1)$ and 
$\C(D_2;q_2)$ are in bijection with Lagrangian subgroups of the metric
group $(D_1{\oplus}D_2,q_1{\oplus}q_2^{-1})$.
\end{itemize}

The first of these results was established in \cite{kaSau2} by an explicit 
analysis using Lagrangian field theory. The second result was then deduced 
from the first by arguments based on the folding trick.

As a particular case, consider the transparent surface defect $\tc$ 
separating $\C(D,q)$ from itself. It corresponds to the canonical 
trivialization $\C(D,q)\boti {\C(D,q)}^\rev\,{\simeq}\, \Z(\C(D,q))$.
The Lagrangian algebra corresponding to $\tc$ is given by
the Cardy algebra $\bigoplus_{X\in \Irr(\C)}X\boti X^\vee_{}$
and corresponds to the diagonal subgroup in $D\oplus{D}$, in 
accordance with the results in \cite[Section 3.3]{kaSau2}.


\section{Relation with special symmetric Frobenius algebras}\label{sec:Frob}

In this section we explain how in our framework special symmetric
Frobenius algebras can be obtained from certain surface defects $S$ separating
a modular tensor category $\C$ from itself and a Wilson line separating
$S$ from the transparent defect $\tc$. Our results provide a rigorous
mathematical foundation for the ideas of \cite{kaSau3}. A central
tool in our study are string diagrams.

\subsection{String diagrams}

A \emph{string diagram} is a planar diagram describing morphisms in a 
bicategory. Such diagrams are Poincar\'e dual to another type of diagram 
frequently used for bicategories, in which 2-morphisms are 
attached to 2-dimensional parts of the diagram. String diagrams 
are particularly convenient for encoding properties 
of adjointness and biadjointness in a graphical calculus. 
String diagrams apply in particular to the bicategory of small categories
in which 1-morphisms are functors and 2-morphisms are natural
transformations, an example the reader might wish to keep in mind. 
For more details see e.g.\ \cite{laud,khov9}.

We fix a bicategory; in a first step, we only consider objects and 
1-morphisms. They can be visualized in one-dimensional diagrams, with 
one-dimensional segments describing objects and zero-dimensional parts 
indicating 1-morphisms. In our convention, such diagrams are drawn 
horizontally and are to be read from right to left. Thus for \cala\ and 
\calb\ objects of the bicategory and a 1-morphism $F\colon \cala\To\calb$, 
we draw the diagram
  \be
  \parbox{100pt}{ \xy
  (20.2,23.4)*{F};
  (6,23)*{\mathcal{B}};
  (35,23)*{\mathcal{A}};
  (5,16)*{\phantom.};
  (5,26)*{\phantom.};
  (20,20)*{\includegraphics[scale=0.3]{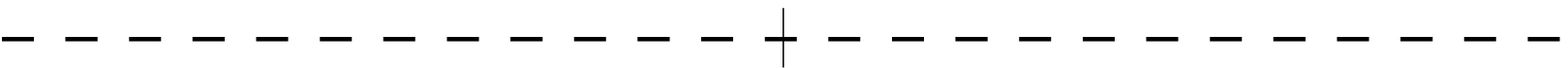}};
  \endxy }
  \ee
The composition $F_n\,{\cdots}\,F_1 \,{\equiv}\, F_n\cir{\cdots}\cir F_1 \colon 
\cala_{1}\To\mathcal{A}_{n}$ of 1-morphisms $F_{i}\colon \cala_{i}\To
\mathcal{A}_{i+1}$ is represented by horizontal concatenation
  \be
  \parbox{100pt}{ \xy
  (37.3,24)*{F_{2}};
  (60,24)*{F_{1}};
  (48,23)*{\mathcal{A}_{2}};
  (28,23)*{\mathcal{A}_{3}};
  (70,23)*{\mathcal{A}_{1}};
  (-15,24)*{F_{n}};
  (-5,23)*{\mathcal{A}_{n}};
  (-25,23)*{\mathcal{A}_{n+1}};
  (5,16)*{\phantom.};
  (5,27)*{\phantom.};
  (22.5,20)*{\includegraphics[scale=0.6]{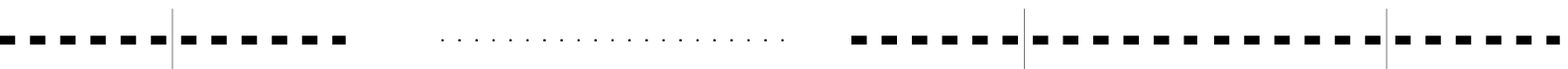}};
  \endxy } 
  \ee

To accommodate also 2-morphisms a second dimension is needed.
Objects are now represented by two-dimensional regions and 1-morphisms by
one-dimensional vertical segments, while zero-dimensional parts indicate
2-morphisms. In our convention, the vertical direction is to be read from
bottom to top. Thus a 2-morphism $\alpha\colon F_{1}\,{\Rightarrow}\,{F_{2}}$ 
between 1-morphisms $F_1,F_2$ from objects $\mathcal{A}$ to $\mathcal{B}$ is 
depicted by the diagram
  \be
  \parbox{100pt}{ \xy
  (25,3.8)*{F_{1}};
  (5,.5)*{\phantom.};
  (25.3,36)*{F_{2}};
  (22,20.5)*{\alpha};
  (11,23)*{\mathcal{B}};
  (38,23)*{\mathcal{A}};
  (25,20)*{\includegraphics[scale=0.4]{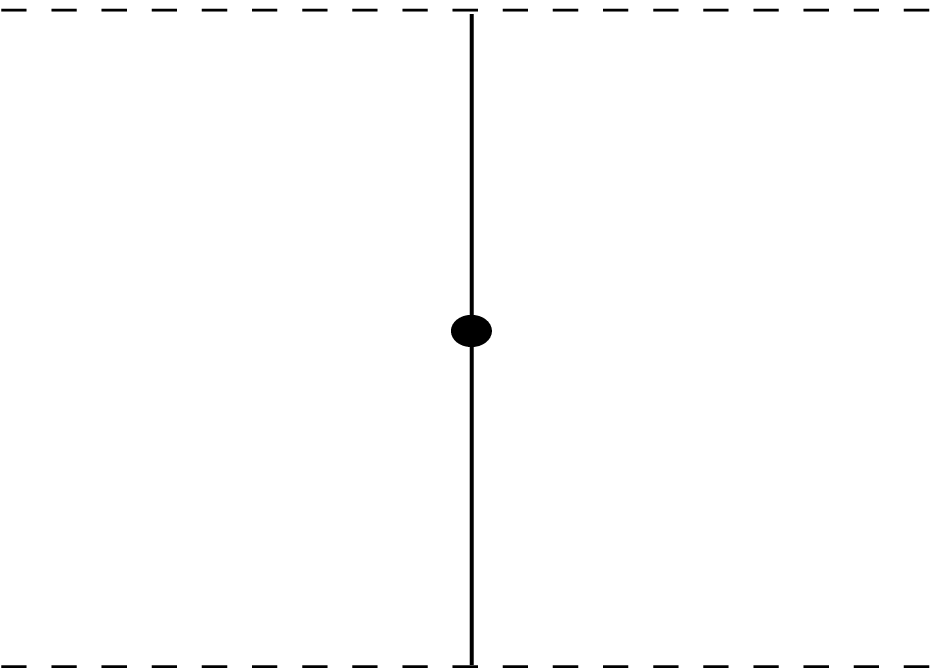}};
  \endxy } 
  \ee
For the moment, we require that the strands always go from bottom to top
and do not allow `U-turns'.
For the identity 2-morphism $\alpha \eq \id_F$ we omit the blob 
in the diagram. For the identity 1-morphism $\Id_\cala$ we omit any label
except for the one referring to the object $\mathcal{A}$. With
these conventions, 2-morphisms $\alpha\colon F\,{\Rightarrow}
\Id_{\mathcal{A}}$ and $\beta\colon \Id_{\mathcal{A}}\,{\Rightarrow}\,{F}$ 
with $F$ an endo-1-morphism of the object $\mathcal{A}$ are drawn as
  \be
  \parbox{100pt}{ \xy 
  (5,2.1)*{\phantom.};
  (21.2,4)*{F};
  (20,17.9)*{\alpha};
  (30,24)*{\mathcal A};
  (22,20)*{\includegraphics[scale=0.4]{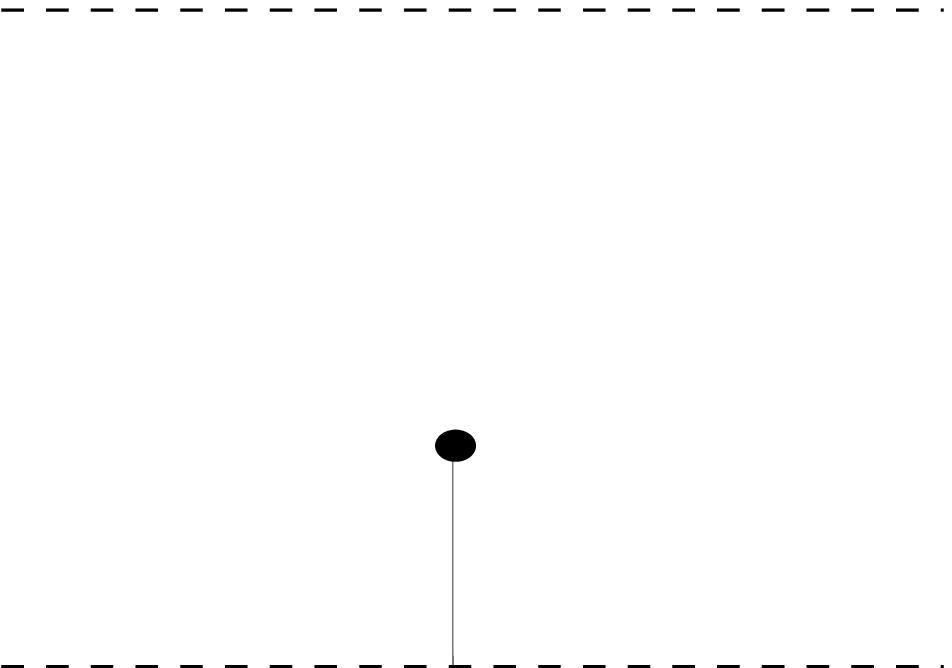}};
  (53,20)*{\text{and}};
  (86.7,36)*{F};
  (88.9,23.5)*{\beta};
  (97,18)*{\mathcal A};
  (87,28.1)*{\includegraphics[scale=0.4]{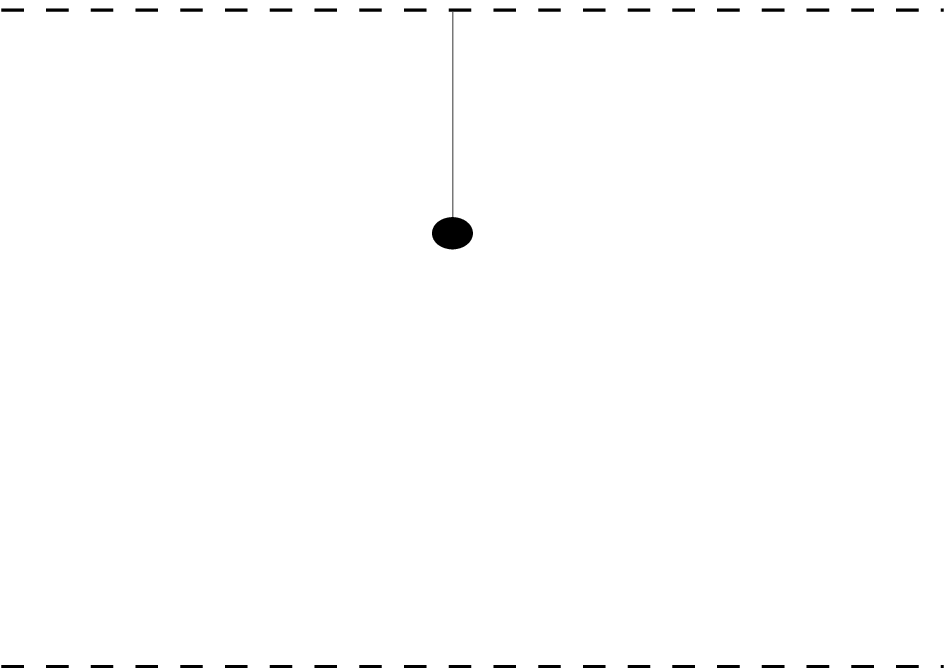}};
  \endxy}  
  \ee
respectively, while a
natural transformation $F_{2}F_{1}\,{\Rightarrow}\,\Id_\cala$ is represented by
  \be
  \parbox{100pt}{ \xy
  (17,3.8)*{F_{2}};
  (31.7,3.8)*{F_{1}};
  (5,1.5)*{\phantom.};
  (5,35.5)*{\phantom.};
  (23,18)*{\alpha};  
  (25,10.8)*{\mathcal B};
  (33,23)*{\mathcal A};
  (25,20)*{\includegraphics[scale=0.4]{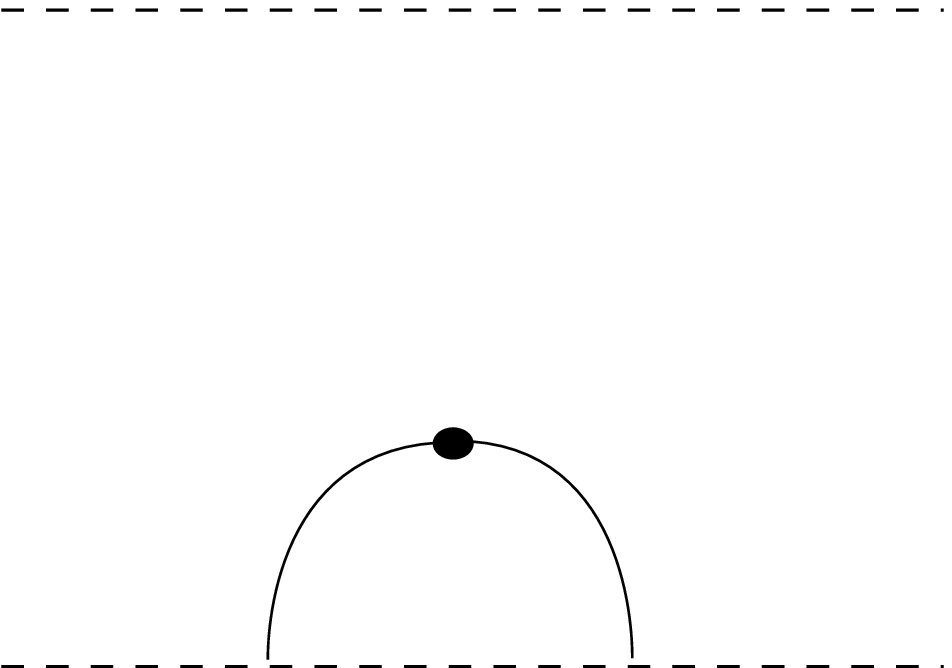}};
  \endxy } 
  \ee

The 2-morphisms of a bicategory can be composed horizontally and
vertically. Horizontal composition is depicted as juxtaposition, as in
  \be
  \parbox{100pt}{ \xy 
  (25,20)*{\includegraphics[scale=0.4]{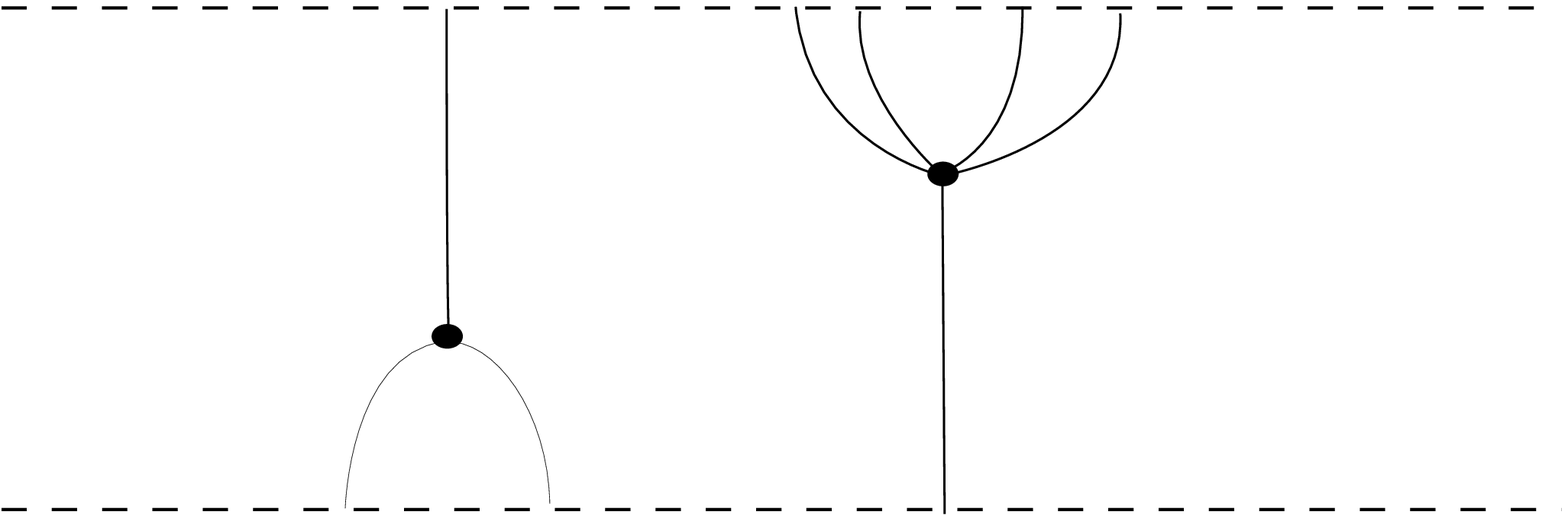}};
  (5,1.5)*{\phantom.};
  (5,35.5)*{\phantom.};
  (-31,20)*{\alpha\otimes\beta~=};
  (10,17.5)*{\alpha};
  (31.5,22.6)*{\beta};
  \endxy }  
  \ee
Vertical composition is represented as vertical concatenation of diagrams;
thus e.g.
  \be
  \parbox{100pt}{ \xy 
  (25,20)*{\includegraphics[scale=0.4]{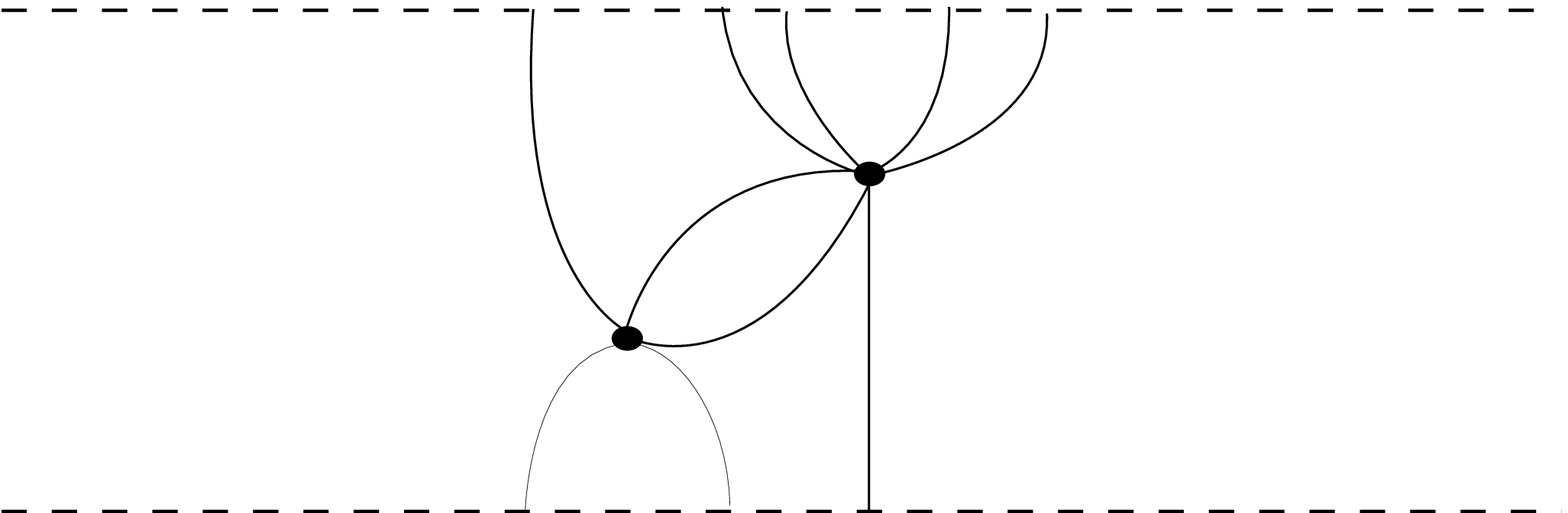}};
  (-47,20)*{(\id_{G}\oti\beta) \circ (\alpha\oti\id_{F}) ~=};
  (14,16.6)*{\alpha};
  (31.5,22.8)*{\beta};
  (12.1,35.7)*{G};
  (29.5,3.9)*{F};
  \endxy }  
  \ee

\medskip

In the bicategory of small categories, we have the notion of an
adjoint functor. This notion can be generalized to any 1-morphism
in a bicategory.
Given two 1-morphisms $F\colon \mathcal{A}\To{\mathcal{B}}$ and
$G\colon \mathcal{B}\To{\mathcal{A}}$, $G$ is said to be 
\emph{right adjoint} to $F$, and $F$ \emph{left adjoint} to $G$, iff
there exist 2-morphisms
  \be
  \eta:\quad \Id_{\mathcal{A}}\Rightarrow{GF} \qquand
  \eps:\quad {FG}\Rightarrow\textrm{Id}_{\mathcal{B}}
  \ee
satisfying
  \be
  (\id_{F}\oti\eta)\circ(\eps\oti\id_{F}) = \id_{F} \qquand
  (\eta\oti\id_{G})\circ(\id_{G}\oti\eps) = \id_{G} \,.
  \label{adj}\ee
The 2-morphisms $\eta$ and $\eps$, if they exist, are not unique.
For any number $\lambda\iN\complexx$ we can replace $\eta$ by 
$\lambda\,\eta$ and $\eps$ by $\lambda^{-1} \eps$ to get another pair of
morphisms. For each such pair, $\eta$ is called a \emph{unit} and 
$\eps$ a \emph{counit} of the \emph{adjoint pair} $(F,G)$.

In the diagrammatic description, special notation is introduced for the 
unit and counit of an adjoint pair of functors: we depict them as
  \be
  \parbox{100pt}{ \xy 
  (3,29.1)*{\includegraphics[scale=0.4]{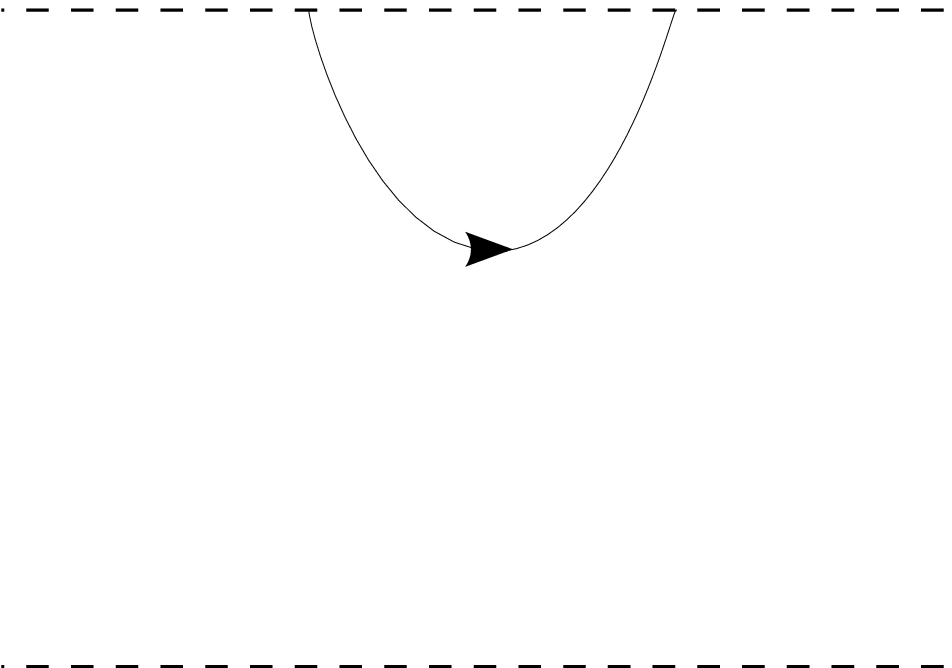}};
  (-25.3,19.9)*{\eta~=};
  (12,17)*{\mathcal{A}};
  (4.2,30)*{\mathcal{B}};
  (-3,37)*{G};
  (11.9,37)*{F};
  (40.3,20.4)*{\text{and}};
  (56.5,19.9)*{\eps~=};
  (87,20.8)*{\includegraphics[scale=0.4]{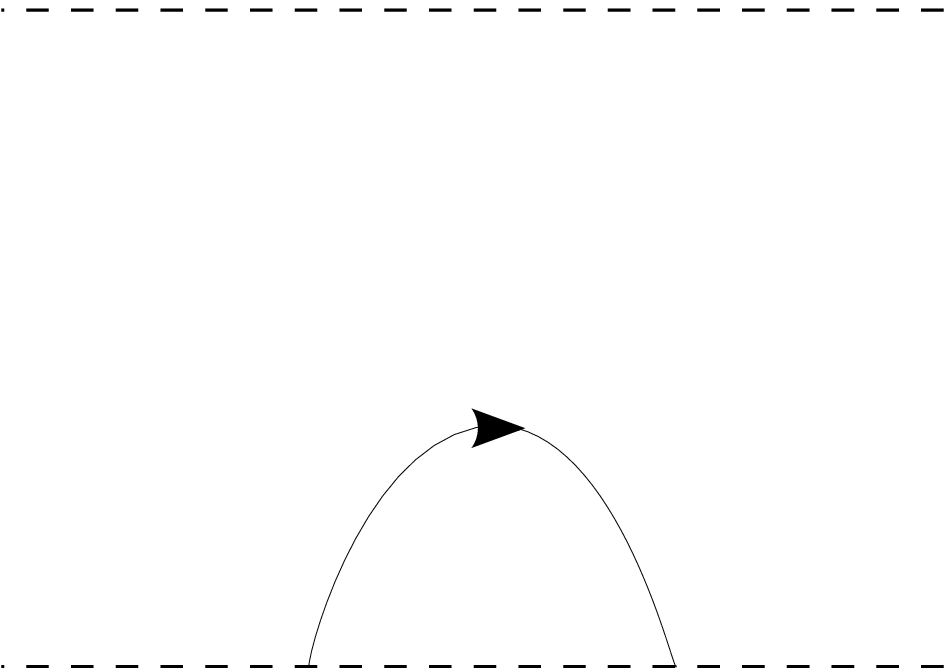}};
  (95,26)*{\mathcal{B}};
  (87.5,11.8)*{\mathcal{A}};
  (80,4.4)*{F};
  (95.6,4.4)*{G};
  \endxy}  
  \ee
The equalities \eqref{adj} amount to the identifications 
  \be
  \label{zickezacke}
  \parbox{100pt}{ \xy 
  (0,60)*{\includegraphics[scale=0.4]{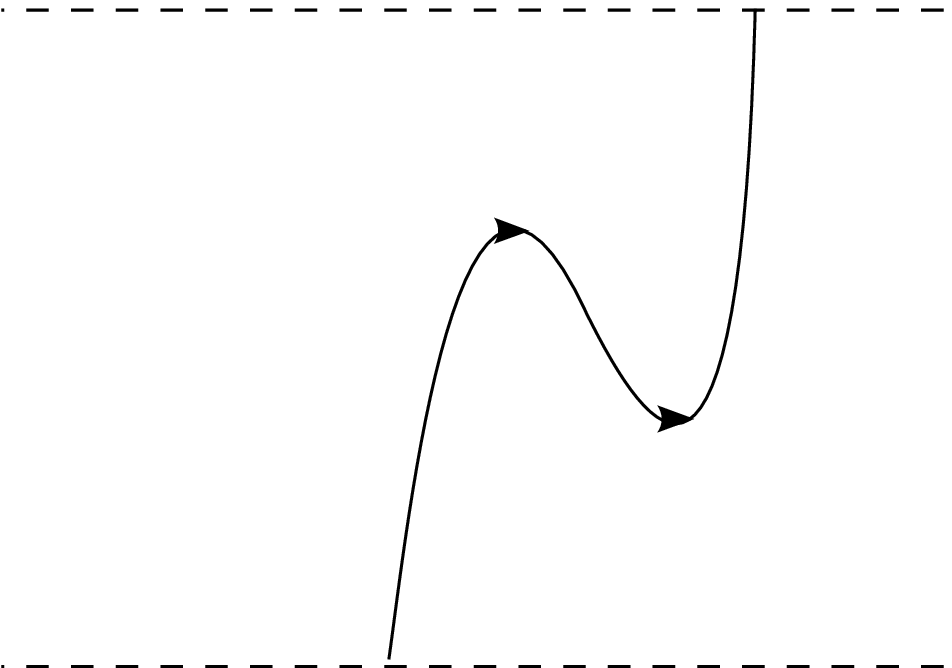}};
  (65,60)*{\includegraphics[scale=0.4]{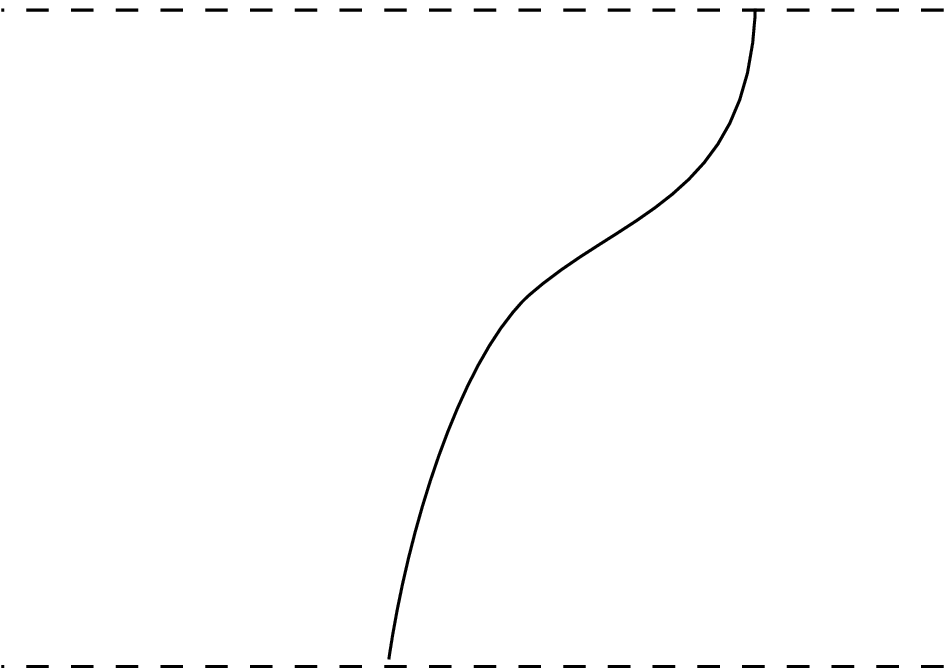}};
  (0,20)*{\includegraphics[scale=0.4]{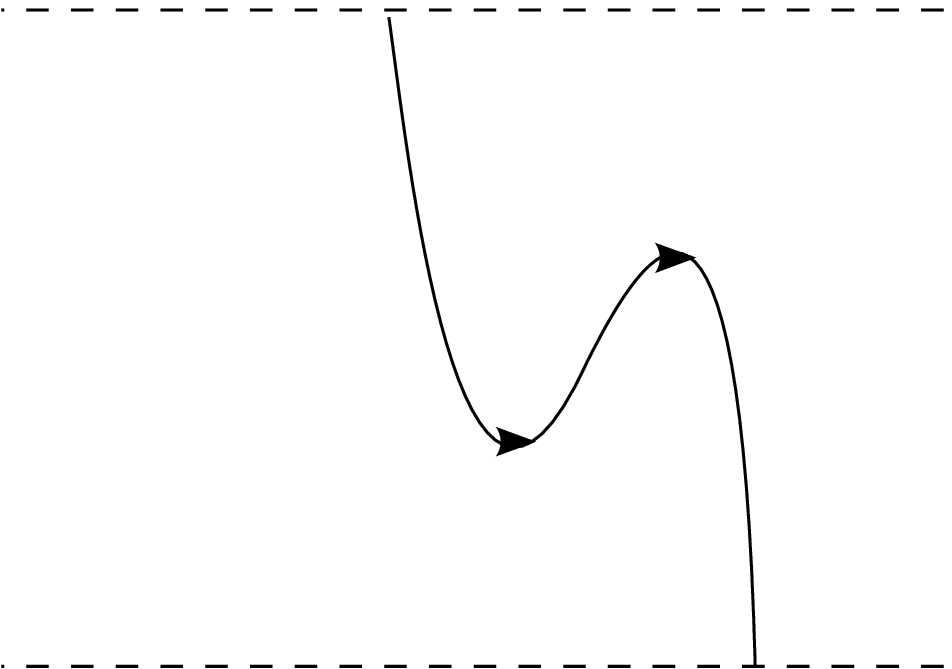}};
  (65,20)*{\includegraphics[scale=0.4]{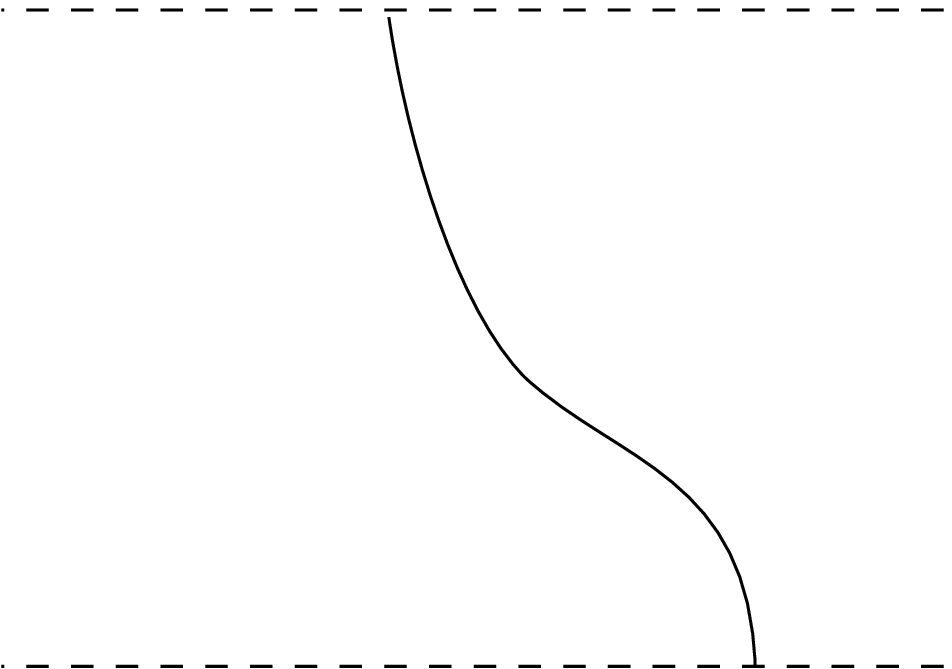}};
  (11.4,75.7)*{F};
  (76.5,75.7)*{F};
  (-4.0,43.5)*{F};
  (61.6,43.5)*{F};
  (33,20)*{=};
  (33,60)*{=};
  (-2.9,35.8)*{G};
  (62.3,35.8)*{G};
  (11.1,3.8)*{G};
  (76.4,3.8)*{G};
  \endxy   }  
  \ee
of diagrams.

In general, the existence of a left adjoint functor does not
imply the existence of a right adjoint functor. Even if both
adjoints exist, they need not coincide. The same statements
hold for left and right adjoints of a 1-morphism in an
arbitrary bicategory. It therefore makes sense to give the

\begin{defi} 
A 1-morphism $F$ in a bicategory is called \emph{biadjoint} to a 
1-morphism $G$ iff it is both a left and a right adjoint of $G$.
Since then $G$ is both left and right adjoint to $F$ as well,
such a pair $(F,G)$ of 1-morphisms is called a \emph{biadjoint pair}.
The adjunction $(F,G)$ is then called \emph{ambidextrous}.
\end{defi}

For a biadjoint pair $(F,G)$, we thus have, apart from the 2-morphisms
$\eta$ and $\eps$ introduced in formula \eqref{adj}, additional 2-morphisms
  \be
  \tilde\eta:\quad \Id_{\mathcal{B}} \,\Rightarrow{FG} \qquand
  \tilde\eps:\quad {GF} \,\Rightarrow\textrm{Id}_{\mathcal{A}}
  \ee
satisfying zigzag identities analogous to
the identities \eqref{zickezacke} for $\eta$ and $\eps$.

Once we restrict ourselves to string diagrams in which all lines are 
labeled by 1-morphisms admitting an ambidextruous adjoint, and having 
fixed adjunction 2-morphisms,
we can allow for lines with U-turns in string diagrams with the
appropriate one of the four adjunction 2-morphisms at the cups and
caps, because the relations we have just presented allow us to consistently 
apply isotopies to all lines. We thus obtain complete isotopy invariance.

For a biadjoint pair $(F,G)$ one can in particular form the composition
$\tilde\eps \cir \eta$, which is an endomorphism of the identity functor 
$\Id_\cala$, as well as $\eps \cir \tilde\eta$ which is an endomorphism of 
$\Id_\calb$. Graphically,
  \be
  \parbox{100pt}{ \xy 
  (25,20)*{\includegraphics[scale=0.4]{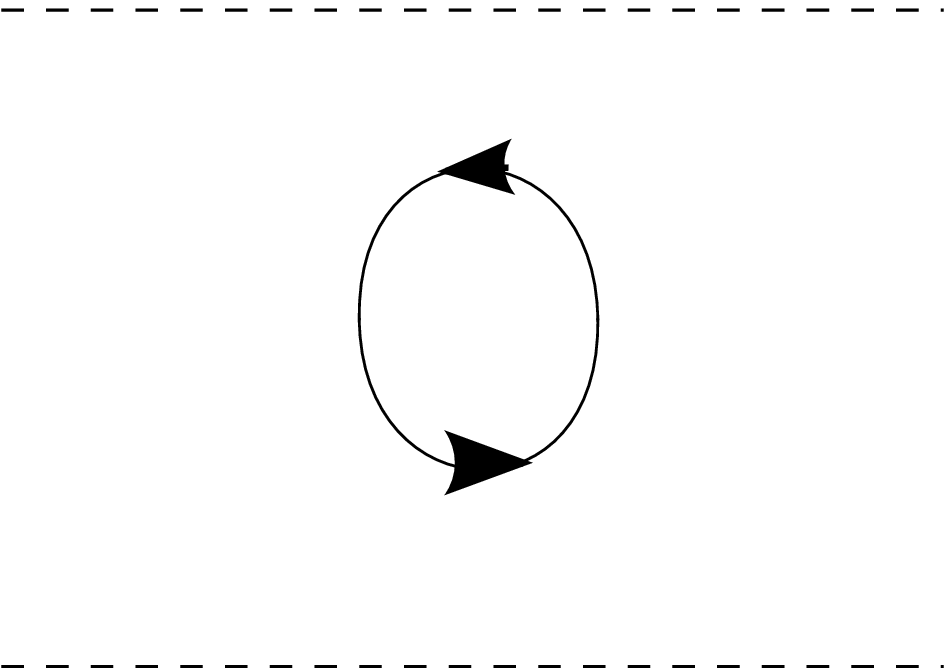}};
  (-8,16)*{\tilde\eps \circ \eta ~=};
  (11,22)*{\mathcal A};
  (25.1,17.2)*{\mathcal B};
  (0,3)*{\quad};
  (100,20)*{\includegraphics[scale=0.4]{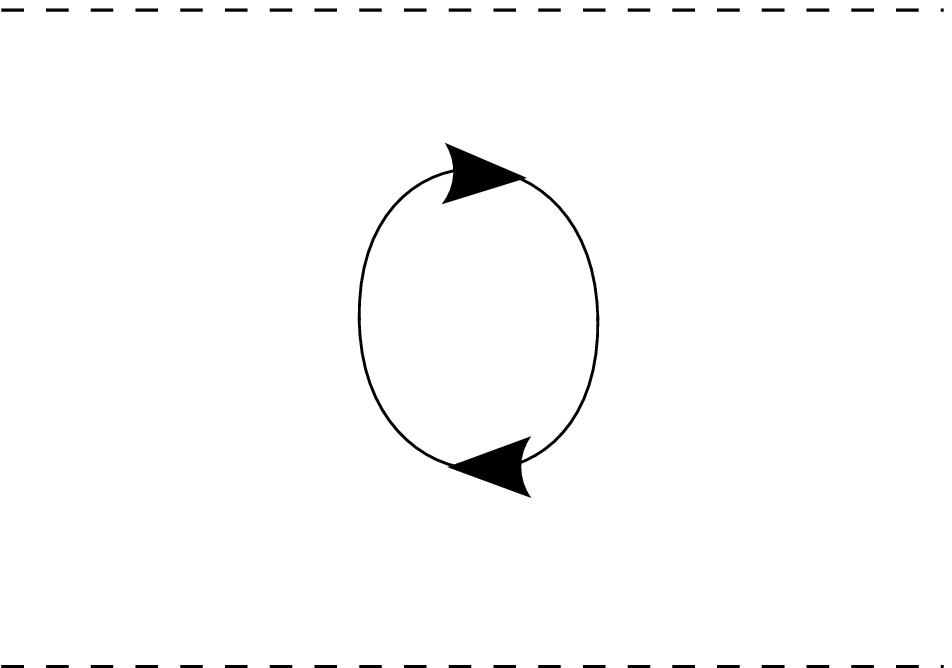}};
  (67,16)*{\eps \circ \tilde\eta ~=};
  (86,22)*{\mathcal B};
  (100.1,17.2)*{\mathcal A};
  \endxy }  
  \ee
It should be appreciated that if the adjunction 2-morphisms are
rescaled, the endomorphisms of $\Id_\cala$ and $\Id_\calb$ 
which appear here get rescaled by reciprocal factors.


\subsection{Frobenius algebras from string diagrams}\label{sec:Fstr}

So far our discussion concerned general bicategories. We now turn
to the bicategory of surface operators separating modular tensor
categories. As was argued in \cite{kaSau3}, from
a surface defect $S$ separating a modular tensor category \C\
from itself, we expect to be able to construct a symmetric special 
Frobenius algebra in \C\ for each Wilson line separating $S$ and
the transparent defect $\tc$. Different Wilson lines should yield
Morita equivalent Frobenius algebras. We now give a
proof of this fact that is based on our description of surface defects.

We thus consider a $\C$-module $S$. Recall that a Wilson line $M \iN 
\HOM(S,\tc)$ is described by a $\C$-module functor $M\colon S\To \cc$.

\begin{lemma}
Let \C\ be a modular tensor category, $S$ an object in \C-\MOD\ and $M \iN 
\HOM_\C(S,\tc)$. Then the functor $M$ has a biadjoint as a module functor.
\end{lemma}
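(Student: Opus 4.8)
The plan is to exhibit the biadjoint of $M$ explicitly by combining three ingredients: the semisimplicity of all categories involved, the fact that module categories over a fusion category have a good notion of internal Hom, and the rigidity of the target category $\cc$. First I would observe that since $S$ is a module category over the modular (in particular fusion) category $\C$, and since $M\colon S\To\cc$ is a $\C$-module functor between finitely semisimple $\C$-module categories, $M$ automatically has both a left adjoint $M^{\mathrm l}$ and a right adjoint $M^{\mathrm r}$ as ordinary additive functors (all the categories are finite semisimple, so adjoints exist by Ind/coInd arguments or simply by finite-dimensionality of morphism spaces); moreover these adjoints are themselves $\C$-module functors. The real work is to identify $M^{\mathrm l}$ with $M^{\mathrm r}$.

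The key step is to use rigidity of $\C$. Since $M$ lands in the regular module $\cc$, one can describe $M$ via an internal Hom / adjoint object: there is an object $A_M \iN \C$, together with the canonical evaluation and coevaluation data, such that $M(-)$ is naturally isomorphic to $\underline{\Hom}_S(A_M,-)$-type expression, or dually such that the left and right adjoints of $M$ are given by tensoring the internal-Hom object by the left and right duals of the relevant object. Concretely, I would write $M^{\mathrm l}(X) \cong \underline{\Hom}(X,-)$-flavored construction and $M^{\mathrm r}(X)$ in terms of the dual internal Hom, and then invoke the standard fact (available because $\C$ is spherical / modular, hence has two-sided duals that are canonically identified up to the pivotal structure) that left and right internal Homs into a rigid category agree up to the duality functor. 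The sphericality of $\C$ — part of the modular data — is what forces the two adjoints to be naturally isomorphic as module functors, giving an ambidextrous adjunction. In fact the cleanest route is: a module functor $M\colon S\To\cc$ between semisimple module categories over a \emph{fusion} category always has a two-sided adjoint, because one can present $S$ as $\Mod\text-A$ for an algebra $A$ in $\C$ (Remark after Definition, part (ii) and (iii)), present $M$ in terms of a module over $A$, and then the tensor-product and internal-Hom adjunctions for modules over $A$ in a fusion category coincide up to duals — this is the categorified statement that a finite-dimensional algebra over a field has the property that $\Hom$ and $\otimes$ adjunctions agree after dualizing.

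The main obstacle I anticipate is purely organizational rather than deep: making precise the claim that the left and right adjoints, which a priori live only in the bicategory of $\ko$-linear categories, are genuinely $\C$-module functors and that the unit/counit 2-morphisms respect the module structure. This requires checking compatibility of the adjunction data with the module associativity constraints $a_{U,M}$ and with the tensoriality constraints $b_{U,M}$ of $M$ — routine but notationally heavy. A secondary subtlety is tracking the pivotal/spherical structure of $\C$ carefully so that the identification $M^{\mathrm l}\cong M^{\mathrm r}$ is canonical and not merely an abstract existence statement; the paper's standing assumption that $\C$ is a modular tensor category (hence comes with a spherical structure, see Definition~\ref{defi:modular} and the surrounding discussion) is exactly what is needed here. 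Once these checks are in place, biadjointness follows, and the zigzag identities \eqref{zickezacke} and their tilded analogues hold by construction, so that string-diagram calculus becomes available for $M$ as promised.
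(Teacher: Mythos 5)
Your argument is correct in outline, but it takes a genuinely different route from the paper's. The paper's proof is deliberately low-tech: it forgets the module structure, identifies $S$ and $\cc$ as abelian categories with $(\Vectc)^{\boxtimes n}$ and $(\Vectc)^{\boxtimes m}$, observes that $M$ is then completely described by an $n\,{\times}\, m$ matrix of vector spaces whose left and right adjoints are \emph{both} the ``adjoint'' matrix of dual spaces, concludes ambidexterity of the underlying functor, and only then cites arguments from \cite{etno} to upgrade the biadjoint to a module functor and to see that the two induced module structures coincide. You instead keep the module structure throughout, presenting $S$ as $\Mod$-$A$ (equivalently, writing $M$ as an internal-Hom functor $\underline{\Hom}(N,-)$ for some $N\iN S$) and identifying both adjoints with $-\oti N$ up to a choice of dual, with the spherical structure of \C\ supplying the canonical identification. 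Both routes are valid; yours buys a more canonical, choice-free isomorphism $M^{\mathrm l}\,{\cong}\, M^{\mathrm r}$ \emph{as module functors} (which the paper leaves to the citation), at the cost of invoking the pivotal/spherical structure, whereas the paper's matrix argument shows that sphericality is not actually needed for bare existence of a biadjoint --- semisimplicity over $\complex$ already forces the underlying left and right adjoints to agree. The one point to be careful about in your version is the step ``$\underline{\Hom}(N,M')^\vee \cong \underline{\Hom}(M',N)$'': this uses a specific (left vs.\ right) dual, and it is exactly here that the spherical structure, or alternatively the ENO-style semisimplicity argument, must be brought in explicitly rather than asserted; you flag this yourself, so I regard it as an acknowledged gap of the same size as the paper's appeal to \cite{etno}.
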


\begin{proof} 
$M$ is an additive functor between semisimple \complex-linear categories.
Now as an abelian category, a finitely semisimple \complex-linear category 
is equivalent to $(\Vectc)^{\boxtimes n}$ with $n \eq |\Irr(\C)|$ the (finite)
number of isomorphism classes of simple objects.
Moreover, any additive endofunctor $F$ of \Vectc\ is given by tensoring with 
the vector space $V \eq F(\complex)$, and it is ambidextrous, the adjoint 
being given by tensoring with the dual vector space $V^*_{}$. It follows that 
the functor $M$ is equivalent to a functor $\tilde{M}\colon 
(\Vectc)^{\boxtimes n} \To (\Vectc)^{\boxtimes m}$ for some integers $n$ and
$m$ and is completely specified by an $n\Times{m}$-matrix of \complex-vector 
spaces. Further, both the left and the right adjoint functor to 
$\tilde{M}$ are then given by the `adjoint' matrix, and hence $\tilde{M}$ is
ambidextrous. As a consequence, $M$ is ambidextrous as a functor. Using 
arguments from \cite{etno}, the bi-adjoint of $M$ has two structures of a 
module functor, from being a left adjoint and right adjoint, respectively. 
These two structures coincide.
\end{proof} 

Since all adjunctions involved are ambidextruous, we can from now on freely
use isotopies in the manipulations of string diagrams. We next consider
the following construction for any module functor $M \iN \HOM_\C(S,\tc)$.
Denote by $\bar M$ the module functor biadjoint to $M$ and set 
$A^M \,{:=}\, M\cir\bar M$. Then $A^{M}\iN\HOM(\tc,\tc)$, which by 
Proposition \ref{gibtsdoch} is equivalent to \C\ as a monoidal category. We 
proceed to equip the object $A^M\iN\C$ with the structure of a Frobenius 
algebra in \C.  For the product, we introduce the morphism 
$m_{A^M_{}}\colon A^{M}\oti{A^{M}}\To{A^{M}}$ as
  \be
  \parbox{100pt}{ \xy
  (19,19)*{\includegraphics[scale=0.4]{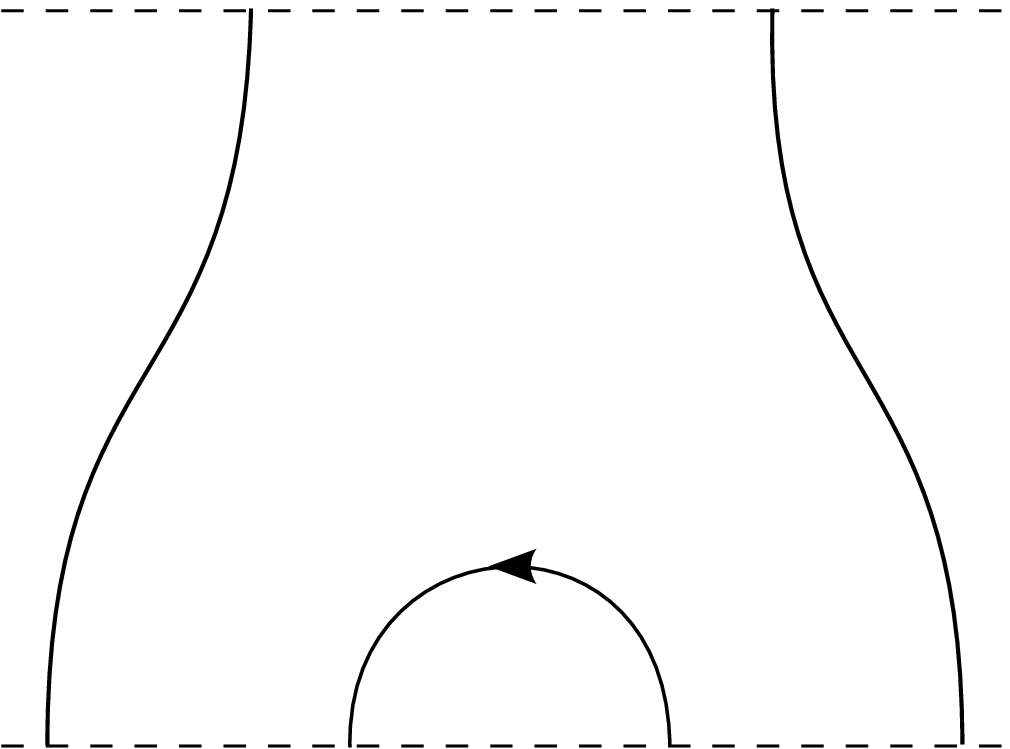}};
  (-21,18)*{m_{A^M_{}} ~:=};
  (0,1.3)*{M};
  (12.7,1.4)*{\bar{M}};
  (25.3,1.3)*{M};
  (37.3,1.3)*{\bar{M}};
  (8.8,36.8)*{M};
  (30.3,36.9)*{\bar{M}};
  (-1,26.4)*{\tc};
  (19,22)*{S};
  (39,26.4)*{\tc};
  (19.5,7.6)*{\tc};
  \endxy }
  \ee
in terms of string diagrams. Similarly, we introduce the coproduct 
$\Delta_{A^M_{}}\colon A^{M}\To{A^{M}}\oti{A^{M}}$ as 
  \be
  \parbox{100pt}{ \xy
  (-21,18)*{\Delta_{A^M_{}} ~:=};
  (20,20)*{\includegraphics[scale=0.4]{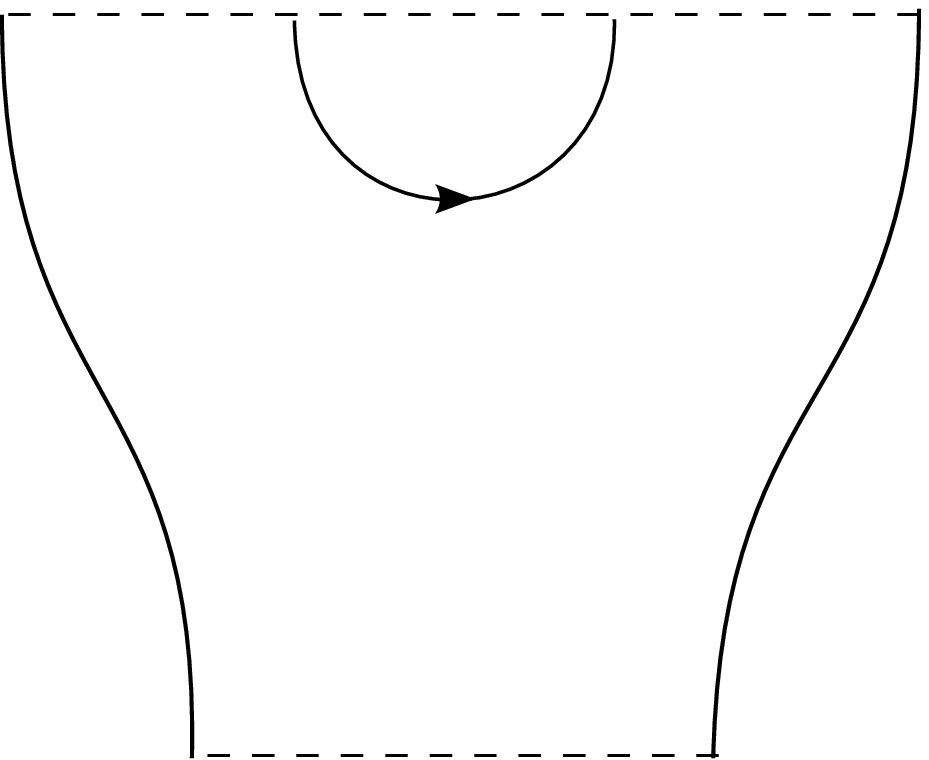}};
  (1.5,37.9)*{M};
  (13.8,38.1)*{\bar{M}};
  (27,37.9)*{M};
  (39.3,38.1)*{\bar{M}};
  (8.7,1.9)*{M};
  (30,2.1)*{\bar{M}};
  (-4,25)*{\tc};
  (20,18)*{S};
  (43,23)*{\tc};
  (20,31.7)*{\tc};
  \endxy } 
  \ee
The morphisms for counit $\eps_{A^M_{}}$ and unit $\eta_{A^M}^{}$ are given by 
  \be
  \parbox{100pt}{ \xy
  (-34,19)*{\eps_{A^M_{}} ~:=};
  (47,19)*{\eta_{A^M_{}} ~:=};
  (0,20)*{\includegraphics[scale=0.4]{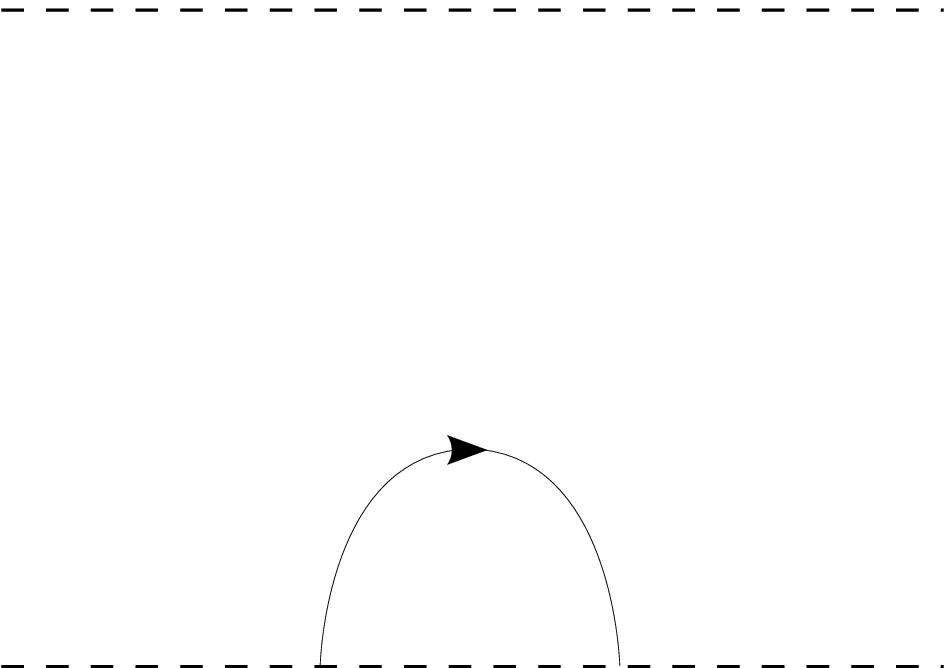}};
  (80,28.5)*{\includegraphics[scale=0.4]{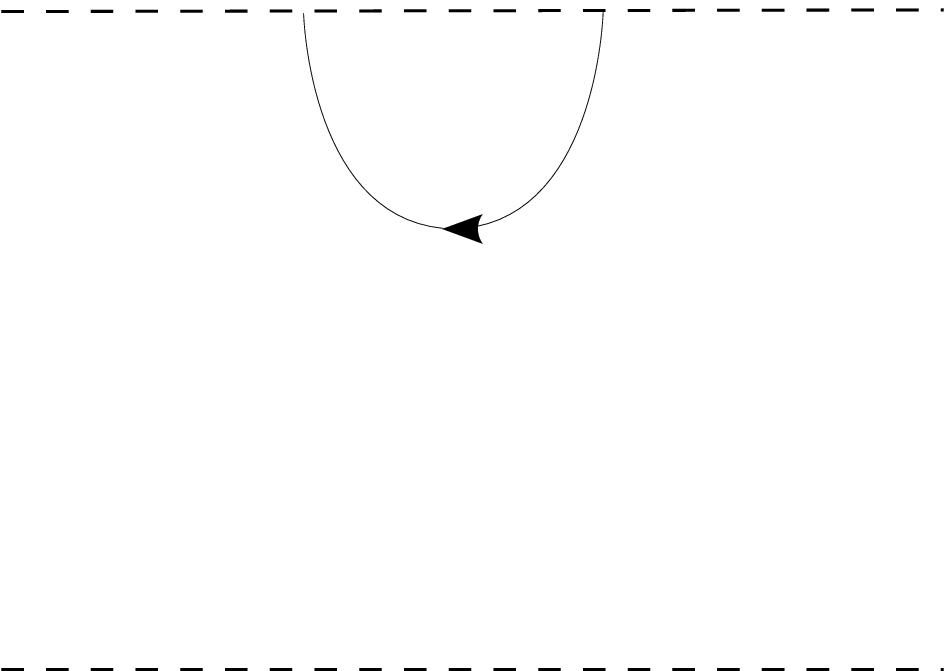}};
  (-6.8,3.8)*{M};
  (5.2,3.8)*{\bar{M}};
  (7.6,25)*{\tc};
  (-.4,10.7)*{S};
  (73.9,36)*{M};
  (86.0,36)*{\bar{M}};
  (95,21)*{\tc};
  (79.9,29.5)*{S};
  \endxy }  
  \ee
It should be appreciated that rescaling the adjunction morphisms 
rescales product and coproduct, and unit and counit, by inverse factors.

\begin{prop}
Let $S$ be an object in \C-\MOD\ corresponding to an exact module category. 
Then for any $M\iN\HOM(S,\cc)$ the morphisms 
$m_{A^M_{}}$, $\eta_{A^M_{}}$, $\Delta_{A^M_{}}$ and $\eps_{A^M_{}}$ 
just introduced endow
the object $A^M$ with the structure of a symmetric Frobenius algebra in \C.
\end{prop}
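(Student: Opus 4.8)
The plan is to verify the Frobenius algebra axioms one by one directly from the string-diagrammatic definitions, exploiting the ambidextrous adjunction $(M,\bar M)$ established in the preceding lemma and the resulting full isotopy invariance of the diagrams. First I would check that $m_{A^M}$ is associative: both $m_{A^M}\cir(m_{A^M}\oti\id)$ and $m_{A^M}\cir(\id\oti m_{A^M})$ unfold into a string diagram with three $M$–$\bar M$ "teeth" meeting the middle strand labelled $S$, and the two expressions are related by an isotopy that slides one of the cup/cap pairs past another; since all lines carry ambidextrous adjoints, this isotopy is legitimate. The same mechanism gives the unit axiom $m_{A^M}\cir(\eta_{A^M}\oti\id)=\id=m_{A^M}\cir(\id\oti\eta_{A^M})$: inserting $\eta_{A^M}$ produces a little "bubble" on the $S$-strand that can be removed by one of the zigzag identities \eqref{zickezacke}. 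Dually, coassociativity of $\Delta_{A^M}$ and the counit axiom for $\eps_{A^M}$ follow by the mirror-image diagrams and the mirror-image isotopies/zigzags.

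Next I would verify the Frobenius property, i.e.\ that
$(\id\oti m_{A^M})\cir(\Delta_{A^M}\oti\id)=\Delta_{A^M}\cir m_{A^M}=(m_{A^M}\oti\id)\cir(\id\oti\Delta_{A^M})$. Here the point is that $\Delta_{A^M}$ and $m_{A^M}$ are built from the \emph{same} cups and caps, just read in opposite vertical directions, so each of the three composites in the Frobenius relation unfolds to a string diagram with four teeth attached to a single connected $S$-strand; the three diagrams differ only by planar isotopy, and again ambidexterity is exactly what licenses dragging the U-turns around. This is the step I expect to be the cleanest once the isotopy-invariance framework of Section~\ref{sec:Fstr} is in place — indeed, the whole reason for establishing the biadjointness lemma was to make this argument purely graphical.

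For symmetry — the assertion that the two maps $A^M\To (A^M)^\vee$ obtained from $\eps_{A^M}\cir m_{A^M}$ together with the left, respectively right, duality coincide — I would use that the biadjoint $\bar M$ carries, by the lemma, a module-functor structure coming from being a left adjoint and one from being a right adjoint, and that these two coincide. Unwinding the definitions of the duality morphisms on $A^M=M\cir\bar M$ in terms of the adjunction data for $M$, the two candidate pairings are represented by string diagrams that are mapped to each other precisely by the equality of these two module-functor structures on $\bar M$; this is where the hypothesis that $S$ corresponds to an \emph{exact} module category (invoked via the results of \cite{etno}) enters, guaranteeing that the ambidextrous adjunction is compatible with the module structure.

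The main obstacle, I expect, is not any single axiom but bookkeeping: making sure that the graphical conventions (read right-to-left, bottom-to-top; which of the four adjunction 2-morphisms sits at each cup and cap; the placement of the region labels $\tc$ and $S$) are applied consistently across all the diagrams, so that the claimed isotopies really are isotopies of labelled string diagrams and not merely of their underlying shapes. Once that is pinned down, every axiom reduces to an isotopy plus an application of \eqref{zickezacke}, and the only genuinely structural input beyond formal diagram manipulation is the coincidence of the two module-functor structures on $\bar M$ from the lemma. I note that specialness (that $m_{A^M}\cir\Delta_{A^M}$ and $\eps_{A^M}\cir\eta_{A^M}$ are nonzero scalars) is \emph{not} asserted in this proposition — only later, after a suitable normalization of the adjunction morphisms — so I would not attempt to prove it here.
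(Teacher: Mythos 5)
Your proposal is correct and follows essentially the same route as the paper: associativity, coassociativity and the Frobenius property are each an isotopy of labelled string diagrams licensed by the ambidextrous adjunction of the preceding lemma, and symmetry comes from the self-duality of $A^M=M\cir\bar M$ with duality morphisms $\eps_{A^M}\cir m_{A^M}$ and $\Delta_{A^M}\cir\eta_{A^M}$ (the paper phrases this via rigidity of \C\ and the coincidence of left and right duals, which is the same input as your coincidence of the two module-functor structures on $\bar M$). Your remarks on the unit/counit axioms and on specialness being deferred are consistent with the paper, which treats those points implicitly or in the subsequent proposition.
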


\begin{proof}
The equality
  \be
  \parbox{100pt}{ \xy
  (-16.5,20)*{\includegraphics[scale=0.7]{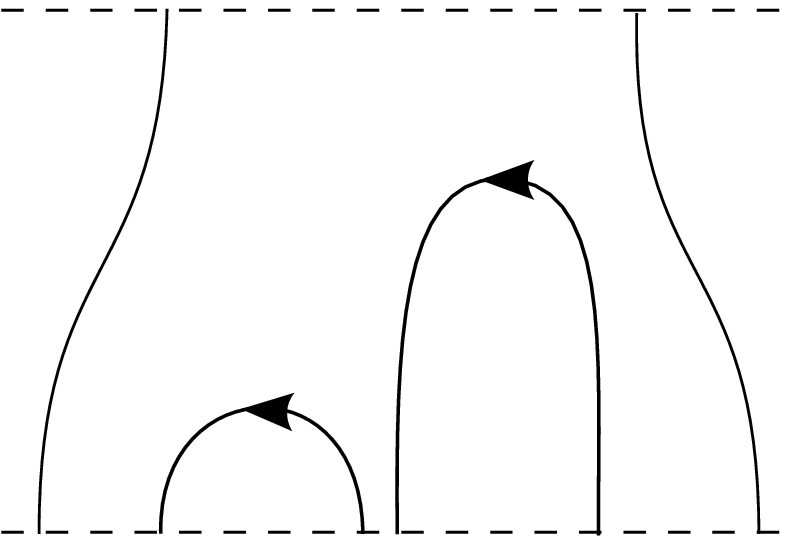}};
  (56.5,20)*{\includegraphics[scale=0.7]{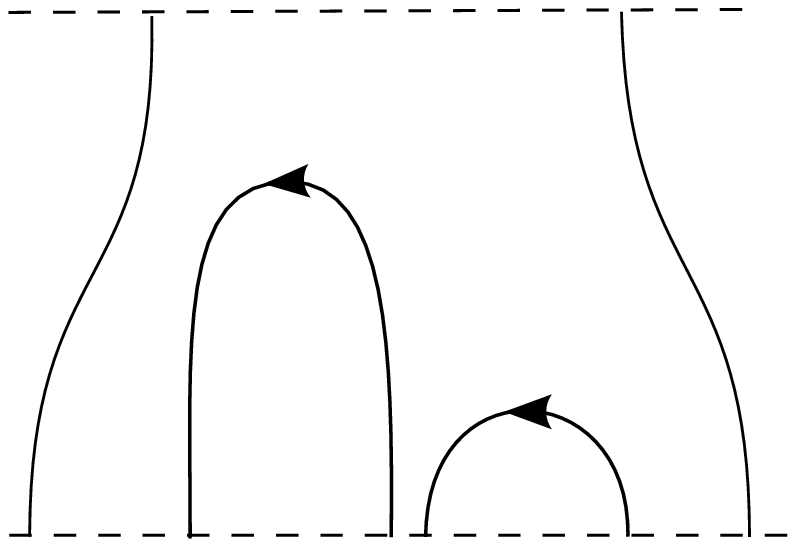}};
  (20,20)*{=};
  (-42,-1.8)*{M};
  (-34,-1.6)*{\bar{M}};
  (-19.6,-1.8)*{M};
  (-15.0,-1.6)*{\bar{M}};
  (-2.3,-1.8)*{M};
  (10,-1.6)*{\bar{M}};
  (30,-1.9)*{M};
  (41.6,-1.6)*{\bar{M}};
  (55.4,-1.8)*{M};
  (59.7,-1.6)*{\bar{M}};
  (72.4,-1.8)*{M};
  (81.7,-1.6)*{\bar{M}};
  (-32.3,41.6)*{M};
  (1.1,41.8)*{\bar{M}};
  (39.6,41.6)*{M};
  (73.2,41.8)*{\bar{M}};
  (-42,30)*{\tc};
  (-24,27)*{S};
  (-26,5.6)*{\tc};
  (-8.7,13)*{\tc};
  (9,30)*{\tc};
  (32,30)*{\tc};
  (64,27)*{S};
  (48.9,13)*{\tc};
  (66.2,5.6)*{\tc};
  (81,30)*{\tc};
  \endxy }
  \ee
which follows from the properties of string diagrams, shows that
the product is associative. Coassociativity of $\Delta_{A^M_{}}$ 
is seen in an analogous manner. The equalities 
  \be
  \parbox{200pt}{ \xy
  (-31.5,20)*{\includegraphics[scale=0.8]{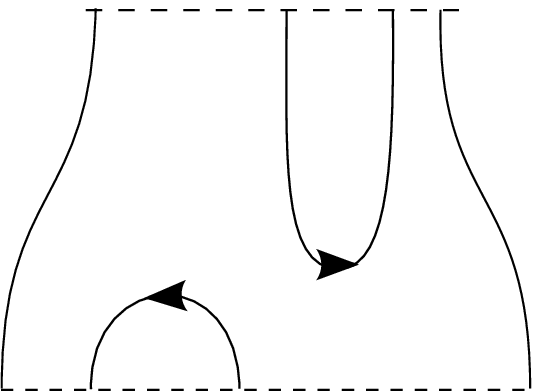}};
  (21.3,20)*{\includegraphics[scale=0.8]{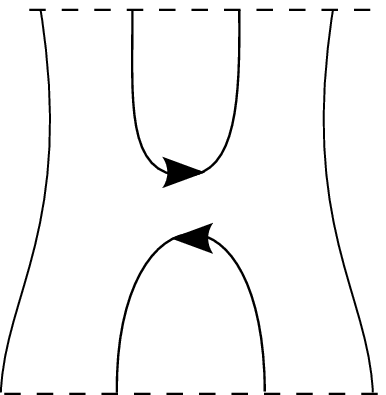}};
  (71.5,20)*{\includegraphics[scale=0.8]{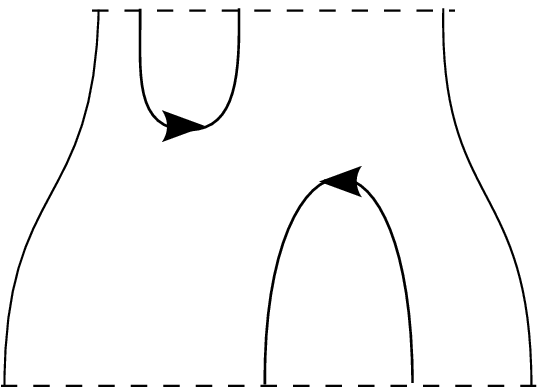}};
  (-45,38.4)*{M};
  (-29.8,38.6)*{\bar{M}};
  (-20.9,38.4)*{M};
  (-16,38.6)*{\bar{M}};
  (-2,20)*{=};
  (10,38.4)*{M};
  (17,38.6)*{\bar{M}};
  (26,38.4)*{M};
  (33.6,38.6)*{\bar{M}};
  (43,20)*{=};
  (57.3,38.4)*{M};
  (62,38.6)*{\bar{M}};
  (69.2,38.4)*{M};
  (86.2,38.6)*{\bar{M}};
  (-53.78,1.4)*{M};
  (-46,1.6)*{\bar{M}};
  (-32.3,1.4)*{M};
  (-10.4,1.6)*{\bar{M}};
  (5.5,1.4)*{M};
  (15,1.6)*{\bar{M}};
  (27.1,1.4)*{M};
  (36,1.6)*{\bar{M}};
  (50,1.4)*{M};
  (71,1.6)*{\bar{M}};
  (82.7,1.4)*{M};
  (92.6,1.6)*{\bar{M}};
  (-51,27)*{\tc};
  (-39.7,7.8)*{\tc};
  (-38.8,24)*{S};
  (-25.7,27)*{\tc};
  (-12.2,28)*{\tc};
  (5.5,27)*{\tc};
  (14.1,19)*{S};
  (20.9,29)*{\tc};
  (21.3,10.2)*{\tc};
  (36.6,28)*{\tc};
  (51.7,27)*{\tc};
  (61.1,15)*{S};
  (65.1,31)*{\tc};
  (77.0,12.9)*{\tc};
  (90.8,28)*{\tc};
  \endxy }
  \ee
prove the Frobenius property.
 \\
Finally, \C\ is rigid, and left and right duals coincide. It is not difficult 
to see that by construction the algebra $A^{M}$ is equal to its dual. The compositions
  \be\label{duality}
  \eps_{A^M} \circ m_{A^M}:\quad A^M \oti A^M \to \one \qquad{\rm and}\qquad
  \Delta_{A^M} \circ \eta_{A^M}:\quad \one \to A^M \oti A^M 
  \ee
give the duality morphisms.  
\end{proof}

There exists a particularly interesting subclass of surface defects
for which the Frobenius algebra $A^M$ obtained from any Wilson line
has additional properties. We need first the

\begin{defi}
A surface defect $S$ in \C-\MOD\ is said to be \emph{special} iff
  \be
  2\mbox-\Hom(\Id_S,\Id_S) \,\simeq\, \complex \,.
  \ee
\end{defi} 

The transparent Wilson line inside a special surface defect $S$ can
only have multiples of the identity as insertions. 
Put differently, there are no non-trivial
local excitations on a surface defect of type $S$ other than those
related to Wilson lines and their junctions.

As an application of this definition, we consider the following situation
in a special surface defect: there is a hole punched out, i.e.\
the surface contains a disk labeled by the transparent defect $\tc$; 
the label for the boundary of the disk is a Wilson line $M$. 
Since there are no local excitations,
we can replace the punched-out hole by the surface defect $S$, provided 
that we multiply every expression obtained with this replacement
by a scalar factor depending on the Wilson line $M$. For the moment we 
cannot yet tell whether this scalar factor is non-zero.
 
Wilson lines separating special defects from the transparent defect 
should yield Frobenius algebras with a particular property. Recall 
from section 2 that a \emph{special algebra} $A$ in a monoidal category 
\C\ is an object which is both an algebra and a coalgebra and satisfies
  \be
  \eps \circ \eta = \beta_{1}\:\id_{\one} \qquad\textrm{and}\qquad
   m\circ\Delta = \beta_{A}\:\id_{A}
  \ee
with non-zero complex numbers $\beta_{1}$ and $\beta_{A}$.

\begin{prop}
Let $S$ be a special surface defect described by a semisimple
module category $S$ over a modular tensor category \,\C.
Then for any Wilson line $M\iN\HOM(S,\cc)$ the corresponding symmetric 
Frobenius algebra $A^{M}$ in \C\ is a special algebra.
\end{prop}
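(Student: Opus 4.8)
The plan is to compute the two composites $m_{A^M}\cir\Delta_{A^M}$ and $\eps_{A^M}\cir\eta_{A^M}$ as $2$-morphisms in the bicategory of surface defects, recognise them as ``bubbles'' built out of the biadjunction of $M$ and its biadjoint $\bar M$, and then let the specialness of $S$ collapse these bubbles to scalars. Unwinding the string diagrams defining $m_{A^M}$ and $\Delta_{A^M}$, one sees that $\Delta_{A^M}$ is obtained by inserting the unit $\eta\colon\Id_S\Rightarrow\bar M\cir M$ of the module adjunction $M\dashv\bar M$ into the inner $S$-channel of $A^M=M\cir\bar M$, while $m_{A^M}$ inserts the counit $\tilde\eps\colon\bar M\cir M\Rightarrow\Id_S$ of the biadjoint adjunction $\bar M\dashv M$ into that same channel, the outer legs $M$ and $\bar M$ running straight through. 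Since (by the biadjointness established above) every module functor in play carries an ambidextrous adjunction, string diagrams may be isotoped freely, and the composite becomes the whiskering
  \be
  m_{A^M}\cir\Delta_{A^M} \,=\, \id_M\otimes(\tilde\eps\cir\eta)\otimes\id_{\bar M}
  \ee
of the $2$-endomorphism $\tilde\eps\cir\eta$ of $\Id_S$; similarly $\eps_{A^M}\cir\eta_{A^M}$ unwinds to the dual bubble $\eps\cir\tilde\eta$, a $2$-endomorphism of $\Id_\cc$, i.e.\ an endomorphism of $\one$ in \C\ under $\HOM(\tc,\tc)\simeq\C$. (Semisimple module categories over a fusion category being exact, the preceding proposition applies and $A^M$ is already a symmetric Frobenius algebra.)

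Next I would feed in two ``trivial centre'' facts. First, $\EndC(\one)\cong\complex$ since \C\ is a fusion category with simple unit, so $\eps_{A^M}\cir\eta_{A^M}=\beta_1\,\id_\one$ for some $\beta_1\in\complex$ with no further hypothesis needed. Second, the specialness of $S$ says precisely that $2\mbox-\Hom(\Id_S,\Id_S)\simeq\complex$; hence $\tilde\eps\cir\eta=\beta_A\,\id_{\Id_S}$ for some $\beta_A\in\complex$, and whiskering gives $m_{A^M}\cir\Delta_{A^M}=\beta_A\,\id_{M\cir\bar M}=\beta_A\,\id_{A^M}$. It is worth stressing that this last step really uses specialness: a general $2$-endomorphism of $\Id_S$ does not whisker to a scalar multiple of $\id_{A^M}$. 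Combined with the Frobenius structure, $\beta_A\neq0$ will also yield the separability clause in the definition of a special algebra, $\beta_A^{-1}\Delta_{A^M}$ being an $A^M$-bimodule right inverse of $m_{A^M}$.

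The genuinely nontrivial point -- and the main obstacle -- is the nonvanishing of $\beta_A$ and $\beta_1$. First, specialness forces $S$ to be indecomposable: a splitting $S\simeq S_1\oplus S_2$ with both summands nonzero would split $\Id_S$ and make $2\mbox-\Hom(\Id_S,\Id_S)$ at least two-dimensional. We may also assume $M\neq0$, the zero functor giving $A^M=0$, which carries no algebra structure with $\eps\cir\eta\neq0$. Now I would use semisimplicity of $S$ and the explicit model from the lemma providing the biadjoint: $M$ is given by a matrix of finite-dimensional complex vector spaces $V_{ij}=\HOMC(Y_j,M X_i)$, with $X_i$ and $Y_j$ running over the simple objects of $S$ and of \cc, and $\bar M$ by the transposed matrix $(V_{ji}^*)$, the units and counits being the (co)evaluations of these vector spaces. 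A direct computation then gives the $i$-th component of $\tilde\eps\cir\eta$ as $\sum_j\dimc V_{ij}$ and the $j$-th component of $\eps\cir\tilde\eta$ as $\sum_i\dimc V_{ij}$; by the previous step these are independent of $i$ and of $j$ and equal $\beta_A$, $\beta_1$ respectively. These sums are nonzero because $M X_i\neq0$ for every $i$ and $\bar M Y_j\neq0$ for every $j$: the full subcategory of objects annihilated by $M$ (resp.\ by $\bar M$) is a \C-submodule of $S$ (resp.\ of \cc), hence a direct summand by semisimplicity, hence zero by indecomposability together with $M\neq0$ (resp.\ $\bar M\neq0$, which follows from $M\neq0$). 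So $\beta_A,\beta_1\geq1$ and $A^M$ is special. Alternatively one can deduce $\beta_A\beta_1\neq0$ from $\beta_A\beta_1=\eps_{A^M}\cir m_{A^M}\cir\Delta_{A^M}\cir\eta_{A^M}$ being a nonzero categorical dimension of the nonzero object $A^M$ in the modular category \C, but the combinatorial computation is more self-contained. The delicate point in either route is to check that the adjunction data actually entering the definitions of $m_{A^M}$ and $\Delta_{A^M}$ is -- up to the overall rescaling the string calculus permits -- the data coming from this explicit construction of the biadjoint, so that the bubbles truly evaluate to sums of vector-space dimensions rather than to quantities that could a priori vanish.
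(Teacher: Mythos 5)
Your proposal is correct, and its first half coincides with the paper's argument: both identify $\eps_{A^M}\cir\eta_{A^M}$ as a scalar via simplicity of $\one$, and both use specialness of $S$ (no nontrivial $2$-endomorphisms of $\Id_S$) to collapse the bubble in $m_{A^M}\cir\Delta_{A^M}$ to $\beta_{A^M}\,\id_{A^M}$. Where you genuinely diverge is the nonvanishing of the two scalars. The paper observes that $\eps_{A^M}\cir m_{A^M}$ and $\Delta_{A^M}\cir\eta_{A^M}$ are duality morphisms, computes $\dim(A^M)=\beta_1\beta_{A^M}$, and concludes from $A^M\nE 0$ that the product is non-zero; this is short, manifestly independent of the normalization of the adjunction data, but it leans on the assertion that only the zero object of a modular tensor category has vanishing quantum dimension -- a statement that is delicate for non-unitary modular categories, where simple objects may have negative dimensions and a decomposable object can have dimension zero. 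Your route instead evaluates the bubbles in the explicit matrix-of-vector-spaces model of $M$ and $\bar M$, obtaining $\beta_{A^M}$ and $\beta_1$ as sums $\sum_j\dimc V_{ij}$, and rules out their vanishing by the indecomposability of $S$ (correctly deduced from specialness) together with the observation that the kernel of a module functor is a module summand. This is longer and requires the reconciliation you flag -- that the module-functor adjunction data agrees, up to the permitted rescaling, with the vector-space (co)evaluations -- but it buys positivity of the bubbles and so sidesteps the sign issue entirely; it also makes explicit the hypotheses ($S$ indecomposable, $M\nE 0$) that the paper uses only implicitly. Either argument completes the proof.
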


\begin{proof}
For the algebra $A^M$, the composition $m\,{\circ}\,\Delta$ of product and 
coproduct is described by
a string diagram with a hole in the surface defect $S$
whose boundary is labeled by $M$. Since $S$ is special, there are no
local excitations and thus the diagram can be replaced, up to
a scalar factor $\beta_{A^M}$, by a diagram without hole.
On the other hand, since the tensor unit of the modular tensor category 
\C\ is simple, the composition $\eps\,{\circ}\,\eta$ of counit and unit
of $A^M$ is a multiple $\beta_1$ of the identity morphism $\id_\one$.
 \\
Both $\beta_{1}$ and $\beta_{A^M}$ depend on the choices of adjunction 
and coadjunction 2-morphisms for $\bar{M}$. However, computing the 
quantum dimension of $A^M$ using the duality morphisms (\ref{duality}), 
we obtain
  \be
  \dim(A^{M})=\beta_{1}\:\beta_{A^{M}} \,,
  \ee
so that the product of the two scalars is independent of the choices
of adjunction 2-morphisms. Since the object $A^M$ has been constructed 
as a composition of a non-vanishing module functor and its adjoint, $A^M$ 
is not the zero object. As the only object of a modular tensor category 
having vanishing quantum dimension is the zero object, we conclude 
that both scalars $\beta_{1}$ and $\beta_{A^{M}}$ are non-zero.
Hence the algebra $A^M$ is special.  
\end{proof}

We next investigate how the Frobenius algebras $A^{M}$ for a fixed surface
defect $S$ depends on the choice of Wilson line $M$.

\begin{prop}
Let $S$ be a surface defect in \C-\MOD\, and let $M,M'\iN\HOMC(S,\cc)$
be Wilson lines separating $S$ from the transparent defect $\tc$.
 Then the symmetric Frobenius algebras $A^{M}$ and $A^{M'}$ 
are Morita equivalent. 
\end{prop}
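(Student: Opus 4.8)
The plan is to build an explicit Morita context connecting $A^M$ and $A^{M'}$ directly out of the two Wilson lines, using string diagrams in the bicategory of surface defects. Recall that by the previous lemma both $M,M'\iN\HOMC(S,\cc)$ admit ambidextrous module adjoints $\bar M,\bar M'$, and that under the equivalence $\HOM(\tc,\tc)\,{\simeq}\,\C$ of Proposition \ref{gibtsdoch} we have $A^M \eq M\cir\bar M$ and $A^{M'} \eq M'\cir\bar M'$, both living in \C. The natural candidate for the connecting bimodule is the object $P \,{:=}\, M\cir\bar M' \iN \HOM(\tc,\tc) \,{\simeq}\, \C$, together with its `opposite' $\bar P \,{:=}\, M'\cir\bar M$. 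I would first equip $P$ with the structure of an $A^M$-$A^{M'}$-bimodule: the left $A^M$-action $A^M\oti P \eq M\cir\bar M\cir M\cir\bar M' \To M\cir\bar M' \eq P$ is given by inserting the counit $\bar M\cir M\,{\Rightarrow}\,\Id_S$ (equivalently a cap in the string-diagram calculus), and the right $A^{M'}$-action $P\oti A^{M'} \eq M\cir\bar M'\cir M'\cir\bar M' \To M\cir\bar M' \eq P$ is given by the counit $\bar M'\cir M'\,{\Rightarrow}\,\Id_S$. Associativity and unitality of these actions, as well as their commutativity, are immediate isotopies of string diagrams — exactly the same zigzag manipulations that proved the Frobenius property above.

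Next I would assemble the two bimodule pairings. Define $f\colon P\otia[A^{M'}]\bar P \To A^M$ and $g\colon \bar P\otia[A^M]P \To A^{M'}$ by composing the obvious evaluation/coevaluation maps: $f$ comes from $M\cir\bar M'\cir M'\cir\bar M \To M\cir\bar M$ using the counit $\bar M'\cir M'\,{\Rightarrow}\,\Id_S$, and symmetrically for $g$ using $\bar M\cir M\,{\Rightarrow}\,\Id_S$. Because all adjunctions are ambidextrous, one may also write down the `reverse' maps built from the corresponding units $\Id_S\,{\Rightarrow}\,\bar M\cir M$ and $\Id_S\,{\Rightarrow}\,\bar M'\cir M'$. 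The Morita context axioms — that $f$ and $g$ are bimodule morphisms and that the two triangle identities $(f\oti\id)\cir(\id\oti g)\eq \id_{P}$ (up to the scalar-rescaling freedom of the adjunction 2-morphisms) and its mirror hold — then reduce, once again, to the zigzag identities \eqref{zickezacke} together with the module-functor axioms; these are the routine calculations I would not grind through here.

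The real content is to show that $f$ and $g$ are \emph{isomorphisms}, i.e.\ that the context is a Morita equivalence rather than merely a Morita context; equivalently, that the bimodule $P$ is an invertible $A^M$-$A^{M'}$-bimodule, or that tensoring with $P$ induces an equivalence $\Mod\mbox-A^M \,{\to}\, \Mod\mbox-A^{M'}$. The cleanest route, I expect, is to pass to module categories: under Proposition \ref{gibtsdoch} and the identification of $\Mod\mbox-A^M$ inside \C\ with the module category $S$ itself (via the module functors $M,\bar M$), the functor $P\otia[A^M]({-})$ should be identified with $M'\cir\bar M\cir M \cir \bar M(-)\,{\simeq}\, M'\cir\bar M(-)$ after cancelling $\bar M\cir M$ against the $A^M$-module structure — which is visibly invertible with inverse induced by $M\cir\bar M'$, since $\bar M\cir M$ and $\bar M'\cir M'$ are (up to isotopy) the identity modulo the respective algebra actions. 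Making this identification precise — i.e.\ checking that the relative tensor products over $A^M$ and $A^{M'}$ genuinely compute the composite module functors, which uses semisimplicity of $S$ and exactness — is the main obstacle, and it is the step where I would lean on the fact that $S$ is a (finitely semisimple, exact) module category so that all the relevant coequalizers exist and split. Once invertibility of $P$ is established, Morita equivalence of $A^M$ and $A^{M'}$ follows.
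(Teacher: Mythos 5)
Your Morita context is the same one the paper constructs: the bimodules $B \eq M\cir\bar M'$ and $\tilde B \eq M'\cir\bar M$ in $\ENDC(\cc)\,{\simeq}\,\C$, with actions and pairings given by the adjunction counits. The two arguments diverge exactly at the step you yourself flag as ``the main obstacle'': showing that the pairings descend to isomorphisms $B\otimes_{A^{M'}}\tilde B\,{\stackrel\cong\to}\,A^M$ and $\tilde B\otimes_{A^{M}}B\,{\stackrel\cong\to}\,A^{M'}$. The paper settles this by a direct string-diagram verification of the universal property of the coequalizer: given $\varphi\colon B\oti A^{M'}\oti\tilde B\To X$ coequalizing the two middle actions, the factorizing morphism $\tilde\varphi$ is produced by inserting the unit $\eta_{M'}$, and the resulting closed $S$-bubble is collapsed to the scalar $\beta_{A^{M'}}$, which is then inverted. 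This is where the hypothesis that the defect is \emph{special} enters irreducibly: one needs $2\mbox-\Hom(\Id_S,\Id_S)\,{\simeq}\,\complex$ to replace the bubble by a scalar at all, and one needs $\beta_{A^{M'}}\nE 0$ (supplied by the preceding proposition) to divide by it. (The proposition as printed omits the specialness hypothesis, but the proof visibly uses it.)

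Your proposed route --- identifying $\Mod\mbox-A^M$ with $S$ and reading off invertibility of $B$ from composites of module functors --- never invokes this input, and without it the key claim fails. Concretely, if $S$ decomposes as $S_1\oplus S_2$ with $M$ supported on $S_1$ and $M'$ on $S_2$, then $B\eq M\cir\bar M'\eq 0$, so no Morita context built on $B$ can possibly be an equivalence, and in general $A^M$ and $A^{M'}$ need not be Morita equivalent at all. Your assertion that ``$\bar M\cir M$ and $\bar M'\cir M'$ are (up to isotopy) the identity modulo the respective algebra actions'' is precisely the statement that breaks in this situation; making it true requires that $\bar M$ and $\bar M'$ do not annihilate complementary pieces of $S$, which is what specialness (and the indecomposability it forces) buys. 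So the gap is not merely a postponed routine verification: the verification needs a hypothesis your sketch never names, and the paper's bubble-removal computation is the mechanism by which that hypothesis is consumed.
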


\begin{proof}
We explicitly construct a Morita context. Consider the objects 
  \be
  B := M\circ\bar{M'} \qquad\text{and}\qquad 
  \tilde{B} := M'\circ\bar{M} 
  \ee
in $\ENDC(\cc) \,{\simeq}\, \C$.
The counit of the adjunction for $M$ provides a morphism 
  \be
  M\circ \bar M  \circ M \circ \bar M' \to M\circ \bar M' \,,
  \qquad{\rm i.e.}\quad
  A^M\otimes B \to B
  \,.
  \ee
With the help of the isotopy invariance of string diagrams, one quickly 
checks that this morphism obeys the axiom for a left action of $A^M$ on
$B$. This type of argument can be repeated to show that $B$ has
a natural structure of an $A^{M}$-$A^{M'}$-bimodule, and that
$\tilde{B}$ has the structure of an $A^{M'}$-$A^{M}$-bimodule. 
 \\
We next must procure an isomorphism $B\,{\otimes_{\!A^{M'}}}\, \tilde B 
\To A^M$ of bimodules. This is achieved by showing that the morphism
  \be \label{eq6.25}
  M\circ \bar M' \circ M' \circ \bar M \to M\circ\bar M \,, 
  \qquad{\rm i.e.}\quad
  B\otimes \tilde B \to A^M 
  \ee
that is provided by the counit of the adjunction (which is obviously a morphism 
of bimodules) has the universal property of a cokernel. To this end we select
any morphism $\varphi\colon B\oti A^{M'}\oti \tilde B \to X$, with $X$ any 
object of \C, such that 
  \be
  \label{left-right}
  \parbox{100pt}{ \xy
  (5,20)*{\includegraphics[scale=0.6]{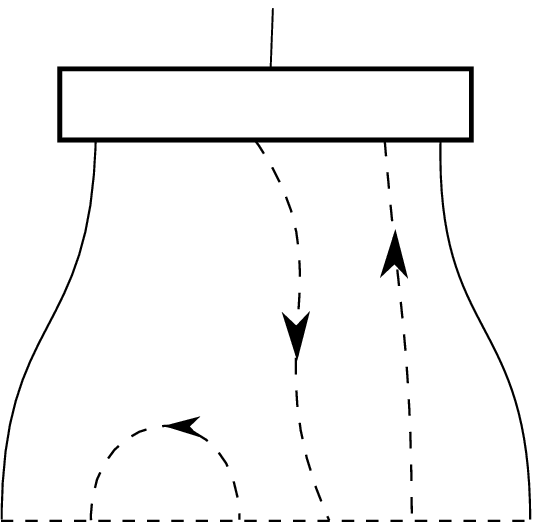}};
  (-14.5,15)*{\tc};
  (-11.7,2.0)*{\scriptstyle M};
  (-5.7,2.0)*{\scriptstyle \bar M'};
  (2.7,2.0)*{\scriptstyle M'};
  (8.6,2.0)*{\scriptstyle \bar M'};
  (14.1,2.0)*{\scriptstyle M'};
  (20.7,2.0)*{\scriptstyle \bar M};
  (-.9,6.8)*{\scriptstyle\tc};
  (.2,18.7)*{S};
  (10.5,12.3)*{\tc};
  (5,29.7)*{\varphi};
  (5.9,38.5)*{X};
  (17.1,9.5)*{S};
  (23.7,16)*{\tc};
  (34,19)*{=};
  (65,20)*{\includegraphics[scale=0.6]{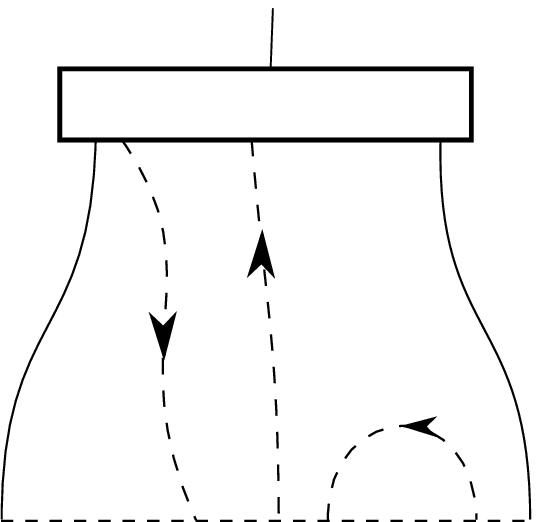}};
  (48.5,2.0)*{\scriptstyle M};
  (60.8,2.0)*{\scriptstyle \bar M'};
  (65.6,2.2)*{\scriptstyle M'};
  (69.4,2.0)*{\scriptstyle \bar M'};
  (77.6,2.0)*{\scriptstyle M'};
  (81.4,2.0)*{\scriptstyle \bar M};
  (45.8,15)*{\tc};
  (54.2,10.4)*{S};
  (62.2,12)*{\tc};
  (65,29.7)*{\varphi};
  (65.9,38.5)*{X};
  (71.2,18.7)*{S};
  (73.7,6.8)*{\scriptstyle\tc};
  (84.3,16)*{\tc};
  \endxy }
  \ee
We are looking for a morphism $\tilde \varphi \colon B \To X $ such that 
$\tilde \varphi \circ (\id_M \oti \eps_{M'} \oti \id_{\bar M}) \eq \varphi$.
Composing this equality with the morphism 
$\id_M \oti \eta_{M'} \oti \id_{\bar M}$ yields
  \be
  \parbox{100pt}{ \xy
  (5,20)*{\includegraphics[scale=0.6]{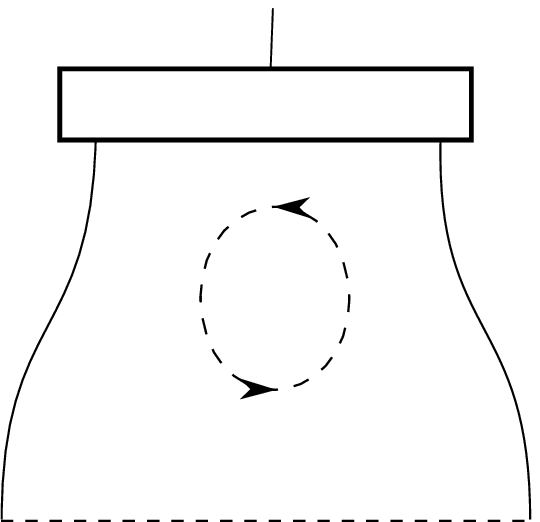}};
  (-11.6,2.0)*{\scriptstyle M};
  (20.7,2.0)*{\scriptstyle \bar M};
  (-14.5,15)*{\tc};
  (5,29.5)*{\tilde\varphi};
  (5.9,38.7)*{X};
  (5.9,18.1)*{\tc};
  (13.3,10.1)*{S};
  (23.7,16)*{\tc};
  (35,19)*{=};
  (65,20)*{\includegraphics[scale=0.6]{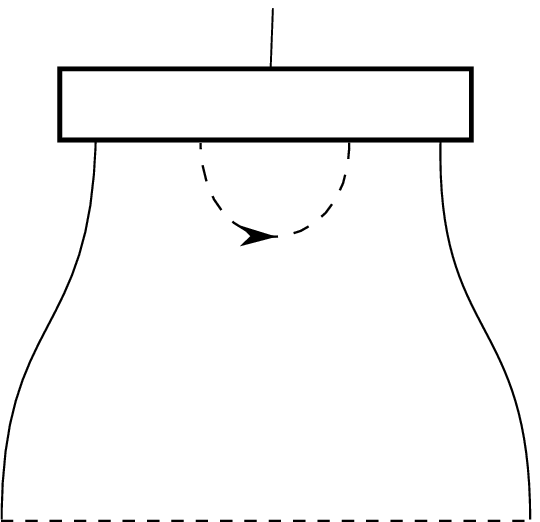}};
  (48.7,2.0)*{\scriptstyle M};
  (80.7,2.0)*{\scriptstyle \bar M};
  (45.8,15)*{\tc};
  (65,29.7)*{\varphi};
  (65.9,38.7)*{X};
  (65.7,24.7)*{\scriptstyle\tc};
  (68.7,13.7)*{S};
  (84.3,14.9)*{\tc};
  \endxy }
  \ee
The left hand side of this equality equals $\beta_{A^{M'}}\,\tilde\varphi$. 
This shows that the morphism $\tilde \varphi$ is uniquely determined.
To establish that $A^M$ is indeed a cokernel, we have to show that the morphism
$\beta_{A^{M'}}^{-1}\, \varphi \circ [\,\id_M \oti (\eta_{M'}\,{\circ}\,\eps_{M'})
\oti \id_{\bar M}\,]$, which is the composition of $\tilde\varphi$ with the
cokernel morphism, equals $\varphi$. This is established by
  \be
  \parbox{100pt}{ \xy
  (5,20)*{\includegraphics[scale=0.6]{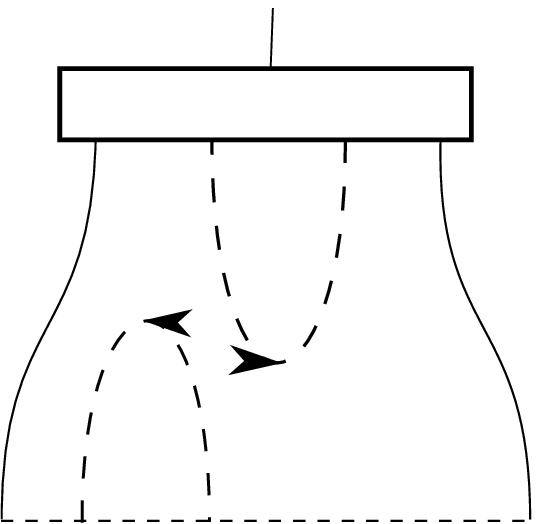}};
  (-11.6,2.0)*{\scriptstyle M};
  (-6.3,2.0)*{\scriptstyle \bar M'};
  (1.5,2.0)*{\scriptstyle M'};
  (20.7,2.0)*{\scriptstyle \bar M};
  (-14.5,15)*{\tc};
  (-2.4,9.7)*{\tc};
  (5,29.7)*{\varphi};
  (5.9,38.7)*{X};
  (5.7,22.6)*{\tc};
  (12.2,11.2)*{S};
  (23.7,16)*{\tc};
  (35,19)*{=};
  (65,20)*{\includegraphics[scale=0.6]{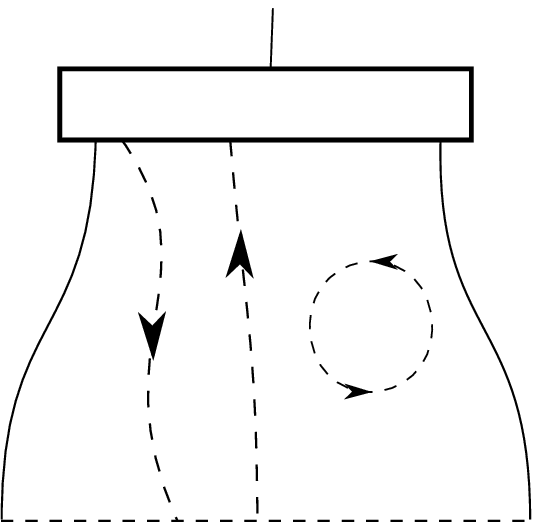}};
  (48.9,2.0)*{\scriptstyle M};
  (59.3,2.0)*{\scriptstyle \bar M'};
  (64.9,2.0)*{\scriptstyle M'};
  (80.7,2.0)*{\scriptstyle \bar M};
  (45.8,15)*{\tc};
  (53.7,8.7)*{S};
  (61.2,11.1)*{\tc};
  (65,29.7)*{\varphi};
  (65.9,38.7)*{X};
  (71.6,16.0)*{\scriptstyle\tc};
  (75.3,8.7)*{S};
  (84.3,14.9)*{\tc};
  (104,19)*{=~~ \beta_{A^{M'}} \varphi~.};
  \endxy }
  \ee
Here in the first equality a right action of $A^{M'}$ composed with
$\varphi$ is replaced by a left action, as in \eqref{left-right}. The 
second equality uses the fact that the defect $S$ is special, so as 
to remove the bubble at the expense of a factor of $\beta_{A^{M'}}$. 
A similar argument shows that
$\tilde B\,{\otimes_{A^M}}\, B \cong A^{M'}$. This completes the proof.
\end{proof}

We can summarize the findings of this section in the 

\begin{thm}
Consider the three-dimensional topological field theory corresponding to
a modular tensor category. To any surface defect separating the TFT from 
itself there is associated a Morita equivalence class of special 
symmetric Frobenius algebras. 
\end{thm}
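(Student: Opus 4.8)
The statement is a synthesis of the lemma and the three propositions of this subsection, so the plan is to assemble them and then verify that the resulting object carries no spurious dependence on auxiliary choices. Let $S$ be a surface defect separating the TFT labelled by the modular tensor category \C\ from itself; as recalled above, $S$ is described by a $\C$-module category, and the Wilson lines separating $S$ from the transparent defect $\tc$ are precisely the module functors $M\iN\HOM(S,\cc)$. Such Wilson lines exist --- for instance $N\mapsto\underline{\Hom}(M_0,N)$ for any fixed nonzero $M_0\iN S$, with $\underline{\Hom}$ the internal Hom, is a nonzero module functor $S\To\cc$ --- so there is at least one $M$ to start from.

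Fixing such an $M$, the Lemma of this subsection supplies an ambidextrous (bi)adjoint $\bar M$ of $M$ as a module functor, so that $A^M\df M\cir\bar M$ is a module endofunctor of $\cc$; under the equivalence $\ENDC(\cc)\,{\simeq}\,\C$ of Proposition \ref{gibtsdoch} we regard $A^M$ as an object of \C. The first Proposition above equips $A^M$ with the string-diagram morphisms $m_{A^M}$, $\eta_{A^M}$, $\Delta_{A^M}$, $\eps_{A^M}$ and shows that they make it a symmetric Frobenius algebra in \C; if $S$ is in addition special, the second Proposition shows this algebra to be special. This produces a special symmetric Frobenius algebra from the pair $(S,M)$.

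It then remains to check that the \emph{Morita class} of $A^M$ depends on $S$ alone. Two choices have been made. First, the adjunction and coadjunction $2$-morphisms for $\bar M$ are determined only up to reciprocal rescalings, and such a rescaling multiplies product and coproduct, and unit and counit, by mutually inverse scalars, which plainly yields an isomorphic --- hence Morita equivalent --- Frobenius algebra. Second, the dependence on the Wilson line is exactly the content of the last Proposition, which for any $M,M'\iN\HOM(S,\cc)$ constructs an explicit Morita context relating $A^M$ and $A^{M'}$ via the bimodules $M\cir\bar{M'}$ and $M'\cir\bar M$. Consequently all the special symmetric Frobenius algebras obtained from $S$ lie in one Morita equivalence class, and this class is the invariant associated to $S$.

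The genuinely substantive input --- the existence of ambidextrous adjoints for module functors between semisimple module categories, and the checks, via isotopy invariance of string diagrams, of the associativity, Frobenius and speciality relations and of the cokernel property behind the Morita context --- has already been carried out in the lemma and propositions above. The only point in the present argument that demands care is the well-definedness just discussed, i.e.\ insensitivity of the Morita class to the choice of $M$ and of the adjunction data; this is why I would organize the proof in the order above, first constructing one algebra and then eliminating the choices.
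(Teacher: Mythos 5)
Your proposal is correct and follows essentially the same route as the paper: the theorem is explicitly stated there as a summary of the preceding lemma and three propositions, and you assemble exactly those ingredients (biadjointness of the module functor $M$, the string-diagrammatic Frobenius structure on $A^M=M\cir\bar M$, speciality for special defects, and the Morita context relating $A^M$ and $A^{M'}$). Your additional remarks --- exhibiting a nonzero Wilson line via an internal Hom and checking that rescaling the adjunction $2$-morphisms only changes $A^M$ by an isomorphism --- are welcome refinements of points the paper leaves implicit, but they do not change the argument.
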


\vfill

{\small
\noindent{\sc Acknowledgments:}
We are grateful to Alexei Davydov, Juan Mart\'\i n Mombelli, Thomas Nikolaus,
Viktor Ostrik, Ingo Runkel and Gregor Schaumann
for helpful discussions. JF is largely supported 
by VR under project no.\ 621-2009-3993. CS and AV are partially supported 
by the Collaborative Research Centre 676 ``Particles, Strings and the Early 
Universe - the Structure of Matter and Space-Time'' and by the DFG Priority 
Programme 1388 ``Representation Theory''. JF is grateful to Hamburg 
University, and in particular to CS, Astrid D\"orh\"ofer and Eva Kuhlmann, 
for their hospitality when part of this work was done.
\\
We particularly acknowledge stimulating discussions and talks at the
``Workshop on Representation Theoretical and Categorical Structures in 
Quantum Geometry and Conformal Field Theory'' at the University of 
Erlangen in November 2011, which was supported by the DFG Priority 
Programme 1388 ``Representation 
Theory'' and the ESF network ``Quantum Geometry and Quantum Gravity''.
}

\newpage

 \newcommand\wb{\,\linebreak[0]} \def\wB {$\,$\wb}
 \newcommand\Bi[2]    {\bibitem[#2]{#1}} 
 \newcommand\inBo[8]  {{\em #8}, in:\ {\em #1}, {#2}\ ({#3}, {#4} {#5}), p.\ {#6--#7} }
 \newcommand\inBO[9]  {{\em #9}, in:\ {\em #1}, {#2}\ ({#3}, {#4} {#5}), p.\ {#6--#7} {\tt [#8]}}
 \newcommand\J[7]     {{\em #7}, {#1} {#2} ({#3}) {#4--#5} {{\tt [#6]}}}
 \newcommand\JO[6]    {{\em #6}, {#1} {#2} ({#3}) {#4--#5} }
 \newcommand\JP[7]    {{\em #7}, {#1} ({#3}) {{\tt [#6]}}}
 \newcommand\BOOK[4]  {{\em #1\/} ({#2}, {#3} {#4})}
 \newcommand\PhD[2]   {{\em #2}, Ph.D.\ thesis #1}
 \newcommand\Prep[2]  {{\em #2}, preprint {\tt #1}}
 \def\adma  {Adv.\wb Math.}
 \def\amjm  {Amer.\wb J.\wb Math.}   
 \def\anma  {Ann.\wb Math.}
 \def\aspm  {Adv.\wb Stu\-dies\wB in\wB Pure\wB Math.}
 \def\coma  {Con\-temp.\wb Math.}
 \def\comp  {Com\-mun.\wb Math.\wb Phys.}
 \def\ijmp  {Int.\wb J.\wb Mod.\wb Phys.\ A}
 \def\imrn  {Int.\wb Math.\wb Res.\wb Notices}
 \def\inma  {Invent.\wb math.}
 \def\jajm  {Japan.\wb J.\wb Math.}
 \def\jgap  {J.\wb Geom.\wB and\wB Phys.}
 \def\jhep  {J.\wb High\wB Energy\wB Phys.}
 \def\joal  {J.\wB Al\-ge\-bra}
 \def\jopa  {J.\wb Phys.\ A}
 \def\jktr  {J.\wB Knot\wB Theory\wB and\wB its\wB Ramif.}
 \def\jpaa  {J.\wB Pure\wB Appl.\wb Alg.}
 \def\jram  {J.\wB rei\-ne\wB an\-gew.\wb Math.}
 \def\leni  {Lenin\-grad\wB Math.\wb J.}
 \def\nupb  {Nucl.\wb Phys.\ B}
 \def\pams  {Proc.\wb Amer.\wb Math.\wb Soc.}
 \def\phlb  {Phys.\wb Lett.\ B}
 \def\pspm  {Proc.\wb Symp.\wB Pure\wB Math.}
 \def\quto  {Quantum\wB Topology}
 \def\ruma  {Revista de la Uni\'on Matem\'atica Argentina}
 \def\sema  {Selecta\wB Mathematica}
 \def\slnm  {Sprin\-ger\wB Lecture\wB Notes\wB in\wB Mathematics}
 \def\taac  {Theo\-ry\wB and\wB Appl.\wb Cat.}
 \def\trgr  {Trans\-form.\wB Groups}

\small

\end{document}